\tikzstyle{new edge style 0}=[-, line width=0.25 mm]
\newcommand{\CZ}{U_{\rm CZ}}
\newcommand{\sys}{\mathcal{S}}
\newcommand{\anc}{\mathcal{A}}
\newcommand{\Xe}{X_\text{even}}
\newcommand{\Xo}{X_\text{odd}}
\newcommand{\E}{\mathcal{E}}
\newcommand{\trivialstate}{\ket{\psi_\text{Trivial}}} 
\newcommand{\comp}[1]{\widebar{#1}}
\newcommand{\rn}[2]{
    \tikz[remember picture,baseline=(#1.base)]\node [inner sep=0] (#1) {$#2$};%
}
\newcommand{\norm}[1]{\left\lVert#1\right\rVert}
\newcommand{\defeq}{\vcentcolon=}
\newcommand{\eqdef}{=\vcentcolon}
\newcommand{\ketbra}[2]{\ket{#1}\!\bra{#2}}
\newcommand{\one}{\mathbbm{1}}
\newcommand{\Z}{\mathbb{Z}}
\newcommand{\n}{\mathbf{n}}
\newcommand{\Hilb}{\mathcal{H}}
\DeclareMathOperator{\Tr}{Tr}
\newtheorem{theorem}{Theorem}
\newtheorem{conjecture}{Conjecture}
\newtheorem{corollary}{Corollary}
\newtheorem{lemma}{Lemma}
\theoremstyle{definition}
\newtheorem{definition}{Definition}
\begin{document}

\title{Mixed-state quantum anomaly and multipartite entanglement}

\author{Leonardo A. Lessa}
\email{llessa@pitp.ca}
\affiliation{Perimeter Institute for Theoretical Physics, Waterloo, Ontario N2L 2Y5, Canada}
\affiliation{Department of Physics and Astronomy, University of Waterloo, Waterloo, Ontario N2L 3G1, Canada}

\author{Meng Cheng}
\affiliation{Department of Physics, Yale University, New Haven, Connecticut 06511-8499, USA}

\author{Chong Wang}
\affiliation{Perimeter Institute for Theoretical Physics, Waterloo, Ontario N2L 2Y5, Canada}

\begin{abstract}
    Quantum entanglement measures of many-body states have been increasingly useful to characterize phases of matter. Here we explore a surprising connection between mixed state entanglement and 't Hooft anomaly. More specifically, we consider lattice systems in $d$ space dimensions with anomalous symmetry $G$ where the anomaly is characterized by an invariant in the group cohomology $H^{d+2}(G,U(1))$. We show that any mixed state $\rho$ that is strongly symmetric under $G$, in the sense that $G\rho\propto\rho$, is necessarily $(d+2)$-nonseparable, i.e. is not the mixture of tensor products of $d+2$ states in the Hilbert space. Furthermore, such states cannot be prepared from any $(d+2)$-separable states using finite-depth local quantum channels, so the nonseparability is long-ranged in nature. We provide proof of these results in $d\leq1$, and plausibility arguments in $d>1$.  The anomaly-nonseparability connection thus allows us to generate simple examples of mixed states with nontrivial long-ranged multipartite entanglement. In particular, in $d=1$ we found an example of \textit{intrinsically mixed} quantum phase, in the sense that states in this phase cannot be two-way connected to any pure state through finite-depth local quantum channels. We also analyze mixed anomaly involving both strong and weak symmetries, including systems constrained by the Lieb-Schultz-Mattis type of anomaly. We find that, while strong-weak mixed anomaly in general does not constrain quantum entanglement, it does constrain long-range correlations of mixed states in nontrivial ways. Namely, such states are not symmetrically invertible and not gapped Markovian, generalizing familiar properties of anomalous pure states.
\end{abstract}

\maketitle

{
    \hypersetup{linkcolor=black}
    \tableofcontents
}

\section{Introduction}
\label{sec:intro}

A fundamental characterization of global symmetry in a local quantum system is its associated 't Hooft anomaly~\cite{AdlerAnomaly,BJAnomaly,Hooft1980}. The defining feature of a nontrivial anomaly is that it forbids a symmetric and completely trivial state in the system, thus any state preserving the anomalous symmetry must be ``nontrivial". More precisely, it is long-range entangled~\cite{ChenLRE}. Constraining the behaviors of quantum systems using anomaly (also known as ``anomaly-matching'') has become an extremely powerful tool in the theoretical study of strongly interacting quantum systems in both condensed matter and high energy physics.

Heuristically, the anomaly describes intrinsic obstructions to ``localize" the symmetry action. In particular, an anomalous symmetry action cannot be ``on-site". A common physical setup for anomalous symmetry is the boundary states of nontrivial symmetry-protected topological (SPT) phases~\cite{chen_symmetryprotected_2012,chen2013}. They are short-range entangled (SRE) but cannot be adiabatically connected to a trivial symmetric product state without breaking the protecting symmetry. Due to their topological nature, the symmetry has to act anomalously on the spatial boundary of an SPT state, leading to a variety of interesting topological phenomena. It is, however, worth noting that anomalies are not limited to the boundary of SPT states. A large class of ``mixed anomalies" between internal and spatial symmetries have been identified in lattice systems, which generalizes the celebrated Lieb-Schultz-Mattis(-Oshikawa-Hastings) theorem in spin models~\cite{Lieb1961,Oshikawa1999,Hastings2003,Cheng2015,Po2017,Ye2021,Cheng2023}. Many examples of lattice models realizing anomalous internal symmetries have also been constructed and studied~\cite{CZX, LevinGu, yoshida_topological_2016}.

While the implications of quantum anomaly have been extensively studied for ground states and low-energy dynamics, real-world systems, however, are best described by mixed states due to the inevitable presence of noise, which correlates the system of interest with inaccessible external degrees-of-freedom. One natural question is whether we can generalize the notion of SPT phases and 't Hooft anomaly to mixed state setting. The main objective of this work is to formulate a set of (information-theoretic) constraints of 't Hooft anomaly on mixed states, in terms of spatial separability and symmetric invertibility. As a byproduct, we also identify examples of nontrivial mixed-state phases of matter, even in (1+1)-D.

To discuss this problem, it is crucial to first define appropriate generalizations of the notion of global symmetry to open quantum systems. Depending on the couplings to the environment, there can be two types of symmetries~\cite{buca_note_2012, deGroot2022, ma_average_2023, LeeYouXu2022, ZhangQiBi2022, Ma_intrisic_2023}: when the couplings preserve the symmetry of the system, the symmetry is said to be \emph{strong} (or \emph{exact}). In other words, the couplings do not transfer symmetry charges to the environment -- in the language of statistical mechanics the strong symmetry corresponds to a canonical (instead of a grand-canonical) ensemble. If instead only the total charge of the system and environment is conserved, the symmetry becomes \emph{weak} (or \emph{average}). Formally, a mixed state $\rho$ is strongly symmetric under a unitary $U$ if $U\rho=e^{i\theta}\rho$ for some phase factor $e^{i\theta}$, and is weakly symmetric if $U\rho U^{-1}=\rho$. In both cases, $\rho$ can be decomposed into eigenstates of $U$, but the strong symmetry condition further requires that all states share the same eigenvalue, i.e. the same charge.

From the definition, it is quite clear that weak symmetry has very little constraining power on the entanglement structure of states: the maximally mixed state $\one/\dim(\Hilb)$ is always weakly symmetric under any symmetry, regardless of whether the symmetry is anomalous or not. This is consistent with the classification of mixed state SPT phases recently established in Refs.~\cite{deGroot2022,ma_average_2023,Ma_intrisic_2023}, where the classification is trivial for weak symmetry alone. However, as we will see, the case involving strong symmetry is fundamentally different.

Here, we primarily study the universal features of mixed states $\rho$ that are strongly symmetric under an anomalous symmetry. We focus on systems made of bosons or spins (so no physical fermions), and consider 't Hooft anomalies of the group-cohomology type~\cite{chen_symmetryprotected_2012}. We prove that in $1d$, such states cannot be tripartite separable. In other words, for any spatial tripartition of the system into connected regions $A\cup B\cup C$ without boundary, a state strongly symmetric under an anomalous symmetry cannot be written as a convex sum of states of the form $\rho_{A}\otimes \rho_{B}\otimes\rho_{C}$. Furthermore, we prove that such tripartite entanglement is long-ranged: a symmetric state $\rho$ cannot be prepared from a tripartite separable state $\rho_0$ using a finite-depth local quantum channel. We further conjecture that such state in $d$ dimensions is not $(d+2)$-separable, and establish it for a special class of symmetry transformations. We call this result the \emph{anomaly-nonseparability connection}. It constrains the multipartite entanglement structure of anomalous many-body states in an inherent way that may not be captured by the usual bipartite entanglement measures, such as the von Neumann entropy or negativity. Furthermore, it is non-perturbative, and is valid even for highly entangled (e.g. volume-law) or highly mixed (e.g. infinite temperature) states.

Mixed anomalies involving both strong and weak symmetries appear to be more subtle, and will be discussed briefly in Secs.~\ref{sec:mixed} and \ref{sec:LSM}. As we will see, mixed strong-weak anomaly in general does not preclude separable states, unlike the strong-strong case considered in prior sections. Instead, such mixed anomaly constrains the long-range correlation in some nontrivial ways: a state with mixed strong-weak anomaly is necessarily (a) not \emph{symmetrically invertible} (Sec.~\ref{sec:sym-noninv}), and (b) not \emph{gapped Markovian} (Sec.~\ref{sec:sw-cmi}). Similar statements are in fact defining features of standard pure state SPT edge states. Curiously, non-Abelian continuous symmetries appear to allow for stronger results. In particular, for a spin-$1/2$ chain with strong $SO(3)$ and weak lattice translation symmetries (which features the celebrated Lieb-Schultz-Mattis anomaly), a symmetric state cannot be bipartite separable, or be prepared from a bipartite separable state using a finite-depth local quantum channel.

The rest of this paper is structured as follows. We start in section \ref{sec:preliminaries} by reviewing the three main concepts used here: partial separability (Sec. \ref{sec:intro-partial_separability}), strong and weak symmetries (Sec. \ref{sec:intro-strong_and_weak_symmetries}), and anomalies (Sec. \ref{sec:intro-anomaly}). Before proving the anomaly-nonseparability connection for generic strongly symmetric anomalous systems, we illustrate it with two examples: in Sec. \ref{sec:projective_rep} we review how the long-range entanglement between the edge modes of a (1+1)-D SPT system result from the projectiveness of the boundary symmetry; then in Sec. \ref{sec:4-qubit_CZX} we provide an elementary proof that, in a $1d$ qubit chain, states strongly symmetric under the anomalous symmetry of the $\Z_2$ CZX model are not tripartite separable. We then provide the general proof of the long-ranged anomaly-nonseparability connection in (1+1)-D in Secs.~\ref{sec:tripartite_entanglement} and \ref{sec:long-range_entanglement}. Using such proof, we argue in Sec.~\ref{sec:intrinsicallymixed} that the maximally mixed CZX-symmetric state is part of an \emph{intrinsically mixed} phase of matter. In Sec.~\ref{sec:higher_dims} we extend our statement to higher dimensions in the form of a conjecture. In Sec.~\ref{sec:mixed} we discuss a weaker result for mixed anomalies involving both strong and weak symmetries, in terms of non-invertibility (Sec.~\ref{sec:sym-noninv}) and Markovian gaplessness (Sec.~\ref{sec:sw-cmi}), with the main tool being the obstruction to symmetry localization due to the strong-weak anomaly (Sec.~\ref{sec:obstructlocalization}). In Sec.~\ref{sec:LSM} we discuss a special type of mixed anomaly, namely a spin-$1/2$ chain with strong $SO(3)$ (or $O(2)$) and weak translation symmetries, and show that a symmetric state in such systems is necessarily long-range bipartite entangled. We finish with a brief exposition to the relationship with other works in Sec.~\ref{sec:relation} and with other discussions in Sec.~\ref{sec:Discussion}. Several Appendices contain peripheral details.

\section{Preliminaries}
\label{sec:preliminaries}
\subsection{Partial separability}
\label{sec:intro-partial_separability}

Multipartite quantum systems have more complex entanglement structure than their bipartite counterparts~\cite{horodecki_multipartite_2024}. Famously, there are two inequivalent ways three qubits can be genuinely entangled under stochastic local operations and classical communication~\cite{dur_three_2000}. For more than three parties, there is a continuum of such classes, and a complete characterization of their entanglement is much harder~\cite{dur_three_2000, verstraete_four_2002}. Here, we review the characterization of the entanglement of multipartite systems by their partial separability: whether or not they can be split into $k$ disentangled parts. This gives a discrete gradation of separability between fully separable states and genuinely bipartite entangled states that will be the key to characterize anomalous mixed states.

First, some basic definitions. Given a Hilbert space $\Hilb$, we denote by $Q(\Hilb)$ the set of quantum states of $\Hilb$, i.e. of trace-one positive semi-definite linear operators $\rho : \Hilb \to \Hilb$.  We are mostly interested in lattice models, where the Hilbert space $\Hilb$ of the whole system is the tensor product of Hilbert spaces from each ``site":
\begin{equation}
    \Hilb = \Hilb_1 \otimes \cdots \otimes \Hilb_N,
    \label{tensor-product-hilb}
\end{equation}
where $N$ is the total number of sites. We will say that the system is $N$-partite. In this work, $\Hilb_n$ are finite-dimensional for all $n$, and for simplicity we assume they all have the same dimension.

Denote by $I=\{1,2,\ldots, N\}$ the set of site indices. A collection of pair-wise disjoint non-empty subsets $\{A_i\}_{i=1}^k \subset 2^I$ is a $k$-partition of $I$ if $\cup_{i=1}^k A_i=I$. Each subset $A_i$ is associated with a subsystem Hilbert space  $\Hilb_{A_i} = \bigotimes_{l \in A_i} \Hilb_l$. We are mostly interested in geometrically local partitions, namely dividing the system into several disjoint regions of space.

We now state the definition of partial separability with respect to a given partition as found in the literature~\cite{horodecki_quantum_2009, horodecki_multipartite_2024}:
\begin{definition}[Pure state partial separability]
    A pure state $\ket{\psi} \in \Hilb$ is \emph{separable with respect to a partition} $\{A_i\}_{i=1}^k$ if it is a tensor product state $\ket{\psi} = \ket{\psi_1} \otimes \ket{\psi_2} \otimes \cdots \otimes \ket{\psi_k}$, where $\ket{\psi_i} \in \Hilb_{A_i}$.
\end{definition}

\begin{table}[t]
    \centering
    \begin{tabular}{|c|c|c|}
        \hline
        State & 2-separable? (partition) & 3-separable? \\
        \hline
        $\ket{000}$          & Yes (all) & Yes \\
        $\ket{\Psi_{00}}\otimes\ket{0}$ & Yes ($12|3$) & No \\
        GHZ, W                   & No (all)  & No \\
        SHIFTS                   & Yes (all) & No \\
        $T=\infty$ CZX           & Yes (all) & No \\
        \hline
    \end{tabular}
    \caption{Examples of multipartite qubit states and whether they are $k$-separable. $\ket{\Psi_{00}} = \frac{1}{\sqrt{2}}(\ket{00} + \ket{11})$ is the Bell state, SHIFTS stands for the mixed state defined in \cite{bennett_unextendible_1999} (See main text), and the $T=\infty$ CZX states are defined in Eq.\eqref{eq:infinite-temperature_CZX}.}
    \label{tab:examples_separability}
\end{table}

We might not want to choose a specific partition, but only the number of sets $k$. In that case, we simply say $\ket{\psi}$ is \emph{$k$-separable}. The negation of $k$-separable is \emph{$k$-nonseparable} or \emph{$k$-entangled}. We call an $N$-separable state \emph{fully separable}, which is sometimes simplified to just ``separable'', with the negation being just \emph{entangled}. On the opposite direction, $2$-nonseparable states are sometimes called \emph{genuinely entangled}. See Table \ref{tab:examples_separability} for examples.

The definition of $k$-separable pure states can be naturally generalized to mixed states by taking its convex hull:
\begin{definition}[Mixed state partial separability]
\label{def:msep}
    A mixed state $\rho \in Q(\Hilb)$ is \emph{separable with respect to a partition} $\mathcal{P} = \{A_i\}_{i=1}^k$ if it is the mixture of pure states separable under the same partition $\mathcal{P}$:
    \begin{equation}
        \rho = \sum_{\alpha} p_\alpha \ketbra{\psi^{(\alpha)}_1}{\psi^{(\alpha)}_1} \otimes \cdots \otimes \ketbra{\psi^{(\alpha)}_k}{\psi^{(\alpha)}_k},
    \end{equation}
    with $\ket{\psi^{(\alpha)}_i} \in \Hilb_{A_i} = \bigotimes_{l \in A_i} \Hilb_l$ and $\sum_\alpha p_\alpha = 1$.
\end{definition}

{Notably, deciding whether a generic mixed state $\rho$ is separable is a computationally hard problem~\cite{gurvits_classical_2003,ioannou_computational_2007,gharibian_strong_2010,gutoski_quantum_2015} as we need to find the optimal decomposition $\rho=\sum_{\alpha}p_{\alpha}|\Psi^{(\alpha)}\rangle\langle\Psi^{(\alpha)}|$ satisfying the separability condition. This layer of complexity makes studying entanglement much more challenging for mixed states compared to pure states.}

One can also define separability in a broader sense~\cite{horodecki_quantum_2009, horodecki_multipartite_2024}, called just $k$-separable, by allowing different pure states $|\Psi^{(\alpha)}\rangle$ in a decomposition $\rho=\sum_{\alpha}p_{\alpha}|\Psi^{(\alpha)}\rangle\langle\Psi^{(\alpha)}|$ to be separable with respect to different $k$-partitions $\{A_i^{(\alpha)}\}_{i=1}^k$. We will focus on the narrower definition in Def.~\ref{def:msep} in this work for simplicity, but we note that some of our key results such as the anomaly-nonseparability connection (Theorem~\ref{thm:3-sep} and Conjecture~\ref{conj:higherdim}) apply equally well for the broader definition.

\begin{figure}[t]
    \centering
    \includegraphics[width=0.8\linewidth]{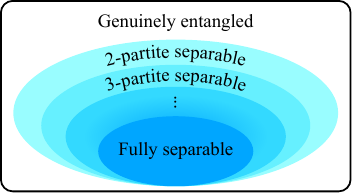}
    \caption{Inclusion relations of multipartite states with respect to separability: any $k$-separable with $k > l$ is also $l$-separable, and genuinely entangled states are not 2-partite separable.}
    \label{fig:nested_separability}
\end{figure}

If a state is $k$-separable, then it is $l$-separable for every $l < k$ (See Fig.~\ref{fig:nested_separability}). The converse does not hold: the state $\ket{\Psi_{00}} \otimes \ket{0}$, where $\ket{\Psi_{00}} = \frac{1}{\sqrt{2}} (\ket{00} + \ket{11})$ is the Bell state, is bipartite separable under the $12|3$ partition, but it is not tripartite separable. If, however, an $n$-partite pure state $\ket{\psi}$ is bipartite separable under \emph{all} partitions, then it is easy to check that it is also fully separable.\footnote{By taking a bipartition $A|B$ with $A$ being one site $i$ and $B$ its complement, then we find $\ket{\psi} = \ket{\psi_i} \otimes \ket{\tilde{\psi}_B}$. By repeating the same procedure for all other sites $i$, we conclude $\ket{\psi} = \bigotimes_i \ket{\psi_i}$.} Surprisingly, the same does not hold for mixed states. Indeed, Bennett \emph{et al.}~\cite{bennett_unextendible_1999} constructed a three-qubit state that is biseparable under any bipartition, but nevertheless entangled, refuting a conjecture made in \cite{brassard_multiparticle_1999}. Such phenomenon is possible because the basis used to decompose $\rho$ into separable states for one bipartition may be very different from the basis used for another bipartition, so we cannot simply use a unified basis to separate the state for a tripartition. The existence of such bipartite separable, but multipartite entangled states also means that it is not sufficient to just focus on the familiar bipartite entanglement measures (such as negativity or entanglement of formation). Indeed, many multipartite entanglement measures and tests have been proposed to (partially) separate $k$-separable states with different $k$~\cite{guhne_entanglement_2009, brandao_separable_2004, horodecki_separability_2006, huber_detection_2010, gabriel_criterion_2010, gao_separability_2011, huber_entropy_2013, liu_separability_2015, hong_detecting_2016, sarbicki_family_2020, hong_detection_2021a, li_parameterized_2023b}.

Even though our motivation for the current work comes from the many-body physics of anomalous symmetries, some of our results are of interest to multipartite entanglement theory \textit{per se}. In particular, in Sec.\ref{sec:czx-bipartite-separability} we present a family of many-body states that are bipartite separable but tripartite entangled, and in Appendix \ref{appendix:measure_multip_entanglement} we study a ``good'' multipartite entanglement measure of mixed states presented in \cite{szalayMultipartiteEntanglementMeasures2015} that faithfully distinguishes $k$-partite separable from entangled states.

\subsection{Strong and weak symmetries}
\label{sec:intro-strong_and_weak_symmetries}

To generalize symmetry anomaly to mixed systems, we first have to define what it means for a mixed state to be symmetric. Following previous works~\cite{buca_note_2012, deGroot2022, ma_average_2023, LeeYouXu2022, ZhangQiBi2022, Ma_intrisic_2023}, we contrast two definitions. The first definition is of \emph{average}, or \emph{weak} symmetry:
\begin{definition}[Weakly symmetric states]
    A state $\rho$ is \emph{weakly symmetric} under an (anti)unitary operator $U$ if $U \rho U^\dagger = \rho$.
\end{definition}
The symmetry $U$ can be antiunitary, but we will only treat the unitary case in this paper.

The definition of weak symmetry comes naturally from the symmetry action of $U$ on pure states, $\ket{\psi} \mapsto U \ket{\psi}$. Furthermore, there is an equivalent condition on purifications $\ket{\Psi} \in \Hilb \otimes \Hilb_\anc$: $\rho$ is weakly symmetric under $U$ if, and only if, its purification $\ket{\Psi}$ is symmetric under $U \otimes U_\anc$, for some unitary $U_\anc$ acting only on the ancilla space $\Hilb_\anc$. Moreover, we can choose $U_\anc$ to be (unitarily equivalent to) the complex conjugate of $U$.  Indeed, one purification of $\rho$ is $\ket{\Psi} = \sqrt{\rho} \otimes \one \ket{\Psi_{00}}$, where $\ket{\Psi_{00}} = \sum_i \ket{i i} \in \Hilb \otimes \Hilb$ is maximally entangled, satisfying $U \otimes U^* \ket{\Psi} = \ket{\Psi}$. One physical interpretation is that $\rho$ may come from an interaction with an environment that is symmetric under the extended symmetry $U \otimes U_\anc$.

All previous statements generalize accordingly to group representations $U : G \to U(\Hilb)$. In such case, another equivalent description comes from Schur's lemma, which implies $\rho$ is block-diagonal $\rho = \oplus_\alpha \rho^{(\alpha)}$, where each block matrix $\rho^{(\alpha)}$ acts on the multiplicity space of the irreducible representation (irrep) $\alpha$ of $U$ as a multiple of identity between each irrep copy~\cite{georgi_lie_2018}.

Of importance to us is the subset of weakly symmetric states for which the ancilla symmetry action $U_\anc$ can be made trivial, i.e. $U_\anc \propto \one_\anc$. If $\rho$ comes from the system interacting with its environment, this condition prohibits charge exchange between the symmetry and its environment. In such case, we say $\rho$ is \emph{strongly symmetric}:
\begin{definition}[Strongly symmetric state]
    A state $\rho$ is \emph{strongly symmetric} under a unitary operator $U$ if $U \rho = \lambda \rho$, for some phase $\lambda \in U(1)$ that we call the \emph{charge} of $\rho$.
\end{definition}

In Sec. \ref{sec:tripartite_entanglement}, we prove that any tripartite separable mixed state $\sum_i p_i \ket{A_i}\ket{B_i}\ket{C_i}$ is not \emph{strongly} symmetric under an anomalous (1+1)-D symmetry. The reason why we restrict this result to strongly symmetric states, instead of weakly symmetric or to another subset entirely, is twofold. Firstly, the weak symmetry condition is too weak, since even the maximally mixed state is weakly symmetric under any symmetry: $U\one U^{\dagger} \equiv \one$. Secondly, and most importantly, the following lemma about strongly symmetric states enables us to leverage the anomaly-nonseparability proof from pure states to mixed states:
\begin{lemma}\label{lemma:strongly_symmetric_decomposition}
    All (unnormalized) vectors $\ket{\tilde\psi_i}$ in any decomposition of $\rho = \sum_i \ketbra{\tilde\psi_i}{\tilde\psi_i}$ are symmetric with equal charge if, and only if, $\rho$ is strongly symmetric with the same charge.
\end{lemma}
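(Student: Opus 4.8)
The plan is to reduce everything to a single geometric observation: every vector appearing in a pure-state decomposition of $\rho$ must lie in the support $\supp(\rho) = \ker(\rho)^\perp$, which for the positive semidefinite operator $\rho$ coincides with $\im(\rho)$. I would establish this fact first. Given $\rho = \sum_i \ketbra{\tilde\psi_i}{\tilde\psi_i}$ and any $\ket{\phi} \in \ker(\rho)$, one has $0 = \bra{\phi}\rho\ket{\phi} = \sum_i |\braket{\phi|\tilde\psi_i}|^2$, a sum of nonnegative terms, so each $\braket{\phi|\tilde\psi_i}$ vanishes. Since $\ket{\phi}$ ranges over all of $\ker(\rho)$, this shows $\ket{\tilde\psi_i} \in \ker(\rho)^\perp = \supp(\rho)$ for every $i$, in every decomposition.

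The forward implication is then immediate and does not even need the geometric lemma: if in a decomposition every $\ket{\tilde\psi_i}$ satisfies $U\ket{\tilde\psi_i} = \lambda \ket{\tilde\psi_i}$, then $U\rho = \sum_i (U\ket{\tilde\psi_i})\bra{\tilde\psi_i} = \lambda \sum_i \ketbra{\tilde\psi_i}{\tilde\psi_i} = \lambda \rho$, so $\rho$ is strongly symmetric with charge $\lambda$. (A spectral decomposition always exists, so the universally quantified hypothesis is nonvacuous and in particular supplies one such decomposition to feed into this computation.)

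For the converse I would suppose $U\rho = \lambda \rho$. The crucial step is to show that $U$ acts as the scalar $\lambda$ on the support: for any $\ket{v} \in \im(\rho)$, write $\ket{v} = \rho\ket{w}$ and compute $U\ket{v} = U\rho\ket{w} = \lambda\rho\ket{w} = \lambda\ket{v}$. Combining this with the geometric lemma---each $\ket{\tilde\psi_i} \in \supp(\rho) = \im(\rho)$---gives $U\ket{\tilde\psi_i} = \lambda\ket{\tilde\psi_i}$ for every vector in every decomposition, which is exactly the claim.

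I expect the main (and essentially only) obstacle to be recognizing that decomposition vectors are confined to the support; once that is in place, the relation $U\rho = \lambda\rho$ forces the eigenvalue equation on the support for free, with no convexity argument or optimization over decompositions required. This rigidity---that strong symmetry pins down \emph{every} vector in \emph{every} decomposition, rather than merely a block structure---is precisely what distinguishes it from weak symmetry and what will later let us lift the pure-state anomaly obstruction to mixed states.
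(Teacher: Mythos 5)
Your proof is correct and follows essentially the same route as the paper's: the forward direction is the same trivial computation, and the converse rests on the identical key observation that every decomposition vector lies in $\operatorname{Im}(\rho)=\operatorname{Ker}(\rho)^\perp$ (established by the same $0=\braket{\phi|\rho|\phi}=\sum_i|\braket{\phi|\tilde\psi_i}|^2$ argument), on which $U$ acts as the scalar $\lambda$. The only difference is that you spell out the step $U\rho\ket{w}=\lambda\rho\ket{w}$ explicitly, which the paper leaves as a one-line remark.
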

\begin{proof}
    The ``only if'' statement is trivial. For the ``if'' part, note that the strong symmetry condition $U \rho = \lambda \rho$ implies any state in the image of $\rho$ is also symmetric with the same charge. Since $\operatorname{Ker}(\rho)^\perp = \operatorname{Im}(\rho)$, we prove that each $\ket{\tilde\psi_i}$ is orthogonal to all vectors in the kernel of $\rho$. Indeed, for any $\ket{\tilde\phi} \in \operatorname{Ker}(\rho)$, we have $0 = \braket{\tilde\phi | \rho | \tilde\phi} = \sum_i |\langle\tilde\phi | \tilde\psi_i\rangle|^2$, implying $\braket{\tilde\phi | \tilde\psi_i} = 0$. Hence, $\ket{\tilde\psi_i} \in \operatorname{Ker}(\rho)^\perp = \operatorname{Im}(\rho)$.
\end{proof}
Therefore, if no \emph{pure} symmetric state is tripartite separable, then by the lemma above, no \emph{strongly} symmetric \emph{mixed} state is either. Note that the lemma above does not assume the states $\ket{\tilde\psi_i}$ in a given decomposition of $\rho$ are orthogonal.

A direct corollary of Lemma \ref{lemma:strongly_symmetric_decomposition} ties together strong symmetry to trivial symmetry extensions, as alluded earlier:
\begin{corollary}\label{corollary:strong_symmetry_extension}
    Any purification $\ket{\Psi} \in \Hilb \otimes \Hilb_\anc$ of a strongly symmetric state $\rho$, i.e. $U \rho = \lambda \rho$, is symmetric under the trivially extended unitary $U \otimes \one_\anc$ with same charge $\lambda$.
\end{corollary}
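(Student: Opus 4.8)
The plan is to prove Corollary~\ref{corollary:strong_symmetry_extension} by leveraging Lemma~\ref{lemma:strongly_symmetric_decomposition}, which we may assume. The key observation is that any purification of $\rho$ can be written in a standard Schmidt-like form that exposes a decomposition of $\rho$ into vectors, to which the lemma directly applies. So the first step is to recall that any purification $\ket{\Psi} \in \Hilb \otimes \Hilb_\anc$ of $\rho$ can be expressed as
\begin{equation}
    \ket{\Psi} = \sum_i \ket{\tilde\psi_i} \otimes \ket{a_i},
\end{equation}
where $\{\ket{a_i}\}$ is some orthonormal set in $\Hilb_\anc$ and the (unnormalized) vectors $\ket{\tilde\psi_i} \in \Hilb$ satisfy $\rho = \sum_i \ketbra{\tilde\psi_i}{\tilde\psi_i}$. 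This is just the general structure of purifications: tracing out the ancilla in the orthonormal basis $\{\ket{a_i}\}$ recovers exactly such a decomposition of $\rho$.

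\textbf{Second}, I would apply the ``only if'' direction of Lemma~\ref{lemma:strongly_symmetric_decomposition}. Since $\rho$ is strongly symmetric with charge $\lambda$, the lemma tells us every vector $\ket{\tilde\psi_i}$ in this decomposition is symmetric with that same charge, namely $U \ket{\tilde\psi_i} = \lambda \ket{\tilde\psi_i}$ for all $i$. This is precisely the uniformity of charge across the decomposition that strong symmetry buys us, and it is the crux of why the extension can be made trivial rather than merely some nontrivial $U_\anc$.

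\textbf{Finally}, I would act with $U \otimes \one_\anc$ on $\ket{\Psi}$ and compute directly:
\begin{equation}
    (U \otimes \one_\anc) \ket{\Psi} = \sum_i (U\ket{\tilde\psi_i}) \otimes \ket{a_i} = \sum_i \lambda \ket{\tilde\psi_i} \otimes \ket{a_i} = \lambda \ket{\Psi},
\end{equation}
so $\ket{\Psi}$ is indeed symmetric under $U \otimes \one_\anc$ with the same charge $\lambda$. I do not anticipate a genuine obstacle here, since the corollary is a fairly direct unpacking of the lemma; the only point requiring a little care is making sure the same orthonormal basis $\{\ket{a_i}\}$ is used consistently so that the charges pulled out of each term agree, which is guaranteed precisely by the equal-charge conclusion of the lemma.
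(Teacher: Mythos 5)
Your proof is correct and follows essentially the same route as the paper: expand the purification over an orthonormal ancilla basis, note that the system-side vectors form a decomposition of $\rho$, invoke Lemma~\ref{lemma:strongly_symmetric_decomposition} to give each of them charge $\lambda$, and act with $U\otimes\one_\anc$ (the paper uses the Schmidt decomposition specifically, but that is an immaterial difference). One tiny slip: the direction of the lemma you need (strong symmetry of $\rho$ implies all decomposition vectors are symmetric) is its ``if'' part, not the ``only if'' part, though the content of your argument is clearly the correct direction.
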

\begin{proof}
    Let $\ket{\Psi} = \sum_i \sqrt{p_i} \ket{\psi_i} \ket{i}_\anc$ be the Schmidt decomposition of $\ket{\Psi}$. Since the vectors $\ket{\tilde\psi_i} \defeq \sqrt{p_i} \ket{\psi_i}$ forms a decomposition of $\rho$, Lemma \ref{lemma:strongly_symmetric_decomposition} implies they are symmetric under $U$, which in turn implies $U \otimes \one_\anc \ket{\Psi} = \lambda \ket{\Psi}$.
\end{proof}

When $U(g)$ is part of a group representation, $\lambda(g)$ forms a one-dimensional irrep. In such case, a strongly symmetric $\rho$ is {still} a mixture of pure states with the same charge given by $\lambda$, whereas a weakly symmetric one can be an ensemble of pure states with different charges.

These definitions are compatible with weakly and strongly symmetric channels~\cite{buca_note_2012, albert_symmetries_2014, albert_lindbladians_2018, lieu_symmetry_2020, deGroot2022}, which we briefly review here. A weakly symmetric channel commutes with the adjoint action of the symmetry, $\rho \mapsto U\rho U^\dagger$, and a strongly symmetric channel has a collection of Kraus operators $K_i$, all commuting with $U$. Naturally, a strongly symmetric channel is also weakly symmetric. A physical motivation for the former is that the strong symmetry condition for Lindbladian evolutions implies each jump operator commutes with the symmetry, which is the algebraic form of requiring symmetric interactions between the system and its environment. As expected, a weakly symmetric state stays weakly symmetric after the application of a weakly symmetric channel, and the same is true for strong symmetry.

\subsection{Anomaly}
\label{sec:intro-anomaly}

 For a large class of anomalies -- specifically those classified by group cohomology (bosons)~\cite{chen_symmetryprotected_2012,CZX,else_classifying_2014} and supercohomology (fermions)~\cite{Supercohomology,Metlitski2019} -- the anomaly can be realized in an on-site Hilbert space (i.e. given by \eqref{tensor-product-hilb}), with the symmetry acting in a non-on-site manner. In this work we focus on bosonic systems with anomaly classified by group cohomology -- for bosonic open systems in space dimension $d<4$ all nontrivial SPT phases fall into this class~\cite{Ma_intrisic_2023}. We begin with (0+1)-D systems with a unitary $G$ symmetry, which can be thought of as the boundary of a (1+1)-D SPT.

The anomaly of the boundary of a (1+1)-D SPT system is manifested by the projective action of the symmetry near each edge~\cite{AKLT,Pollman2010,Pollman2012}. More precisely, the symmetry representation has the same action on ground states as another representation $U(g) = U_L(g) \otimes U_R(g)$ with support only near the left and right endpoints of the chain. This restriction to the edge is possible because the bulk SPT system is short-range entangled~\cite{else_classifying_2014}. Importantly, each restriction $U_{L,R}$ is a \emph{projective} representation,
\begin{equation}
    U_{L,R}(g) U_{L,R}(h) = \omega_{L,R}(g, h) U_{L,R}(g h),
\end{equation}
with $\omega_L(g, h) = \omega_R(g,h)^{-1} \in U(1)$ complex phases that cancel each other so $U$ remains non-projective. Thus, without loss of generality, we can focus only on the left endpoint. Associativity of $U_L$ implies
\begin{equation}
    \omega_L(g, h)\omega_L(gh, k) = \omega_L(g, hk) \omega_L(h, k),
\end{equation}
which means $\omega_L : G \times G \to U(1)$ is a 2-cocycle. Moreover, the freedom to redefine $U_L(g) \mapsto \lambda(g) U_L(g)$ implies we should identify $\omega_L(g, h) \sim \omega_L(g, h) \lambda(gh) \lambda(g)^{-1} \lambda(h)^{-1}$. The equivalence classes formed from this identification are elements of the second cohomology group $H^2(G, U(1))$. When an edge symmetry $U$ corresponds to a nontrivial cohomology class, we say it is \emph{anomalous}. The anomaly is a property of the SPT phase itself since it is invariant under symmetric local unitaries acting on the bulk and how the symmetry is restricted to the edge.

\begin{figure}[t]
    \centering
    \def\svgwidth{0.95\linewidth}
    \begingroup%
  \makeatletter%
  \providecommand\color[2][]{%
    \errmessage{(Inkscape) Color is used for the text in Inkscape, but the package 'color.sty' is not loaded}%
    \renewcommand\color[2][]{}%
  }%
  \providecommand\transparent[1]{%
    \errmessage{(Inkscape) Transparency is used (non-zero) for the text in Inkscape, but the package 'transparent.sty' is not loaded}%
    \renewcommand\transparent[1]{}%
  }%
  \providecommand\rotatebox[2]{#2}%
  \newcommand*\fsize{\dimexpr\f@size pt\relax}%
  \newcommand*\lineheight[1]{\fontsize{\fsize}{#1\fsize}\selectfont}%
  \ifx\svgwidth\undefined%
    \setlength{\unitlength}{152.84838218bp}%
    \ifx\svgscale\undefined%
      \relax%
    \else%
      \setlength{\unitlength}{\unitlength * \real{\svgscale}}%
    \fi%
  \else%
    \setlength{\unitlength}{\svgwidth}%
  \fi%
  \global\let\svgwidth\undefined%
  \global\let\svgscale\undefined%
  \makeatother%
  \begin{picture}(1,0.34255795)%
    \lineheight{1}%
    \setlength\tabcolsep{0pt}%
    \put(0,0){\includegraphics[width=\unitlength,page=1]{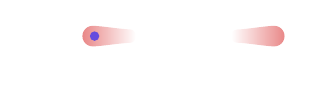}}%
    \put(0.29868085,0.16597666){\makebox(0,0)[t]{\lineheight{1.25}\smash{\begin{tabular}[t]{c}1\end{tabular}}}}%
    \put(0.85543232,0.16597666){\makebox(0,0)[t]{\lineheight{1.25}\smash{\begin{tabular}[t]{c}L\end{tabular}}}}%
    \put(0.7640501,0.1646255){\makebox(0,0)[t]{\lineheight{1.25}\smash{\begin{tabular}[t]{c}L-1\end{tabular}}}}%
    \put(0.66887509,0.16597666){\makebox(0,0)[t]{\lineheight{1.25}\smash{\begin{tabular}[t]{c}L-2\end{tabular}}}}%
    \put(0.38981347,0.16597666){\makebox(0,0)[t]{\lineheight{1.25}\smash{\begin{tabular}[t]{c}2\end{tabular}}}}%
    \put(0.48388727,0.16659441){\makebox(0,0)[t]{\lineheight{1.25}\smash{\begin{tabular}[t]{c}3\end{tabular}}}}%
    \put(0,0){\includegraphics[width=\unitlength,page=2]{cluster_chain.pdf}}%
    \put(0.76117487,0.03115297){\color[rgb]{0.87058824,0.54901961,0}\makebox(0,0)[t]{\lineheight{1.25}\smash{\begin{tabular}[t]{c}X\end{tabular}}}}%
    \put(0.38935349,0.03115297){\color[rgb]{0.87058824,0.54901961,0}\makebox(0,0)[t]{\lineheight{1.25}\smash{\begin{tabular}[t]{c}X\end{tabular}}}}%
    \put(0.29867802,0.09984856){\color[rgb]{0.39607843,0.29803922,0.87058824}\makebox(0,0)[t]{\lineheight{1.25}\smash{\begin{tabular}[t]{c}X\end{tabular}}}}%
    \put(0.85551877,0.09984856){\color[rgb]{0.39607843,0.29803922,0.87058824}\makebox(0,0)[t]{\lineheight{1.25}\smash{\begin{tabular}[t]{c}X\end{tabular}}}}%
    \put(0.66837194,0.09984856){\color[rgb]{0.39607843,0.29803922,0.87058824}\makebox(0,0)[t]{\lineheight{1.25}\smash{\begin{tabular}[t]{c}X\end{tabular}}}}%
    \put(0.48352491,0.09984856){\color[rgb]{0.39607843,0.29803922,0.87058824}\makebox(0,0)[t]{\lineheight{1.25}\smash{\begin{tabular}[t]{c}X\end{tabular}}}}%
    \put(0.26618948,0.09984856){\color[rgb]{0.39607843,0.29803922,0.87058824}\makebox(0,0)[rt]{\lineheight{1.25}\smash{\begin{tabular}[t]{r}$X_\text{odd}=$\end{tabular}}}}%
    \put(0.26618947,0.03586815){\color[rgb]{0.87058824,0.54901961,0}\makebox(0,0)[rt]{\lineheight{1.25}\smash{\begin{tabular}[t]{r}$X_\text{even}=$\end{tabular}}}}%
    \put(0.57054582,0.21587472){\makebox(0,0)[t]{\lineheight{1.25}\smash{\begin{tabular}[t]{c}$\cdots$\end{tabular}}}}%
    \put(0.57054582,0.1004523){\makebox(0,0)[t]{\lineheight{1.25}\smash{\begin{tabular}[t]{c}$\cdots$\end{tabular}}}}%
    \put(0.57054582,0.03175671){\makebox(0,0)[t]{\lineheight{1.25}\smash{\begin{tabular}[t]{c}$\cdots$\end{tabular}}}}%
    \put(0,0){\includegraphics[width=\unitlength,page=3]{cluster_chain.pdf}}%
  \end{picture}%
\endgroup%

    \caption{Cluster chain for odd number of sites $L$. The $\Z_2 \times \Z_2$ symmetry is realized by $\Xo$ and $\Xe$. When the low-energy system is anomalous and symmetric, the degenerate edge states (in red) are entangled.}
    \label{fig:cluster_chain}
\end{figure}

Let us look at a concrete SPT system with exactly solvable edge theory: the 1d cluster chain, with Hamiltonian
\begin{equation}\label{eq:cluster_ham}
    H_{ZXZ} = -\sum_{i=2}^{L-1} Z_{i-1} X_i Z_{i+1}.
\end{equation}
This Hamiltonian has a $\Z_2 \times \Z_2$ symmetry generated by $\Xe \defeq \prod_{k=1}^{\lfloor L/2 \rfloor} X_{2k}$ and $\Xo \defeq \prod_{k=1}^{\lceil L/2 \rceil} X_{2k-1}$ (See Fig.~\ref{fig:cluster_chain}). $H_{ZXZ}$ can be exactly solved by transforming it under the local unitary transformation $\CZ \defeq \prod_{i=1}^{L-1} CZ_{i, i+1}$, as $\CZ^\dagger H_{ZXZ} \CZ = -\sum_{i=2}^{L-2} X_i$. Importantly, the ground state subspace of $H_{ZXZ}$ is four-fold degenerate, generated by states of the form $\CZ \ket{\phi_L} \otimes \ket{+ + \cdots +} \otimes \ket{\phi_R}$, where $\ket{\phi_L}$ and $\ket{\phi_R}$ are arbitrary one-qubit states at the left and right endpoints of the chain, respectively, and $\ket{+}$ is the $+1$ eigenstate of $X$. Since this mapping from the $ZXZ$ Hamiltonian to a trivial quantum paramagnet is a local unitary, then, by definition, the ground states of $H_{ZXZ}$ are short-range entangled in the bulk. However, note that each gate of $\CZ$ is not symmetric under $\Xe$ and $\Xo$. In fact, no local unitary with symmetric gates could trivialize $H_{ZXZ}$, as it is topologically protected by the $\Z_2 \times \Z_2$ symmetry above.

Restricting to the ground state subspace, $\Xo \stackrel{\text{GS}}{=} X_1 \otimes X_L$ and $\Xe \stackrel{\text{GS}}{=} Z_1 X_2 \otimes X_{L-1} Z_L$ for odd system size $L$, and for even $L$, $\Xo \stackrel{\text{GS}}{=} X_1 \otimes X_{L-1} Z_L$ and $\Xe \stackrel{\text{GS}}{=} Z_1 X_2 \otimes X_L$, where we have used $\otimes$ to indicate the separation between the two endpoints of the chain. In both cases, we have $U(g) = U_L(g) \otimes U_R(g)$, $g \in \Z_2 \times \Z_2$, with $U_L$ and $U_R$ \emph{projective} representations of $\Z_2 \times \Z_2$, as their generators anticommute. These representations are nothing but the logical $X$ and $Z$ operators of the effective qubits at each end of the chain.

For (2+1)-D SPTs, the boundary anomaly is manifested by the non-on-site character of the symmetry action~\cite{chen2013,else_classifying_2014}. Generically, the boundary symmetry operator $U(g)$ is a finite-depth local unitary (FDLU), up to exponential tails. Moreover, there is no equivalent on-site representation. One way to see this is by the reduction procedure of Else and Nayak~\cite{else_classifying_2014}, which we summarize below. Since we will revisit the same arguments in Sec.\ref{sec:tripartite_entanglement}, we will leave the details to that section.

Given a (2+1)-D SPT, we can still restrict the symmetry action to the boundary when the bulk is in the (SRE) ground state. Let $U(g)$ be this boundary symmetry representation, which acts as a finite-depth local unitary. Following Else and Nayak~\cite{else_classifying_2014}, $U(g)$ can be restricted to a unitary $U_M(g)$ with support on a region $M$ of the boundary so that $U_M$ acts as $U$ in the bulk of $M$. Thus, $U_M$ should still be a linear representation, up to boundary operators $\Omega = \Omega_L \otimes \Omega_R$:
\begin{equation}
U_M(g) U_M(h) = \Omega(g, h) U_M(g h).
\end{equation}
Similarly to the 0d anomaly, associativity of $U_M(g) U_M(h) U_M(k)$ imposes a consistency relation on $\Omega$, which is also satisfied by $\Omega_L$ and $\Omega_R$ individually \emph{up to a phase} $\omega(g, h, k) \in U(1)$.
\begin{equation}
\begin{split}
    &\omega(g,h,k) \Omega_L(g,h)\Omega_L(gh,k) \\ & = U_{M}(g)\Omega_L(h,k) U_{M}(g)^{-1} \Omega_L(g,hk).
\end{split}
\end{equation}
One can prove~\cite{else_classifying_2014} that such $\omega$ is a closed 3-cocyle:
\begin{equation}
    \omega(g,h,k)\omega^{-1}(gh,k,l)\omega(g,hk,l)\omega^{-1}(g,h,kl)\omega(h,k,l)=1,
\end{equation}
or more compactly $\delta\omega=1$. Furthermore, the phase indetermination of $\Omega_{L,R}$ implies we have to identify $\omega$'s that differ by exact 3-cocyles:
\begin{equation}
    \omega(g,h,k)\sim\omega(g,h,k)\mu(g,h)\mu^{-1}(gh,k)\mu(h,k)\mu^{-1}(g,hk),
\end{equation}
or more compactly $\omega\sim\omega\cdot\delta\mu$. Thus, to each (2+1)-D SPT phase, there is an element $[\omega]$ in the third cohomology group $H^3(G, U(1))$ that classifies it.

As an example, for $G=\Z_2=\{1,g\}$, we can canonically choose $\Omega(1,g)=\Omega(g,1)=\one$, and then the only nontrivial 3-cocycle $\omega(g,g,g)$ is defined by
\begin{equation}
    \omega(g,g,g) \Omega_L(g,g)= U_M(g) \Omega_L(g,g) U_M(g)^{-1}.
\end{equation}
In words, $\omega(g,g,g)$ measures whether $\Omega_L(g,g)$ is charged under $g$ or not.

For higher dimensional SPT systems, similar reduction procedures can be performed to reduce the problem of classifying $d$-dimension anomaly to $d=0$, where it goes back to a projective phase, but now in the $H^{d+2}(G, U(1))$ cohomology group. We leave the details of the higher-dimensional construction to Sec.~\ref{sec:higher_dims}, where it is applied to the anomaly-nonseparability connection.

\section{Prototypical examples}
\label{sec:prototypical_examples}

Before discussing the anomaly-nonseparability connection for generic anomalous systems, we illustrate it with some concrete examples.

\subsection{(0+1)-D: Projective representation}
\label{sec:projective_rep}

As we saw in Sec. \ref{sec:intro-anomaly}, one-dimensional SPT systems have anomalous edge states characterized by the projective nature of the symmetry action. Furthermore, we will argue now that the symmetry operator $U_L$ at the left endpoint cannot have an invariant state $\ket{\psi} \in \Hilb_L$, and equally for $U_R$. Indeed, if $U_{L}(g) \ket{\psi} = \lambda(g) \ket{\psi}$ for some $\lambda : G \to U(1)$, then the projective factor $\omega(g, h) \in U(1)$ in
\begin{equation}\label{eq:projective_rep}
    U_L(g) U_L(h) = \omega(g, h) U_L(gh)
\end{equation}
would equal to $\omega(g, h) = \lambda(g) \lambda(h) / \lambda(gh)$, by acting on \eqref{eq:projective_rep} with $\ket{\psi}$. Such $\omega$ can be trivialized by redefining $U_L(g) \mapsto \lambda(g)^{-1} U_L(g)$ and $U_R(g) \mapsto \lambda(g) U_R(g)$, contradicting the assumption that they are (nontrivial) projective representations. This implies that any symmetric pure state $|\Psi\rangle\in\Hilb_L\otimes\Hilb_R$ must be entangled, namely $|\Psi\rangle\neq|\psi_L\rangle\otimes|\psi_R\rangle$. By Lemma~\ref{lemma:strongly_symmetric_decomposition}, we immediately conclude that any strongly symmetric mixed state $\rho \in Q(\Hilb_L \otimes \Hilb_R)$ must be bipartite nonseparable. This is the simplest case of the anomaly-nonseparability connection.

Coming back to the cluster chain example, it is easy to see that there cannot be a symmetric pure state at one of the edges, since it would be a simultaneous eigenstate of both logical $X$ and $Z$. Therefore, the only symmetric ground states are the ones where the endpoints are entangled (see Fig.~\ref{fig:cluster_chain}). For the cluster chain, these correspond to the states $\CZ \ket{\Psi_{ab}}_{1L} \bigotimes_{i=2}^{L-1} \ket{+}_i$, $a,b \in \{0, 1\}$, where $\ket{\Psi_{ab}} = (1/\sqrt{2})(\one \otimes Z^a X^b) (\ket{00} + \ket{11})$ is a Bell pair state. Equivalently, the only \emph{mixed} state of \emph{one qubit} that is \emph{weakly} symmetric under the anomalous symmetry is the maximally mixed state $\one / 2$, which purifies to a Bell state. Since there is only one Bell state in each $\Z_2 \times \Z_2$ symmetry sector, then this trivially says that any \emph{strongly} symmetric mixed state at the (0+1)-D edge of the (1+1)-D cluster chain SPT is also bipartite entangled.

Similar statements can be established for more general $G$. For simplicity, let us assume that the edge Hilbert space transforms as an irreducible projective representation. In this case, by Schur's lemma, the only state $\rho$ weakly symmetric under $G$ (i.e. $\rho U(g)=U(g)\rho$ for all $g\in G$) is the maximally mixed state $\rho\propto \one$, which can be purified into a maximally entangled state between the two edges.

\subsection{(1+1)-D: 4-qubit CZX}

Moving on to one spatial dimension, we now consider a system with anomalous $\Z_2$ symmetry as our prototypical example. It can appear on the boundary of a $\Z_2$-protected bosonic topological phase~\cite{CZX,LevinGu}. Concretely, the system is a simple 1d chain of $L$ qubits with $\Z_2$ symmetry realized by the CZX unitary~\cite{CZX}:
\begin{equation}
    U_{\rm CZX}=\prod_{i=1}^L X_i \prod_{i=1}^L CZ_{i, i+1},
\end{equation}
where $CZ_{i j} = 1 - 2 \ket{1_i 1_j}\bra{1_i 1_j}$ is the controlled-Z gate between qubits $i$ and $j$ with $\{\ket{0}, \ket{1}\}$ the Pauli-Z basis. Since $U_{\text{CZX}}^2 = (-1)^L$, we require even number of sites. We will show that any density matrix $\rho$ that is strongly symmetric under this $\Z_2$ symmetry ($U_{\rm CZX}\rho=\lambda\rho$) cannot be tripartite separable. The essence of our arguments can already be seen through $4$-qubit system -- $L=4$ qubits is the smallest system size for which $U_{\text{CZX}}$ is nontrivially distinct from ordinary on-site $\Z_2$. For higher even $L$, the same arguments and conclusions follow, with suitable modifications\footnote{For odd system size $L \geq 3$, we may redefine $U_{\rm CZX} \to i U_{\rm CZX}$ so it squares to $+1$. In that case, the results for even $N$ follow accordingly, up to global phases.}. We will then show, with explicit examples, that a similar statement on bipartite nonseparability does not hold.

\subsubsection{Tripartite entanglement}\label{sec:czx_tripartite_entanglement}

\label{sec:4-qubit_CZX}
\begin{figure}[t]
    \centering
    \scalebox{1.3}{\begin{tikzpicture}
    \def\dotradius{0.05cm}
    \def\squaresize{1cm}
    \def\linewidth{1}
    \def\circleradius{0.75*\squaresize}
    \def\circlewidth{2}

    \definecolor{colorA}{HTML}{55AFE0}
    \definecolor{colorB}{HTML}{FFAE00}
    \definecolor{colorC}{HTML}{009E73}

    \begin{scope}
        \fill (0, 0) circle (\dotradius);
        \fill (0, \squaresize) circle (\dotradius);
        \fill (\squaresize, 0) circle (\dotradius);
        \fill (\squaresize, \squaresize) circle (\dotradius);
    
        \draw[color=colorC,rounded corners, line width=\linewidth] (\squaresize*3/4, -\squaresize/4) rectangle (1.25*\squaresize, \squaresize/4);
        \draw[color=colorB,rounded corners, line width=\linewidth] (\squaresize*3/4, \squaresize*3/4) rectangle (1.25*\squaresize, 1.25*\squaresize);
    
        \draw[color=colorA,rounded corners, line width=\linewidth] (-\squaresize/4, -\squaresize/4) rectangle (\squaresize/4, \squaresize*5/4);

        \node[color=colorA] at (0, 1.5*\squaresize) {A};
        \node[color=colorB] at (\squaresize, 1.5*\squaresize) {B};
        \node[color=colorC] at (\squaresize, 0.5*\squaresize) {C};
    
        \node at (-\squaresize, \squaresize/2) 
        {$\begin{pmatrix} \alpha_{00} \\ \alpha_{01} \\ \alpha_{10} \\ \alpha_{11} \end{pmatrix}$=};
    
        \draw[->] (1.5*\squaresize, \squaresize/2) -- (1.8*\squaresize, \squaresize/2);
    \end{scope}


    \begin{scope}[xshift=2*\squaresize, yshift=-0.25*\squaresize]
        \draw[color=colorA, line width=\circlewidth] (\circleradius, 2*\circleradius) arc (90:270:\circleradius);
        \draw[color=colorB, line width=\circlewidth] (2*\circleradius, \circleradius) arc (0:90:\circleradius);
        \draw[color=colorC, line width=\circlewidth] (\circleradius, 0) arc (270:360:\circleradius);


        \def\nsites{5}
        \pgfmathtruncatemacro{\totnsites}{4*\nsites}
        \foreach \i in {1,...,\totnsites}
        {
            \pgfmathtruncatemacro{\angle}{90*(\i-0.5)/\nsites}
            \pgfmathtruncatemacro{\checki}{\i == 1 || \i == 2*\nsites || \i == 2*\nsites+1 || \i == 4*\nsites ? 1 : 0}
        
            \ifnum\checki=1
                \filldraw[black,fill=black] ({\circleradius*(1+sin(\angle))}, {\circleradius*(1+cos(\angle))}) circle (\dotradius);
            \else
                \filldraw[black,fill=none] ({\circleradius*(1+sin(\angle))}, {\circleradius*(1+cos(\angle))}) circle (\dotradius);
            \fi
        }

    \end{scope}
    
\end{tikzpicture}}
    \caption{4-qubit tripartite separable state of the CZX model. Their behavior is representative of spins near the boundaries between regions AB and AC of a much larger spin chain.}
    \label{fig:4_dots}
\end{figure}
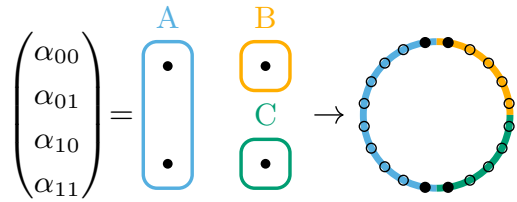

We now prove that there is no mixed state $\rho$ that is strongly symmetric under $U_{\text{CZX}}$ and is tripartite separable, that is, there is no $\rho$ satisfying $U_{\text{CZX}} \rho = \lambda \rho$, $\lambda \in U(1)$, and also
\begin{equation}\label{eq:tripartite_separable_mixed_def}
    \rho = \sum_{i} p_i \ketbra{A_i}{A_i} \otimes \ketbra{B_i}{B_i} \otimes \ketbra{C_i}{C_i},
\end{equation}
where we have partitioned the 4-qubit chain into three regions, $A$, $B$ and $C$, with $\ket{A_i} \in \Hilb_A$ etc. Without loss of generality, we may assume $A$ contains two adjacent spins, whereas $B$ and $C$ contain just one. See Fig.~\ref{fig:4_dots}.

First, we use Lemma \ref{lemma:strongly_symmetric_decomposition} to reduce the problem from mixed to pure states by noting that $U_{\text{CZX}} \rho = \lambda \rho$ implies $U_{\text{CZX}} \ket{\psi} = \lambda \ket{\psi}$ for all $\ket{\psi}$ in the range of $\rho$. For the tripartite separable state of Eq.\eqref{eq:tripartite_separable_mixed_def}, this means $U_{\text{CZX}} \ket{A_i} \ket{B_i} \ket{C_i} = \lambda \ket{A_i} \ket{B_i} \ket{C_i}$, even if such states are not orthogonal to each other. Let us now focus on one such pure state $\ket{A} \ket{B} \ket{C}$ in the decomposition of $\rho$ and assume by contradiction that it is symmetric under $U_{\text{CZX}}$.

Let $\alpha_{ij}$, $i,j \in \{0,1\}$, be a nonzero coefficient in the $Z$-basis expansion of $\ket{A} = \sum_{i,j\in \{0,1\}} \alpha_{ij} \ket{ij}$. By the CZX symmetry $|\alpha_{\bar{i}\bar{j}}|=|\alpha_{ij}|\neq0$ (we denote $\bar{k}=1-k$). We then have
\begin{equation}
    \prescript{}{A}{\langle \bar{i}\bar{j}}|U_{\rm CZX}|A\rangle|B\rangle|C\rangle=\alpha_{\bar{i}\bar{j}}\lambda|B\rangle|C\rangle.
\end{equation}
This means that we can restrict the symmetry $U_{\rm CZX}$ to the $BC$ region via
\begin{align}
\label{eq:4qubitrestrict}
    U_{BC} & \defeq \prescript{}{A}{\braket{\bar{i}\bar{j} | U_{\text{CZX}} | A}} / \alpha_{\bar{i}\bar{j}} \nonumber \\
           & = \frac{\alpha_{ij}}{\alpha_{\bar{i}\bar{j}}}(-1)^{ij} Z_B^i Z_C^j \cdot U_{BC}^{\text{strict}}  ,
\end{align}
where $U_{BC}^\text{strict} = X_B X_C CZ_{BC}$ is the restriction of symmetry by taking only the operators that act strictly inside region $BC$. Crucially, the restricted operator $U_{BC}$ is an unitary symmetry for the remaining state in $BC$:
\begin{equation}
    U_{BC} \ket{B} \ket{C} =  \lambda \ket{B} \ket{C}.
\end{equation}
Nevertheless, instead of being a $\Z_2$-valued operator, $U_{BC}$ satisfies the algebra
\begin{equation}
 U_{BC}^2 \propto (U_{BC}^{\rm strict})^2\propto Z_B Z_C,
\end{equation}
 where $\propto$ means equal up to some phase factor. This is a simple manifestation of the Else-Nayak mechanism~\cite{else_classifying_2014}. Since $|B\rangle|C\rangle$ must also be an eigenstate of $U_{BC}^2$, it follows that $Z_B \ket{B} = \mu_B \ket{B}$ and $Z_C \ket{C} = \mu_C \ket{C}$ for some $\mu_B, \mu_C = \pm 1$. This immediately contradicts with the fact that the operators $Z_B$ and $Z_C$ are charged under the $\Z_2$ symmetry, i.e. $U_{\rm CZX} Z_{B(C)} = - Z_{B(C)} U_{\rm CZX}$, as this would imply $\lambda \mu_{B(C)} = - \mu_{B(C)} \lambda = 0$. Thus, our initial assumption that there is a strongly symmetric tripartite separable state cannot hold.

The above argument can be extended, with minor modifications, to arbitrary even system length $L$. For a larger segment $A$, we can write
\begin{equation}
    |A\rangle=\sum_{ij}\alpha_{ij}|ij\rangle\otimes|\psi_{ij}\rangle,
\end{equation}
where $|ij\rangle$ labels the two spins, in the $Z$-basis, at the two boundaries of $A$ (one adjacent to $B$ and the other to $C$), and $|\psi_{ij}\rangle$ labels all the other spins in $A$. We can then construct the restricted symmetry action on $BC$ similar to Eq.~\eqref{eq:4qubitrestrict}:
\begin{align}
    U_{BC} & \defeq \prescript{}{A}{\langle\psi_{\bar{i}\bar{j}}}|\braket{\bar{i}\bar{j} | U_{\text{CZX}} | A} / \alpha_{\bar{i}\bar{j}} \nonumber \\
           & \propto  Z_{b}^i Z_{c}^j \cdot U_{BC}^{\text{strict}}  ,
    \label{eq:CZXUBC}
\end{align}
where $U_{BC}^\text{strict}$ is again the restriction of symmetry by taking only the operators that act strictly inside region $BC$, and $Z_{b}$ ($Z_{c}$) is the Pauli-$Z$ operator for the spin at the boundary of $B$ ($C$) adjacent to $A$. Crucially, we still have
\begin{equation}
\label{eq:CZXZZ}
    U^2_{BC} \propto Z_{b}Z_{c},
\end{equation}
which leads to the conclusion that a symmetric tripartite separable state does not exist.

\subsubsection{Bipartite separability}\label{sec:czx-bipartite-separability}

Although there are no tripartite separable states strongly symmetric under the CZX anomalous symmetry, there is a complete basis of bipartite separable symmetric pure states on any two connected regions $A$ and $B = \comp{A}$, the complement of $A$. These are formed from the tensor product of cat states in each of the regions: $\ket{\alpha^{(s)}} \otimes \ket{\beta^{(t)}} \in \Hilb_A \otimes \Hilb_B$, where $\ket{\xi^{(r)}} \equiv \frac{1}{\sqrt{2}} (\ket{\xi} + r \ket{\comp{\xi}})$, with $\xi$ a bitstring and $r \in U(1)$. Without doing additional calculations, we can already guess that such cat states might form symmetric bipartite-separable states from the proof of tripartite entanglement presented in the last section. This is because one can still follow a similar procedure outline above to find a restricted symmetry action $U_A$ of the state in $A$ satisfying $U_A^2 \propto Z_1 Z_{|A|}$. This implies a long-range connected correlation function $|\langle Z_1 Z_{|A|} \rangle - \langle Z_1 \rangle \langle Z_{|A|} \rangle| > 0$, which is satisfied by the prototypical cat states $\ket{\alpha^{(s)}}$. More explicitly, we will find the constraints over $\alpha, \beta, t, s$ and $\lambda$ such that
\begin{equation}
    U \ket{\alpha^{(s)}} \ket{\beta^{(t)}}
     = \lambda \ket{\alpha^{(s)}} \ket{\beta^{(t)}}.
\end{equation}
After some algebra, we find that the symmetry condition can be satisfied if and only if
\begin{equation}\label{eq:bipartite_symmetry_condition}
    s^2=t^2=(-1)^{N_A+1+\alpha_1+\alpha_{|A|}+\beta_1+\beta_{|B|}},
\end{equation}
where $N_A$ is the number of sites of region $A$, $\alpha_1, \alpha_{|A|}$ ($\beta_1,\beta_{|B|}$) are the values of $\alpha$ ($\beta$) at the two boundaries of $A$ ($B$), where $\alpha_1$ is adjacent to $\beta_1$ and $\alpha_{|A|}$, to $\beta_{|B|}$. Equation \eqref{eq:bipartite_symmetry_condition} dictates whether $s$ and $t$ take values in $\pm 1$ or $\pm i$, depending on the bitstrings $\alpha$ and $\beta$. The corresponding symmetry eigenvalue is
\begin{equation}
\label{eq:separableeigenvalue}
\lambda=st(-1)^{\sum_{i=1}^{|A|-1}\alpha_i\alpha_{i+1}+\sum_{j=1}^{|B|-1}\beta_i\beta_{i+1}+\alpha_1\beta_1+\alpha_{|A|}\beta_{|B|}}.
\end{equation}

Since any pair of bitstrings $\alpha, \beta$ gives a total of 4 orthogonal symmetry eigenstates by varying the signs of $t$ and $s$, the collection of all such bipartite separable states forms a complete basis.

We expect this procedure of finding a symmetric bipartite separable basis to apply similarly for $\Z_n$ and any finite Abelian symmetry group.

\subsubsection{Strongly symmetric infinite temperature state}\label{sec:infinite_T_CZX}

A natural way to generate strongly symmetric states is to consider a thermal state of a symmetric Hamiltonian $H$ at temperature $T = \beta^{-1}$ in the ``canonical ensemble", i.e. that only includes states transforming as a particular 1d irrep of the symmetry group:
\begin{equation}
 \rho_{T,\lambda} \propto \sum_{\ket{E} \in V_\lambda} e^{-\beta E} \ketbra{E}{E} = P_\lambda e^{-\beta H}.
\end{equation}
Here $\lambda: G\rightarrow U(1)$ denotes the 1d irrep, and $V_\lambda \subset \Hilb$, its subspace:
\begin{equation}
    V_\lambda=\{\ket{\psi} \mid U(g)\ket{\psi}=\lambda(g) \ket{\psi},\forall g\in G\}.
\end{equation}
 and $P_\lambda$ is the orthogonal projector to $V_{\lambda}$. In particular, the infinite temperature limit $\lim_{T \to \infty} \rho_{T, \lambda} = \frac{1}{\dim V_\lambda}P_\lambda$ depends only on the symmetry, and not on the particular Hamiltonian $H$.

In the case of the $\Z_2$ CZX symmetry, we have two infinite temperature states
\begin{equation}\label{eq:infinite-temperature_CZX}
    \rho_{\infty, \pm} = \frac{1}{2^L} (\one \pm U_{\text{CZX}}),
\end{equation}
corresponding to the two values the symmetry operator $U_{\text{CZX}}$ can take. $\rho_{\infty,\pm}$ is the maximally mixed state within the charge sector $V_{\pm} \subset \mathcal{H}$. From the previous discussion in Sec.~\ref{sec:czx-bipartite-separability}, we know that for each charge sector $V_{\pm}$ and any bipartition $\{A, B=\comp{A}\}$, we have a complete basis of bipartite separable states $\{|\alpha^{(s)}\rangle\otimes|\beta^{(t)}\rangle\}$. This immediately implies that
\begin{equation}
    \rho_{\infty,\pm}\propto \sum_{\substack{\alpha^{(s)},\beta^{(t)} \\ \lambda=\pm}} \ketbra{\alpha^{(s)}}{\alpha^{(s)}}_A \otimes \ketbra{\beta^{(t)}}{\beta^{(t)}}_{B},
\end{equation}
where the summation is only constrained by the eigenvalue condition Eq.~\eqref{eq:separableeigenvalue}. Since $\{A, B\}$ is arbitrary, we conclude that $\rho_{\infty,\pm}$ is bipartite separable for all bipartitions. Yet, surprisingly, from our discussion in Sec.~\ref{sec:czx_tripartite_entanglement} we know that $\rho_{\infty,\pm}$ is not separable for any tripartition.

To numerically verify that $\rho_{\infty, \pm}$ is tripartite entangled, we employ the multipartite permutation criterion of \cite{horodecki_separability_2006}. In summary, the criterion says that permuting the indices $(i_1,\ldots,i_n,j_1,\ldots,j_n)$ -- including input $i_\alpha$ and output $j_\alpha$ indices -- of an $n$-partite \emph{separable} density matrix $\rho = [\rho_{i_1,\ldots,i_n}^{j_1,\ldots,j_n}]$ can only decrease its trace norm from $\norm{\rho}_{1} \defeq \Tr |\rho| = 1$. Conversely, if some index permutation increases the trace norm to a value higher than $\norm{\rho}_{1} = 1$, then it is nonseparable. For the 4-qubit $\rho_{\infty, \pm}$, this happens if one permutes $i_B$ and $j_C$ in $[\rho_{i_A, i_B, i_C}^{j_A, j_B, j_C}]$, where $i_R$ and $j_R$ are the input and output (multi-)indices of region $R \in \{A, B, C\}$, respectively (See Fig. \ref{fig:4_dots}). Namely, we prove in Appendix \ref{appendix:permutation_creterion} that the resulting matrix after the permutation  $\tilde{\rho} = [\rho_{i_A, j_C, i_C}^{j_A, j_B, i_B}]$ has trace norm $\norm{\tilde\rho}_1 = \frac{1}{2} + \frac{1}{\sqrt{2}} > 1$.

From the results to be shown in Sec.~\ref{sec:long-range_entanglement}, we can conclude that not only $\rho_{\infty, \pm}$ is tripartite-entangled, but that it cannot be connected to a tripartite-separable state via a finite-depth local channel. This means that the tripartite entanglement is long-range in nature. Despite this, we show in Appendix \ref{appendix:adaptive_CZX} that it can be prepared via a finite-depth local \emph{adaptive} circuit, which includes not only local unitaries and local measurements but also (nonlocal) classical feedback~\cite{briegel_persistent_2001, lu_measurement_2022, bravyi_adaptive_2022}. This adaptive procedure adds insight into the state preparation complexity of $\rho_{\infty, \pm}$. Combined with the fact that many interesting properties of $\rho_{\infty, \pm}$ are already present for a system with just four qubits, the local adaptive procedure shows a practical way to realize such state in a quantum computer via mid-circuit measurements and feedback, as it has recently been done with long-range entangled states~\cite{foss-feig_experimental_2023a, iqbal_creation_2023}.

Another interesting property of the states $\rho_{\infty,\pm}$ is that they have the maximally disordered subsystems property~\cite{horodecki_informationtheoretic_1996}. This means that taking the partial trace of even one qubit at any site $i$ completely trivializes the system: $\Tr_i \rho_{\infty, \pm} = \one / 2^{L-1}$. In other words, $\rho_{\infty, \pm}$ has only \emph{global} correlations, and as such the state cannot be distinguished from the trivial maximally mixed state $\one / 2^{L}$ by measuring any $k$-point ($k$ finite) correlation function ${\rm Tr}[O(x_1)O(x_2)...O(x_k)\rho]$. The maximally disordered subsystems property, however, is not a generic feature of anomalous states, and it is not stable against local operations.

We have seen that strongly symmetric $T=\infty$ anomalous states are great examples of almost featureless states that, at the same time, cannot be trivial due to the anomaly-nonseparability constraint. Next, we will generalize the anomaly-nonseparability connection to other symmetries and dimensions.

\section{Multipartite nonseparability: General results}

In this section we generalize the discussions in Sec.~\ref{sec:prototypical_examples}. We consider bosonic lattice systems in $d$ spatial dimensions, and a global symmetry $G$ implemented by finite-depth local unitaries $U(g\in G)$ with maximum circuit depth $D\sim O(1)$. We further assume that the symmetry implementation has a quantum anomaly characterized by an element in the group-cohomology $[\omega]\in H^{d+2}(G,U(1))$. In one dimension our central result is a general connection between anomaly and tripartite nonseparability:

\begin{theorem}
\label{thm:3-sep}
    For $d=1$, if the quantum anomaly $[\omega]\in H^3(G,U(1))$ is nontrivial, and $\rho$ is strongly symmetric under $G$ in the sense that $\forall g\in G$, $U(g)\rho\propto \rho$,  then  \\
    \begin{enumerate}
        \item $\rho$ is tripartite nonseparable, namely
        \begin{equation}
            \rho\neq \sum_i p_i|A_i\rangle\langle A_i|\otimes |B_i\rangle\langle B_i|\otimes |C_i\rangle\langle C_i|,
        \end{equation}
        as long as $|A|, |B|, |C| \gg D$, the maximum depth of $U(g)$, and $A, B, C$ are all connected;
        \item  The tripartite entanglement of $\rho$ is long-ranged, in the sense that $\rho$ cannot be prepared from a tripartite-separable state (call it $\rho_{3s}$) via a finite-depth local channel, namely
        \begin{equation}
            \rho\neq \E_{\rm FD}\left[\rho_{3s}\right].
        \end{equation}
    \end{enumerate}
\end{theorem}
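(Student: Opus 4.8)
The plan is to prove both parts by reducing, via Lemma~\ref{lemma:strongly_symmetric_decomposition}, to a single anomalous \emph{pure} state and then generalizing the Else--Nayak restriction of Sec.~\ref{sec:intro-anomaly}, following the explicit $\Z_2$ computation of Sec.~\ref{sec:czx_tripartite_entanglement}. For part (1), assume toward a contradiction that a strongly symmetric $\rho$ (charge $\lambda$) is separable across connected arcs $A,B,C$ of the ring with $|A|,|B|,|C|\gg D$. Lemma~\ref{lemma:strongly_symmetric_decomposition} forces every product vector $\ket{A_i}\ket{B_i}\ket{C_i}$ in the decomposition to satisfy $U(g)\ket{A_i}\ket{B_i}\ket{C_i}=\lambda(g)\ket{A_i}\ket{B_i}\ket{C_i}$, so it suffices to exclude a single symmetric product state $\ket{A}\ket{B}\ket{C}$.

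I would then restrict the finite-depth symmetry $U(g)$ to the connected arc $BC=\comp{A}$, contracting the $A$-factor against a fixed reference component of $\ket{A}$ exactly as in Eq.~\eqref{eq:CZXUBC}. Since $|A|\gg D$ this yields a genuine unitary $U_{BC}(g)$ on $BC$ with $U_{BC}(g)\ket{B}\ket{C}=\lambda(g)\ket{B}\ket{C}$ and a composition law $U_{BC}(g)U_{BC}(h)=\Omega_b(g,h)\,\Omega_c(g,h)\,U_{BC}(gh)$, whose boundary factors $\Omega_b,\Omega_c$ are supported within $O(D)$ of the two walls bounding $A$ and hence have disjoint supports in $B$ and $C$. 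Associativity reproduces the Else--Nayak 3-cocycle relation of Sec.~\ref{sec:intro-anomaly}, the obstruction to splitting $\Omega_b$ into an honest representation being the anomaly class $[\omega]\in H^3(G,U(1))$. Because $\ket{B}\ket{C}$ is a product eigenstate of every $U_{BC}(g)$, the disjoint factors act separately as phases, $\Omega_b(g,h)\ket{B}=c_b(g,h)\ket{B}$; feeding this simultaneous eigenvalue into the 3-cocycle relation forces $\omega=\delta c_b$ to be exact, contradicting $[\omega]\neq 0$. For $\Z_2$ this is precisely $U_{BC}^2\propto Z_bZ_c$ with $Z_b,Z_c$ charged (Eq.~\eqref{eq:CZXZZ}).

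For part (2), suppose $\rho=\E_{\rm FD}[\rho_{3s}]$ with $\rho_{3s}$ tripartite separable. I would dilate $\E_{\rm FD}$ to a depth-$D$ local unitary $V$ acting on the system together with ancillas in a product state, so that $\rho=\Tr_{\anc}[V(\rho_{3s}\otimes\ketbra{0}{0}_{\anc})V^\dagger]$. Expanding $\rho_{3s}=\sum_i p_i\ketbra{A_iB_iC_i}{A_iB_iC_i}$ into product pure states and taking the partial trace in a product ancilla basis $\{\ket{e}\}$ gives a decomposition $\rho=\sum_{i,e}p_i\ketbra{\psi_{i,e}}{\psi_{i,e}}$ with $\ket{\psi_{i,e}}=\bra{e}_{\anc}V\ket{A_iB_iC_i}\ket{0}_{\anc}$, and Lemma~\ref{lemma:strongly_symmetric_decomposition} again forces each $\ket{\psi_{i,e}}$ to be symmetric with charge $\lambda$. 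The structural point is that the gates of $V$ crossing a wall are supported within $O(D)$ of that wall, so after the local ancilla projection each $\ket{\psi_{i,e}}$ is a tripartite \emph{product} state dressed only by operators localized at the three walls.

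Running the restriction of part (1) on such a dressed product state, the wall-localized dressing merely modifies $\Omega_b,\Omega_c$ by factors supported at the same walls; since multiplying and conjugating by wall-localized operators cannot change the cohomology class, the extracted obstruction is still the same nontrivial $[\omega]$, and the identical contradiction rules out symmetry of $\ket{\psi_{i,e}}$. Hence no finite-depth local channel prepares $\rho$ from a tripartite-separable state. The main obstacle throughout is making the restriction rigorous for a \emph{generic} finite-depth symmetry rather than the exactly solvable CZX unitary: one must control the exponential operator tails when truncating $U(g)$ to an arc, track the $U(g)$-conjugation acting on the projective phases in the transgression step (where non-Abelian $G$ is most delicate), and verify that the channel's wall-localized dressing leaves the $H^3$ class invariant rather than shifting it by a coboundary that must be computed away.
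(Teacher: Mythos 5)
Your high-level strategy matches the paper's: reduce to a single symmetric pure product state via Lemma~\ref{lemma:strongly_symmetric_decomposition}, restrict the symmetry to $BC$, extract boundary operators $\Omega_b,\Omega_c$ that act as phases on $\ket{B}$ and $\ket{C}$, and conclude $\omega=\delta\mu$ from the associativity relation. The gap is in how you build the restriction $U_{BC}(g)$. You contract against a reference component of $\ket{A}$ ``exactly as in Eq.~\eqref{eq:CZXUBC}'' and assert this ``yields a genuine unitary.'' For a generic finite-depth $U(g)$ the matrix element $\bra{\phi_A}U(g)\ket{A}$ is not unitary (e.g.\ $\bra{0}_A\,\mathrm{SWAP}_{AB}\,\ket{0}_A=\ketbra{0}{0}_B$ is a projector); the contraction trick works only for symmetries of the fixed-point form Eq.~\eqref{eq:fpform}, which is precisely why the paper confines it to the CZX example and to the higher-dimensional \emph{conjecture}, where only a weaker statement is proved under that extra assumption. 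Without unitarity (even invertibility) of $U_{BC}(gh)$ you cannot write the composition law $U_{BC}(g)U_{BC}(h)=\Omega(g,h)U_{BC}(gh)$ with local boundary operators, and the Else--Nayak transgression stalls. The paper closes this with a different construction: decompose $U=(u_{AB}\otimes u_{AC})(\tilde U_A\otimes\tilde U_{BC})$ into wall-straddling and interior gates, prove (Appendix~\ref{appendix:uABproof}, an entanglement-entropy argument that essentially uses all three regions) that $u_{AB}$ and $u_{AC}$ do not entangle the product state, and then invoke Lemma~\ref{lemma:separable_unitary} (unitary equivalence of purifications) to replace them by tensor products of genuine unitaries, so that $U_{BC}=(v_b\otimes w_c)\tilde U_{BC}$ is unitary by construction. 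You correctly flag this as ``the main obstacle'' but leave it unresolved, and it is the heart of the proof.

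For part (2) you diverge from the paper in a way that creates a second gap. Projecting the ancillas onto a product basis gives Kraus-dressed vectors $\ket{\psi_{i,e}}=\bra{e}_\anc V\ket{A_iB_iC_i}\ket{0}_\anc$ which are \emph{not} tripartite product states (they are entangled within $O(D)$ of the walls by non-unitary wall operators), so part (1) does not apply to them directly, and your claim that the dressing ``cannot change the cohomology class'' is asserted rather than derived. The paper avoids this entirely: it keeps the exactly-product state $\ket{\tilde A_i}\ket{\tilde B_i}\ket{\tilde C_i}$ and conjugates the symmetry instead, $U'(g)=V^\dagger(U(g)\otimes\one_\anc)V$, which is still a finite-depth local unitary, so part (1) applies verbatim and yields a trivial class $[\omega']$; Appendix~\ref{appendix:LRE_proof} then shows $[\omega']=[\omega]$ by conjugating the boundary identity with a restriction $V_{BC}$ of $V$ and invoking the Else--Nayak fact that different restrictions of the same symmetry give cohomologous 3-cocycles. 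Adopting that conjugation trick would convert your ``wall-localized dressing'' heuristic into a clean statement about restriction-independence of $[\omega]$.
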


We will prove the first statement in Sec.~\ref{sec:tripartite_entanglement} and the second statement in Sec.~\ref{sec:long-range_entanglement}.

A natural generalization of Theorem~\ref{thm:3-sep} (and the trivial $d=0$ example from Sec.~\ref{sec:projective_rep}) is the following conjecture:
\begin{conjecture}
    In $d$ space dimension, if the quantum anomaly $[\omega]\in H^{d+2}(G,U(1))$ is nontrivial, and $\rho$ is strongly symmetric under $G$, then
    \begin{enumerate}
        \item $\rho$ is $(d+2)$-partite nonseparable for some $(d+2)$-partitions;
        \item The $(d+2)$-partite entanglement of $\rho$ is long-ranged:
        \begin{equation}
            \rho\neq\E_{\rm FD}[\rho_{(d+2)s}],
        \end{equation}
        where $\rho_{(d+2)s}$ is a $(d+2)$-partite separable state.
    \end{enumerate}
    \label{conj:higherdim}
\end{conjecture}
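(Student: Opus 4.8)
The plan is to prove Conjecture~\ref{conj:higherdim} by an inductive dimensional-reduction argument that upgrades the $(1+1)$-D proof of Theorem~\ref{thm:3-sep} into a descent procedure, peeling off one spatial direction and one cohomology degree at a time. As in the $d=1$ case, I would first invoke Lemma~\ref{lemma:strongly_symmetric_decomposition} to reduce to pure states: strong symmetry guarantees that every vector in any decomposition of $\rho$ is symmetric with the same charge $\lambda(g)$, so it suffices to show that no product state $\bigotimes_{a=1}^{d+2}\ket{\psi_a}$ can be symmetric under an anomalous $U(g)$ for a judiciously chosen $(d+2)$-partition. The partition I would select is a \emph{junction} configuration: $d+2$ connected regions arranged so that their mutual codimension-one interfaces nest down to a single codimension-$d$ locus, the higher-dimensional analogue of the two trijunction points created by three arcs on a circle in $d=1$. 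This is why the statement only asserts nonseparability for \emph{some} $(d+2)$-partition rather than all of them.

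The core is an iterated restriction. Given the symmetric product state, I would restrict $U(g)$ onto the union of all but one region by sandwiching it between fixed basis vectors of the remaining factor, exactly as in Eq.~\eqref{eq:CZXUBC}; because the amputated factor carries a definite state, this projection yields a genuine unitary $U'(g)$ on the smaller system, still symmetric on the truncated product state. The key claim to establish is that $U'$ now realizes an anomaly one degree lower, $[\omega']\in H^{d+1}(G,U(1))$, obtained from $[\omega]$ by the Else-Nayak dimensional-reduction (transgression) map \cite{else_classifying_2014}, and that this map sends nontrivial classes to nontrivial classes for the chosen geometry. Tracking the region count, dimension, and degree simultaneously, $d$ iterations take $(d+2)$ regions in $d$ dimensions with $H^{d+2}$ down to two regions in $0$ dimensions with $H^{2}$, i.e.\ a junction on which $U(g)$ acts as a pair of projective representations with mutually inverse, nontrivial classes. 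The $(0+1)$-D argument of Sec.~\ref{sec:projective_rep} then forbids a symmetric product state on the two final factors, contradicting separability.

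For the long-range statement I would dilate the finite-depth local channel $\E_{\rm FD}$ to a finite-depth local unitary $V$ on system-plus-ancilla (Stinespring), use Corollary~\ref{corollary:strong_symmetry_extension} to lift strong symmetry of $\rho$ to a trivially extended strong symmetry on a purification, and conclude that if $\rho=\E_{\rm FD}[\rho_{(d+2)s}]$ then the enlarged state $V(\rho_{(d+2)s}\otimes\ketbra{0}{0})V^\dagger$ is strongly symmetric under $U(g)\otimes\one_\anc$. Since $V$ has bounded depth, it only blurs region interfaces within an $O(1)$ collar, so the junction structure persists on slightly enlarged regions and the nonseparability argument above applies to the dilated state, contradicting the assumed $(d+2)$-separability of $\rho_{(d+2)s}$. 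This mirrors the logic I expect Sec.~\ref{sec:long-range_entanglement} to use in $d=1$.

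The main obstacle is the restriction step in $d>1$. Unlike the CZX model, where $U(g)$ is a shallow product and the truncated operator obeys the clean algebra $U_{BC}^2\propto Z_bZ_c$, a general finite-depth symmetry circuit truncates only up to operators dressed with exponential tails and junction ambiguities, so showing that the projected $U'(g)$ is \emph{exactly} unitary and that its residual boundary operators assemble into the descended cocycle $\omega'$ — rather than a coboundary that could be trivialized away — is delicate. Making the descent map well-defined at the operator level and verifying that it preserves nontriviality of $[\omega]$ for the specific junction geometry is precisely the gap that keeps this a conjecture rather than a theorem, and I would expect most of the technical effort to concentrate there.
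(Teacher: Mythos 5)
Your strategy coincides with the one the paper actually uses in Sec.~\ref{sec:higher_dims} to motivate Conjecture~\ref{conj:higherdim}: reduce to pure states via Lemma~\ref{lemma:strongly_symmetric_decomposition}, pick a junction-type $(d+2)$-partition whose interfaces nest down to a single intersection point, iteratively restrict the symmetry by sandwiching against a reference state of the amputated factor as in Eq.~\eqref{eq:constructrestriction}, and derive a contradiction from the final point-supported boundary operator; the long-range part is handled by Stinespring dilation essentially as you describe. The one substantive difference is how the gap you identify gets treated. You leave open whether the sandwiched operator is a genuine unitary for a generic finite-depth $U(g)$; the paper closes this (and only this) by assuming the fixed-point form Eq.~\eqref{eq:fpform}, $U(g)=\sum_\alpha e^{i\mathcal{N}(g)[\alpha]}\ketbra{g\alpha}{\alpha}$, under which the restriction $\langle g\alpha_{A_1}|U(g)|A_1\rangle/\langle\alpha_{A_1}|A_1\rangle$ is manifestly of the same permutation-times-phase form, hence exactly unitary, exactly local, and stable under iteration. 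That assumption is precisely why the paper proves only a weaker version and leaves the general statement as a conjecture, so your diagnosis of where the difficulty concentrates is accurate.

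One framing point worth correcting: the descent is not a transgression $H^{d+2}(G,U(1))\to H^{d+1}(G,U(1))$ whose nontriviality must be shown to survive each step. In the Else--Nayak reduction the cohomological degree never changes; each restriction lowers the spatial dimension of the support of the boundary operators while adding one group argument ($U(g)\to\Omega(g,h)\to\Omega(g,h,k)\to\cdots$), and the $(d+2)$-cocycle $\omega(g_1,\dots,g_{d+2})$ is only extracted at the very last, zero-dimensional step, from the symmetry charge (or eigenvalue equations) of the point-supported operator. Separability at the junction forces that charge to be trivial --- in the paper's (2+1)-D illustration, $\langle\Omega_c\rangle\langle\Omega_d\rangle=\langle\Omega_c\Omega_d\rangle\neq 0$ --- which is what exhibits $\omega$ as a coboundary. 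There are no intermediate lower-degree anomalies whose nontriviality needs to be tracked, so that part of your plan is not a gap so much as an unnecessary and not-quite-well-posed intermediate claim.
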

Although a general proof is unknown, in Sec.~\ref{sec:higher_dims} we motivate the above conjecture by proving a weaker version, when the symmetry action $U_g$ is assumed to take some specific form.

\subsection{Tripartite entanglement for (1+1)-D systems}\label{sec:tripartite_entanglement}

Let $U(g)$ be a finite-depth local unitary action of a symmetry group $G$ on a (1+1)-D system with the anomaly represented by a cohomology class $[\omega] \in H^3(G, U(1))$. We now show that if there is a strongly symmetric state $\rho$ that is tripartite separable, then there is no anomaly, i.e. $[\omega]=[1]$. As discussed in Secs.~\ref{sec:intro-strong_and_weak_symmetries} and \ref{sec:prototypical_examples}, since any pure state in the ensemble $\rho$ is also symmetric under $G$, it suffices to consider just a pure state.

\begin{figure}[t]
    \centering
    \begin{subfigure}{0.9\linewidth}
        \centering
        \def\svgwidth{\linewidth}
        \larger[1.5]{\begingroup%
  \makeatletter%
  \providecommand\color[2][]{%
    \errmessage{(Inkscape) Color is used for the text in Inkscape, but the package 'color.sty' is not loaded}%
    \renewcommand\color[2][]{}%
  }%
  \providecommand\transparent[1]{%
    \errmessage{(Inkscape) Transparency is used (non-zero) for the text in Inkscape, but the package 'transparent.sty' is not loaded}%
    \renewcommand\transparent[1]{}%
  }%
  \providecommand\rotatebox[2]{#2}%
  \newcommand*\fsize{\dimexpr\f@size pt\relax}%
  \newcommand*\lineheight[1]{\fontsize{\fsize}{#1\fsize}\selectfont}%
  \ifx\svgwidth\undefined%
    \setlength{\unitlength}{195.07769727bp}%
    \ifx\svgscale\undefined%
      \relax%
    \else%
      \setlength{\unitlength}{\unitlength * \real{\svgscale}}%
    \fi%
  \else%
    \setlength{\unitlength}{\svgwidth}%
  \fi%
  \global\let\svgwidth\undefined%
  \global\let\svgscale\undefined%
  \makeatother%
  \begin{picture}(1,0.37717084)%
    \lineheight{1}%
    \setlength\tabcolsep{0pt}%
    \put(0,0){\includegraphics[width=\unitlength,page=1]{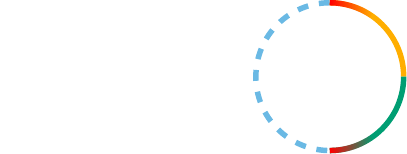}}%
    \put(0.80898172,0.31422176){\color[rgb]{1,0,0}\makebox(0,0)[lt]{\lineheight{1.25}\smash{\begin{tabular}[t]{l}$\Omega_b$\end{tabular}}}}%
    \put(0.81027703,0.03210221){\color[rgb]{1,0,0}\makebox(0,0)[lt]{\lineheight{1.25}\smash{\begin{tabular}[t]{l}$\Omega_c$\end{tabular}}}}%
    \put(0,0){\includegraphics[width=\unitlength,page=2]{tripartite_entanglement.pdf}}%
    \put(0.04923948,0.17378514){\color[rgb]{0.33333333,0.68627451,0.87843137}\makebox(0,0)[t]{\lineheight{1.25}\smash{\begin{tabular}[t]{c}$A$\end{tabular}}}}%
    \put(0.29617886,0.26039712){\color[rgb]{0.90196078,0.62352941,0}\makebox(0,0)[t]{\lineheight{1.25}\smash{\begin{tabular}[t]{c}$B$\end{tabular}}}}%
    \put(0.29617886,0.09011787){\color[rgb]{0,0.61960784,0.45098039}\makebox(0,0)[t]{\lineheight{1.25}\smash{\begin{tabular}[t]{c}$C$\end{tabular}}}}%
    \put(0,0){\includegraphics[width=\unitlength,page=3]{tripartite_entanglement.pdf}}%
    \put(0.49957203,0.2014786){\makebox(0,0)[t]{\lineheight{1.25}\smash{\begin{tabular}[t]{c}Restrict\end{tabular}}}}%
  \end{picture}%
\endgroup%
}
        \caption{If $A$, $B$ and $C$ are connected, we can restrict the symmetry to the complement of $A$ and conclude $B$ and $C$ are entangled if the symmetry is anomalous.}
        \label{fig:tripartite_circle_connected}
    \end{subfigure}
    \hfill
    \begin{subfigure}{0.9\linewidth}
        \centering
        \def\svgwidth{\linewidth}
        \larger[1.5]{\begingroup%
  \makeatletter%
  \providecommand\color[2][]{%
    \errmessage{(Inkscape) Color is used for the text in Inkscape, but the package 'color.sty' is not loaded}%
    \renewcommand\color[2][]{}%
  }%
  \providecommand\transparent[1]{%
    \errmessage{(Inkscape) Transparency is used (non-zero) for the text in Inkscape, but the package 'transparent.sty' is not loaded}%
    \renewcommand\transparent[1]{}%
  }%
  \providecommand\rotatebox[2]{#2}%
  \newcommand*\fsize{\dimexpr\f@size pt\relax}%
  \newcommand*\lineheight[1]{\fontsize{\fsize}{#1\fsize}\selectfont}%
  \ifx\svgwidth\undefined%
    \setlength{\unitlength}{195.07769727bp}%
    \ifx\svgscale\undefined%
      \relax%
    \else%
      \setlength{\unitlength}{\unitlength * \real{\svgscale}}%
    \fi%
  \else%
    \setlength{\unitlength}{\svgwidth}%
  \fi%
  \global\let\svgwidth\undefined%
  \global\let\svgscale\undefined%
  \makeatother%
  \begin{picture}(1,0.37717084)%
    \lineheight{1}%
    \setlength\tabcolsep{0pt}%
    \put(0,0){\includegraphics[width=\unitlength,page=1]{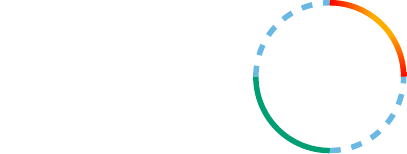}}%
    \put(0.07960658,0.26039712){\color[rgb]{0.33333333,0.68627451,0.87843137}\makebox(0,0)[t]{\lineheight{1.25}\smash{\begin{tabular}[t]{c}$A$\end{tabular}}}}%
    \put(0.29617886,0.26039712){\color[rgb]{0.90196078,0.62352941,0}\makebox(0,0)[t]{\lineheight{1.25}\smash{\begin{tabular}[t]{c}$B$\end{tabular}}}}%
    \put(0,0){\includegraphics[width=\unitlength,page=2]{tripartite_entanglement_disconnected.pdf}}%
    \put(0.49957203,0.2014786){\makebox(0,0)[t]{\lineheight{1.25}\smash{\begin{tabular}[t]{c}Restrict\end{tabular}}}}%
    \put(0,0){\includegraphics[width=\unitlength,page=3]{tripartite_entanglement_disconnected.pdf}}%
    \put(0.08087938,0.09011787){\color[rgb]{0,0.61960784,0.45098039}\makebox(0,0)[t]{\lineheight{1.25}\smash{\begin{tabular}[t]{c}$C$\end{tabular}}}}%
    \put(0,0){\includegraphics[width=\unitlength,page=4]{tripartite_entanglement_disconnected.pdf}}%
    \put(0.29490606,0.09011787){\color[rgb]{0.33333333,0.68627451,0.87843137}\makebox(0,0)[t]{\lineheight{1.25}\smash{\begin{tabular}[t]{c}$A$\end{tabular}}}}%
    \put(0,0){\includegraphics[width=\unitlength,page=5]{tripartite_entanglement_disconnected.pdf}}%
  \end{picture}%
\endgroup%
}
        \caption{If $A$ is disconnected, $B$ and $C$ are not necessarily entangled. In this case, two boundary operators appear for each interval in the complement of $A$. Since the anomaly comes from the phase freedom of only one of them, $B$ and $C$ might not be entangled.}
        \label{fig:tripartite_circle_disconnected}
    \end{subfigure}
    \caption{The anomaly-nonseparability connection on a tripartite ring depends crucially if there is an intersection point between an unique set of regions.}
    \label{fig:tripartite_circles}
\end{figure}

\begin{figure*}[t!]
    \centering
    \def\svgwidth{\textwidth}
    \begingroup%
  \makeatletter%
  \providecommand\color[2][]{%
    \errmessage{(Inkscape) Color is used for the text in Inkscape, but the package 'color.sty' is not loaded}%
    \renewcommand\color[2][]{}%
  }%
  \providecommand\transparent[1]{%
    \errmessage{(Inkscape) Transparency is used (non-zero) for the text in Inkscape, but the package 'transparent.sty' is not loaded}%
    \renewcommand\transparent[1]{}%
  }%
  \providecommand\rotatebox[2]{#2}%
  \newcommand*\fsize{\dimexpr\f@size pt\relax}%
  \newcommand*\lineheight[1]{\fontsize{\fsize}{#1\fsize}\selectfont}%
  \ifx\svgwidth\undefined%
    \setlength{\unitlength}{430.36813354bp}%
    \ifx\svgscale\undefined%
      \relax%
    \else%
      \setlength{\unitlength}{\unitlength * \real{\svgscale}}%
    \fi%
  \else%
    \setlength{\unitlength}{\svgwidth}%
  \fi%
  \global\let\svgwidth\undefined%
  \global\let\svgscale\undefined%
  \makeatother%
  \begin{picture}(1,0.15112757)%
    \lineheight{1}%
    \setlength\tabcolsep{0pt}%
    \put(0,0){\includegraphics[width=\unitlength,page=1]{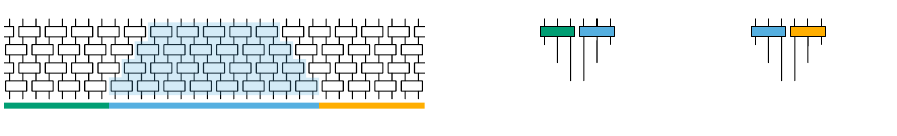}}%
    \put(0.23129494,0.00657818){\color[rgb]{0.33333333,0.68627451,0.87843137}\makebox(0,0)[t]{\lineheight{1.25}\smash{\begin{tabular}[t]{c}$A$\end{tabular}}}}%
    \put(0.41469862,0.00686753){\color[rgb]{1,0.68235294,0}\makebox(0,0)[t]{\lineheight{1.25}\smash{\begin{tabular}[t]{c}$B$\end{tabular}}}}%
    \put(0.12066298,0.13697148){\color[rgb]{0.39607843,0.39607843,0.39607843}\makebox(0,0)[t]{\lineheight{1.25}\smash{\begin{tabular}[t]{c}$u_{AC}$\end{tabular}}}}%
    \put(0.06259926,0.00686753){\color[rgb]{0,0.61960784,0.45098039}\makebox(0,0)[t]{\lineheight{1.25}\smash{\begin{tabular}[t]{c}$C$\end{tabular}}}}%
    \put(0.35692277,0.13697148){\color[rgb]{0.39607843,0.39607843,0.39607843}\makebox(0,0)[t]{\lineheight{1.25}\smash{\begin{tabular}[t]{c}$u_{AB}$\end{tabular}}}}%
    \put(0,0){\includegraphics[width=\unitlength,page=2]{local_unitary_surgery_2.pdf}}%
    \put(0.23129494,0.13697148){\color[rgb]{0.33333333,0.68627451,0.87843137}\makebox(0,0)[t]{\lineheight{1.25}\smash{\begin{tabular}[t]{c}$\tilde{U}_A$\end{tabular}}}}%
    \put(0.04125682,0.13697148){\color[rgb]{0,0,0}\makebox(0,0)[t]{\lineheight{1.25}\smash{\begin{tabular}[t]{c}$\tilde{U}_{BC}$\end{tabular}}}}%
    \put(0.43806242,0.13551129){\color[rgb]{0,0,0}\makebox(0,0)[t]{\lineheight{1.25}\smash{\begin{tabular}[t]{c}$\tilde{U}_{BC}$\end{tabular}}}}%
    \put(0,0){\includegraphics[width=\unitlength,page=3]{local_unitary_surgery_2.pdf}}%
    \put(0.621858,0.13559573){\color[rgb]{0,0.61960784,0.45098039}\makebox(0,0)[t]{\lineheight{1.25}\smash{\begin{tabular}[t]{c}$w_c$\end{tabular}}}}%
    \put(0.66612864,0.13559573){\color[rgb]{0.33333333,0.68627451,0.87843137}\makebox(0,0)[t]{\lineheight{1.25}\smash{\begin{tabular}[t]{c}$w_a$\end{tabular}}}}%
    \put(0.8569357,0.13559573){\color[rgb]{0.33333333,0.68627451,0.87843137}\makebox(0,0)[t]{\lineheight{1.25}\smash{\begin{tabular}[t]{c}$v_a$\end{tabular}}}}%
    \put(0.90120634,0.13559573){\color[rgb]{1,0.68627451,0}\makebox(0,0)[t]{\lineheight{1.25}\smash{\begin{tabular}[t]{c}$v_b$\end{tabular}}}}%
    \put(0.50077289,0.0833845){\color[rgb]{0,0,0}\makebox(0,0)[t]{\lineheight{1.25}\smash{\begin{tabular}[t]{c}$=$\end{tabular}}}}%
    \put(0,0){\includegraphics[width=\unitlength,page=4]{local_unitary_surgery_2.pdf}}%
  \end{picture}%
\endgroup%

    \caption{Decomposition of $U = (u_{AB} \otimes u_{AC}) (\tilde{U}_A \otimes \tilde{U}_{BC})$ into unitaries $u_{AB}$ and $u_{AC}$ that straddle the boundaries, and unitaries $\tilde{U}_A$ and $\tilde{U}_{BC}$, entirely supported in $A$ and $BC$, respectively. Lemma \ref{lemma:separable_unitary} enables us to ``cut'' $U$ into separate pieces acting independently on each region.}
    \label{fig:local_unitary_surgery}
\end{figure*}

We partition the system (a ring) into three disjoint intervals (See Fig.~\ref{fig:tripartite_circle_connected}). In order to reduce the symmetry and extract the anomalous phase $\omega$ out of one boundary operator, we impose two requirements on the tripartition. Firstly, each region must be larger than the maximum depth of the quantum circuits defining the symmetry operators $U(g)$. This guarantees we can reduce the symmetry into the subregions. Secondly, there must exist two regions whose shared boundary is exactly one point. Otherwise, we are not able to extract the anomalous phase, as will be clear from the proof below (See Fig.~\ref{fig:tripartite_circle_disconnected} for an example)\footnote{For example, it is easy to check that some tensor products of GHZ-like states in the disconnected configuration of Fig.~\ref{fig:tripartite_circle_disconnected} are symmetric under the CZX symmetry.}. In (1+1)-D, one natural choice for which the second condition is valid is when $A$, $B$ and $C$ are connected, so we will be using this one in what follows.

Under the preceding conditions, suppose there is a tripartite-separable pure state $\ket{\psi} = \ket{A}\ket{B}\ket{C}$ that is symmetric under $U$:
\begin{equation}\label{eq:symmetry_condition}
    U(g) \ket{A}\ket{B}\ket{C} = \lambda(g) \ket{A}\ket{B}\ket{C}.
\end{equation}
We will prove that this leads to a contradiction if $[\omega]$ is nontrivial.

The anomaly of the symmetry is manifest in its reductions to subregions with boundaries. Thus, we will now construct one such reduction $U_{BC}(g)$ to region $BC$ that acts in the same way as $U(g)$ away from the boundaries of $BC$, and, importantly, has $\ket{B}\ket{C}$ as an eigenvector.

For now, we omit the dependence of the symmetry on group elements $g$ for simplicity of notation. First, we group the gates of $U$ near the intersection of regions $A$ and $B$ into a separate local unitary $u_{AB}$, so that $u_{AB}^\dagger U$ has no gate acting in $A$ and $B$ at the same time. We do the same for $A$ and $C$ with $u_{AC}$ (See Fig.~\ref{fig:local_unitary_surgery} for an explicit decomposition.). In this way, we can separate $U$ into $U = (u_{AB} \otimes u_{AC}) (\tilde{U}_A \otimes \tilde{U}_{BC})$, with $\tilde{U}_A$ ($\tilde{U}_{BC}$) supported on $A$ ($BC$) and acting as $U$ in the bulk of its support. 

For the moment, let us focus on the region $AB$. {Although $u_{AB}$ can entangle states in $AB$, equation \eqref{eq:symmetry_condition} suggests that it does not entangle the state $\tilde{U}_A\ket{A} \otimes \tilde{U}_B\ket{B}$, since the final state is still separable with respect to $AB$ after applying all three unitaries. We show in Appendix \ref{appendix:uABproof} this is indeed the case\footnote{We thank Chaoming Jian for pointing out the necessity of having such a proof.}, i.e. $u_{AB}$ does not entangle $\tilde{U}_A\ket{A} \otimes \tilde{U}_B\ket{B}$, if the third region $C$ is also included in the argument. Then we can replace $u_{AB}$ by the tensor product of two unitaries, $v_A \otimes v_B$, each one supported either in $A$ or in $B$.} This is guaranteed by the following simple lemma:

\begin{lemma}\label{lemma:separable_unitary}
    Given pure states $\ket{\psi_{12}}, \ket{\phi_{12}} \in \Hilb_1 \otimes \Hilb_2$, $\ket{\psi_{34}}, \ket{\phi_{34}} \in \Hilb_3 \otimes \Hilb_4$ and an unitary $U_{23} \in U(\Hilb_2 \otimes \Hilb_3)$ satisfying $U_{23} \ket{\psi_{12}} \ket{\psi_{34}} = \ket{\phi_{12}} \ket{\phi_{34}}$, then there exist unitaries $V_2 \in U(\Hilb_2)$ and $V_3 \in U(\Hilb_3)$ satisfying $\ket{\phi_{12}} = V_2 \ket{\psi_{12}}$ and $\ket{\phi_{34}} = V_3 \ket{\psi_{34}}$. In diagrammatic form,
    \begin{equation}
        \tikzfig{lemma_string_diagram}
    \end{equation}
\end{lemma}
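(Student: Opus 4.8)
The plan is to avoid analyzing how $U_{23}$ acts on $\Hilb_2\otimes\Hilb_3$ directly---it need not factorize as a product of unitaries---and instead to exploit the fact that $U_{23}$ is the identity on the spectator subsystems $\Hilb_1$ and $\Hilb_4$. First I would promote the hypothesis $U_{23}\ket{\psi_{12}}\ket{\psi_{34}} = \ket{\phi_{12}}\ket{\phi_{34}}$ to a statement about density operators and trace it over subsystems $2,3,4$. Using cyclicity of the partial trace over $\Hilb_2\otimes\Hilb_3$ together with $U_{23}^\dagger U_{23} = \one$, the unitary drops out, and one finds that the reduced state of $\ket{\psi_{12}}$ on subsystem $1$ equals that of $\ket{\phi_{12}}$, i.e. $\Tr_2\ketbra{\psi_{12}}{\psi_{12}} = \Tr_2\ketbra{\phi_{12}}{\phi_{12}}$. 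By the symmetric computation, tracing instead over $1,2,3$ gives the analogous identity for subsystem $4$, namely $\Tr_3\ketbra{\psi_{34}}{\psi_{34}} = \Tr_3\ketbra{\phi_{34}}{\phi_{34}}$.

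Next I would invoke the unitary freedom in the purifications of a density operator. Since $\ket{\psi_{12}}$ and $\ket{\phi_{12}}$ are both purifications---into the \emph{same} ancilla space $\Hilb_2$---of one and the same reduced operator on $\Hilb_1$, there exists a unitary $V_2$ on $\Hilb_2$ with $\ket{\phi_{12}} = (\one_1\otimes V_2)\ket{\psi_{12}}$. Equality of the purifying-space dimensions guarantees that $V_2$ can be taken to be a genuine unitary rather than a mere partial isometry, after arbitrarily extending it off the support of the reduced state on $\Hilb_2$. Applying the identical argument on the $34$ side produces a unitary $V_3$ on $\Hilb_3$ with $\ket{\phi_{34}} = (V_3\otimes\one_4)\ket{\psi_{34}}$, which is precisely the assertion of the lemma.

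The step requiring the most care---and the one I expect to be the conceptual crux---is recognizing that nothing need be said about whether $U_{23}$ itself splits as $V_2\otimes V_3$; in general it does not. The argument works precisely because all information about $U_{23}$ is distilled into the single fact that it preserves the marginals on the untouched subsystems $1$ and $4$, after which the conclusion follows from the standard purification-freedom theorem rather than from any factorization of $U_{23}$. The one bookkeeping subtlety, the possible degeneracy of the reduced-state spectra, is automatically absorbed by that theorem, whose content is exactly the freedom to rotate within degenerate Schmidt subspaces.
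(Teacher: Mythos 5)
Your proof is correct and follows essentially the same route as the paper: trace out everything except subsystem $1$ (respectively $4$), observe that the unitary $U_{23}$ cancels under the partial trace so the marginals of $\ket{\psi_{12}}$ and $\ket{\phi_{12}}$ on $\Hilb_1$ coincide, and then invoke the unitary equivalence of purifications into the same ancilla space $\Hilb_2$ to produce $V_2$ (and symmetrically $V_3$). Your added remarks on why $U_{23}$ need not factorize and on extending a partial isometry to a unitary are correct elaborations of the same argument.
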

\begin{proof}
    Tracing out subsystems 2, 3 and 4 gives two pure states reducing to the same density matrix in $\Hilb_1$: $\Tr_2[\ketbra{\psi_{12}}{\psi_{12}}] = \Tr_2[\ketbra{\phi_{12}}{\phi_{12}}]$. Hence, by the unitary equivalence of purifications~\cite{watrous_theory_2018a}, there exists a unitary $V_2$ such that $\ket{\phi_{12}} = V_2 \ket{\psi_{12}}$. $V_3$ can be analogously defined by tracing out subsystems 1, 2 and 3.
\end{proof}

Applying {the same reasoning} to region $AC$, we arrive at the similar conclusion that $u_{AC}$ can be replaced by $w_a \otimes w_c$. Finally, we can define $U_{BC} \defeq (v_b \otimes w_c) \tilde{U}_{BC}$ satisfying
\begin{equation}\label{eq:symmetry_condition_restricted}
    U_{BC}(g) \ket{B}\ket{C} = \lambda(g) \ket{B}\ket{C},
\end{equation}
after tracing out region $A$ from \eqref{eq:symmetry_condition} (See Fig.~\ref{fig:local_unitary_surgery}). For the simple CZX example in Sec.~\ref{sec:4-qubit_CZX}, this restricted unitary $U_{BC}$ can be explicitly constructed via Eq.~\eqref{eq:CZXUBC}.

The reduced operator $U_{BC}(g)$ is still a finite-depth local unitary, since the supports of $v_b(g)$ and $w_c(g)$ have size of the same order as the depth $D = O(1)$ of the original $U(g)$. For the same reason, we know $U_{BC}(g)$ acts in the same way as $U(g)$ in the interior of the $BC$ region, $O(D)$ sites away from its boundary points.  In particular, this means that $U_{BC}$ is an unitary representation of $G$ up to boundary operators $\Omega(g_1, g_2) = \Omega_b(g_1, g_2) \otimes \Omega_c(g_1, g_2)$, since $U(g)$ is a representation: $U(g_1) U(g_2) = U(g_1 g_2)$. More precisely, we have
\begin{equation}\label{eq:restricted_boundary_op}
        U_{BC}(g_1) U_{BC}(g_2) = \Omega(g_1, g_2) U_{BC}(g_1 g_2).
\end{equation}
Applying this equation to $\ket{B}\ket{C}$ and using \eqref{eq:symmetry_condition_restricted}, we have
\begin{equation}\label{eq:boundary_op_eigenstates}
\begin{split}
    \Omega_b(g_1, g_2) \ket{B} & = \mu_b(g_1, g_2) \ket{B} \\
    \Omega_c(g_1, g_2) \ket{C} & = \mu_c(g_1, g_2) \ket{C}
\end{split}
\end{equation}
with $\mu_b \cdot \mu_c(g_1, g_2) = \delta\lambda(g_1, g_2) \equiv \lambda(g_1)\lambda(g_2)/\lambda(g_1 g_2)$, where $\delta$ is the coboundary operator of the second group cohomology $H^2(G, U(1))$.

Equation \eqref{eq:boundary_op_eigenstates} means that we can assign definite phases to $\Omega_b$ and $\Omega_c$, at least when acting on states $\ket{B}$ and $\ket{C}$, respectively. Intuitively, the 3-cocycle $\omega$ measures the obstruction to remove phase factors from $\Omega_b$ and $\Omega_c$ for all states they can act on. Hence, we expect the SPT state is trivial, with $[\omega] = 1$. Indeed, associativity for the boundary operators $\Omega_c(g_1, g_2)$ imply
\begin{equation}\label{eq:associativity_reduced_boundary_ops}
\begin{split}
    &\omega(g_{1},g_{2},g_{3}) \Omega_{c}(g_{1},g_{2})\Omega_{c}(g_{1}g_{2},g_{3}) \\ & = U_{BC}(g_{1})\Omega_{c}(g_{2},g_{3}) U_{BC}(g_{1})^{-1} \Omega_{c}(g_{1},g_{2}g_{3}).
\end{split}
\end{equation}
Applying this equation to {$\ket{B}\ket{C}$} gives $\omega = \delta \mu_c$, a coboundary. Thus, the group cohomology class $[\omega] \in H^3(G, U(1))$ is trivial.

\subsection{Long-range entanglement}
\label{sec:long-range_entanglement}

In the previous section, we proved that no tripartite separable state $\ket{A}\ket{B}\ket{C}$ can be symmetric under an anomalous symmetry in (1+1)-D, no matter how entangled each of the three states are in their internal degrees of freedom. Although this space of tripartite-separable states is very large, it is fragile to any local perturbation on the boundary between any two of the three regions. This stands in contrast with the well known result that pure symmetric states under an anomalous symmetry are not short-range entangled~\cite{CZX,else_classifying_2014,KapustinSopenko2024}. In this section, we unify these two results by proving that not only is any tripartite separable mixed state not symmetric under an anomalous symmetry, but also any other state that can be prepared from it by a finite-depth local channel (FDLC).

More specifically, by finite-depth local channel we mean a completely positive trace-preserving map $\E_{\text{FDLC}}$ that is the restriction of a finite-depth local unitary acting over an enlarged Hilbert space $\Hilb_\sys \otimes \Hilb_\anc$, where $\Hilb_\sys$ is the original, ``physical'' Hilbert space, and $\Hilb_\anc$ the ancilla space. We enlarge it by adding to each on-site Hilbert space a finite-dimensional auxiliary Hilbert space, geometrically positioned as close to the original physical site as the other neighboring physical sites. Furthermore, the auxiliary degrees of freedom start in some trivial product state, such as $\ket{00\cdots 0}_\anc$. Following previous works~\cite{hastings_topological_2011, coser_classification_2019a, ma_average_2023, Ma_intrisic_2023, sang_mixedstate_2023}, we use finite-depth local channels as the extension of finite-depth local unitaries to mixed states.

Suppose, by contradiction, that $\rho = \E_{\text{FDLC}}(\rho_3)$ is a strongly symmetric state under $U(g)$, where $\rho_3 = \sum_{i} p_i [A_i B_i C_i]$ is tripartite separable, where $[\psi] = \ketbra{\psi}{\psi}$. By definition, there exists a local unitary $V$ acting on an enlarged Hilbert space $\Hilb_\sys \otimes \Hilb_\anc$ such that $\rho = \sum_i p_i \Tr_E[V \ket{\tilde A_i} \ket{\tilde B_i} \ket{\tilde C_i}]$, where $\ket{\tilde{A}_i} = \ket{A_i}_\sys \ket{0}_\anc$ and similarly for $\ket{\tilde{B}_i}$ and $\ket{\tilde{C}_i}$. By virtue of the strong symmetry condition, Lemma \ref{lemma:strongly_symmetric_decomposition} and its corollary imply that not only each mixed state in the previous mixture is strongly symmetric, but also that each enlarged pure state $V \ket{\tilde A_i} \ket{\tilde B_i} \ket{\tilde C_i}$ is symmetric under the trivially enlarged symmetry $\tilde{U}(g) = U(g)_\sys \otimes \one_\anc$. Equivalently, the state $\ket{\tilde A_i} \ket{\tilde B_i} \ket{\tilde C_i}$ is symmetric under $U'(g) \defeq V^\dagger \tilde{U}(g) V$.

With this conjugated symmetry representation, we can follow the same steps that reached equation \ref{eq:associativity_reduced_boundary_ops} and identify a trivial group cohomology phase $[\omega'] \in H^3(G, U(1))$. Even though the symmetry representation $U'(g)$ is different, we expect that this group cohomology class still corresponds to the same bulk SPT phase, since we the anomaly should be stable under finite-depth local unitaries. To make this precise, we prove in Appendix~\ref{appendix:LRE_proof}, using techniques developed in Ref.~\cite{else_classifying_2014}, that $\omega'$ is in the same cohomology class as $\omega$ of the original symmetry, appearing in \eqref{eq:associativity_reduced_boundary_ops}.

For mixed quantum states, a natural notion of ``phases of matter'' is based on the two-way connectivity via finite-depth quantum channels~\cite{hastings_topological_2011, coser_classification_2019a, ma_average_2023, Ma_intrisic_2023, sang_mixedstate_2023}: two mixed states $\rho_{1}$ and $\rho_2$ belong to the same phase if, and only if, there are two finite-depth quantum channels $\E$ and $\E'$ such that $\E[\rho_1]=\rho_2$ and $\E'[\rho_2]=\rho_1$. Here we do not impose any symmetry restriction on $\E$ and $\E'$ (if we did, the phases would be symmetry-protected). From this point of view, our result on long-range tripartite entanglement can be rephrased as: any state $\rho$ that is strongly symmetric under an anomalous symmetry $G$ must belong to a nontrivial phase, in which any state must be tripartite entangled.

Let us now revisit the infinite-temperature example in Sec.~\ref{sec:4-qubit_CZX}: $\rho_{\infty,+}\propto\one+U_{\rm CZX}$. From our discussion here we see that $\rho_{\infty,+}$ is long-range (tripartite) entangled~\footnote{We note that Ref.~\cite{Alberto2022} conjectured that states like $\rho_{\infty,+}$ are long-ranged entangled when the strong symmetry is anomalous.}. The long-range entanglement is also ``non-invertible'', in the sense that there is no ``inverse'' state $\tilde{\rho}$ such that $\rho_{\infty,+}\otimes\tilde{\rho}$ can be prepared from a trivial state in finite depth -- otherwise we immediately obtain a finite-depth preparation of $\rho_{\infty,+}$ by tracing out $\tilde{\rho}$. Furthermore, any $k$-point ($k$ finite) connected correlation function $\langle O(x_1)...O(x_k)\rangle_c$ vanishes exponentially. In fact, any region $A$ only shares an ``area law'' correlation with its complement $\bar{A}$, in the sense that the mutual information $I_{A\bar{A}}=\log 2$. Furthermore, for any two regions $A$ and $B$, as long as $B\neq \bar{A}$ we have $I_{AB}=0$. These features, in the pure state context, are the defining characteristics of intrinsic topological orders, which only exist in two space dimensions or above. From this point of view, states like $\rho_{\infty,+}$ can be viewed as certain type of intrinsic (or non-invertible) topological phases in (1+1)-D, which are possible only for mixed states.

\subsection{\texorpdfstring{$\rho_{\infty,\pm}$}{ρ∞±} as an intrinsically mixed quantum phase}
\label{sec:intrinsicallymixed}
We now recall two key properties of $\rho_{\infty,\pm}=(\one \pm U_{\rm CZX})/2^L$:
\begin{enumerate}
    \item $\rho_{\infty,\pm}$ is long-range tripartite entangled, meaning $\rho_{\infty,\pm} \neq \E_{\rm FD}(\rho_{\rm tri-sep})$.
    \item $\rho_{\infty,\pm}$ is bipartite separable for any (connected) bipartition of the system.
\end{enumerate}
It turns out these two properties guarantee that $\rho_{\infty,\pm}$ belongs to an \textit{intrinsically mixed quantum phase}, in the sense that it is not two-way connected to any pure state through FD local channels. To see this, suppose there are two pure states $|\Psi_{1,2}\rangle$ and FD local channels $\E_{1,2}$ such that
\begin{equation}
\label{eq:IM}
    |\Psi_1\rangle \xrightarrow{\E_1} \rho_{\infty,\pm} \xrightarrow{\E_2} |\Psi_2\rangle.
\end{equation}
Then the long-range tripartite entanglement of $\rho_{\infty,\pm}$ guarantees that $|\Psi_1\rangle$ must be long-range entangled. The bipartite separability of $\rho_{\infty,\pm}$, on the other hand, requires $|\Psi_2\rangle$ to be short-range entangled. To see this, notice that for any decomposition $\rho_{\infty,\pm}=\sum_{i}p_i|\psi_i\rangle\langle\psi_i|$, we have $\E_2: |\psi_i\rangle \mapsto |\Psi_2\rangle$. Since we can choose $|\psi_i\rangle$ to be bipartite separable for a interval bipartition $A|\bar{A}$, $|\Psi_2\rangle$ must also be bipartite separable for $A|\bar{A}$, up to a finite-depth local unitary (denoted as $U_{\partial A}$) acting near the boundary $\partial A$, with width $W$ proportional to the depth of the channel $\E_2$. Since this is true for any bipartition $A|\bar{A}$, we can have $O(L)$ number of such disentangling unitaries $U_{\partial A_i}$, each separated by a finite distance $l\gg W$ from the others. Upon acting with $\prod_iU_{\partial A_i}$, $|\Psi_2\rangle$ becomes a product state with enlarged unit cells of size $l$ -- in other words, $|\Psi_2\rangle$ is short-range entangled.  Therefore $|\Psi_1\rangle$ and $|\Psi_2\rangle$ cannot be the same state, meaning $\rho_{\infty,\pm}$ is not two-way connected to any pure state.

We can further consider a more general type of finite-depth local channel: instead of starting with an ancilla $\Hilb_{\mathcal{A}}$ in a pure product state $|00...0\rangle_{\mathcal{A}}$, we allow initiating the ancilla in a convex sum of short-range entangled pure states: $\rho^{(0)}_{\mathcal{A}}=\sum_{j}p_j|\psi^{SRE}_j\rangle\langle\psi^{SRE}_j|_{\mathcal{A}}$. We can then define different states to belong to the same quantum phase if they are two-way connected through such generalized finite-depth local channels. This definition of phases is stronger, in the sense that long-range classical correlations, reflected in the ordinary correlation function such as $\Tr\rho O_1O_2$, become trivializable \cite{hastings_topological_2011}.

It is easy to check that, for the $\rho_{\infty,\pm}$ state, our previous argument can be modified even for such generalized channels~\cite{Lessatoappear}. We again conclude that if
Eq.~\eqref{eq:IM} holds, then $|\Psi_1\rangle$ must be long-range (tripartite) entangled while $|\Psi_2\rangle$ must be short-range entangled, and there is no pure state that is two-way connected to $\rho_{\infty,\pm}$. The state $\rho_{\infty,\pm}$ is thus \textit{intrinsically mixed} in a rather strong sense -- to our knowledge such intrinsically mixed many-body state has not been discussed before.\footnote{We note that Ref.~\cite{WangWuWang2023} discussed a type of ``intrinsic mixed-state quantum topological order (TO)''. The authors proved that such states cannot be two-way connected to any fully separable states, and argued that because ``It is widely believed that non-modular anyon theories cannot be realized by local gapped Hamiltonians in 2D bosonic system'', then ``this implies the lack of a pure-state counterpart of the mixed-state TO; thus it is indeed intrinsically mixed.''. However, currently it is unknown whether such states can be two-way connected to some gapless pure states, especially when the channels are generalized and are able to create/eliminate long-range classical correlations.}

\subsection{Higher dimensions}
\label{sec:higher_dims}

We can now generalize the anomaly-nonseparability connection to higher dimensions. The central statement (Conjecture~\ref{conj:higherdim}) is that a $d$ dimensional bosonic system with an anomalous strong symmetry $G$ is nonseparable for certain $(d+2)$-partition of the system.

To motivate this conjecture, we follow the Else-Nayak argument~\cite{else_classifying_2014} to prove a weaker version, in which we assume a specific form of the symmetry actions. Specifically, we assume each symmetry element $g\in G$ to act as
\begin{equation}
    U(g)=\sum_{\alpha}e^{i\mathcal{N}(g)[\alpha]}|g\alpha\rangle\langle\alpha|,
    \label{eq:fpform}
\end{equation}
where $\{\ket{\alpha} = \otimes_{x \in \Lambda} \ket{\alpha_x} \}_\alpha$ is an on-site basis, $\alpha_x \mapsto g \alpha_x$  is an on-site $g$-action on classical labels $\alpha$, and $\mathcal{N}(g)$ a local functional of $\alpha$. For each anomaly described by the group-cohomology $H^{d+2}(G,U(1))$, there are certain ``fixed-point'' models realizing the boundary symmetry action in this specific form~\cite{else_classifying_2014, chen_symmetryprotected_2012}. The anomalous symmetries described in Sec.~\ref{sec:prototypical_examples} all take the form of Eq.~\eqref{eq:fpform}.

Now consider a symmetric $k$-separable state $\ket{A_1}\ket{A_2}\cdots\ket{A_{k}} \eqdef \ket{A_1} \ket{\comp{A_1}}$, satisfying
\begin{equation}\label{eq:symmetry_condition_higher_dims}
        U(g) \ket{A_1}\ket{\comp{A_1}} = \lambda(g) \ket{A_1}\ket{\comp{A_1}}.
    \end{equation}
We will show there is a restricted FDLU symmetry action $U_{\comp{A_1}}$, where $\comp{A_1} = A_2 \cup \cdots \cup A_{k}$, that acts like $U$ in the bulk of $\comp{A_1}$ and has $\ket{A_2}\cdots\ket{A_{k}}$ as an eigenvector. The construction of $U_{\comp{A_1}}$ is similar to that in Sec.~\ref{sec:4-qubit_CZX}: choose a product state $|\alpha_{A_1}\rangle$ in the classical on-site basis such that $\langle\alpha_{A_1}|A_1\rangle\neq0$, then a restricted local unitary is given by
\begin{equation}
\label{eq:constructrestriction}
    U(g)_{\comp{A_1}}=:\frac{\langle g\alpha_{A_1}|U(g)|A_1\rangle}{\langle\alpha_{A_1}|A_1\rangle}=\sum_{\alpha}e^{i\mathcal{N}_{\comp{A_1}}(g)[\alpha]}|g\alpha\rangle\langle\alpha|,
\end{equation}
where $|\alpha\rangle$ lives in $\comp{A_1}$ and $\mathcal{N}_{\comp{A_1}}(g)[\alpha]=:\mathcal{N}(g)[(g\alpha_A)\alpha]$ is a local functional supported on $\comp{A_1}$ that acts in the same way as $\mathcal{N}$ in the bulk of $\comp{A_1}$. We have thus obtained a local unitary restriction to the subsystem $\comp{A_1}=A_2\cup A_3...\cup A_k$, with the same form as Eq.~\eqref{eq:fpform} but the local functional $\mathcal{N}_{\comp{A_1}}$ is modified near the boundary of $\comp{A_1}$.

{Importantly, the reduced unitary will be a representation up to boundary operators $\Omega(g, h)$. These, in turn, will also be of the form \eqref{eq:fpform}, now with the local functional depending on two group elements, $\mathcal{N}(g,h)$.} We can then iterate the above procedure, restricting the symmetry action to fewer regions and lower dimensions -- each step producing a local functional $\mathcal{N}$ that is modified from the previous one near the boundaries. This collection of local functionals $\{\mathcal{N}\}$ satisfy the symmetry algebra only up to boundary terms, similar to the 0d and 1d cases discussed earlier (See \cite{else_classifying_2014} for details). The violation of the symmetry algebra on the subsystem boundaries is characterized precisely by the group-cohomology $H^{d+2}(G,U(1))$~\cite{else_classifying_2014}.

\begin{figure}
    \centering
    \begin{subfigure}{\linewidth}
        \centering
        \begin{overpic}[width=0.65\linewidth]{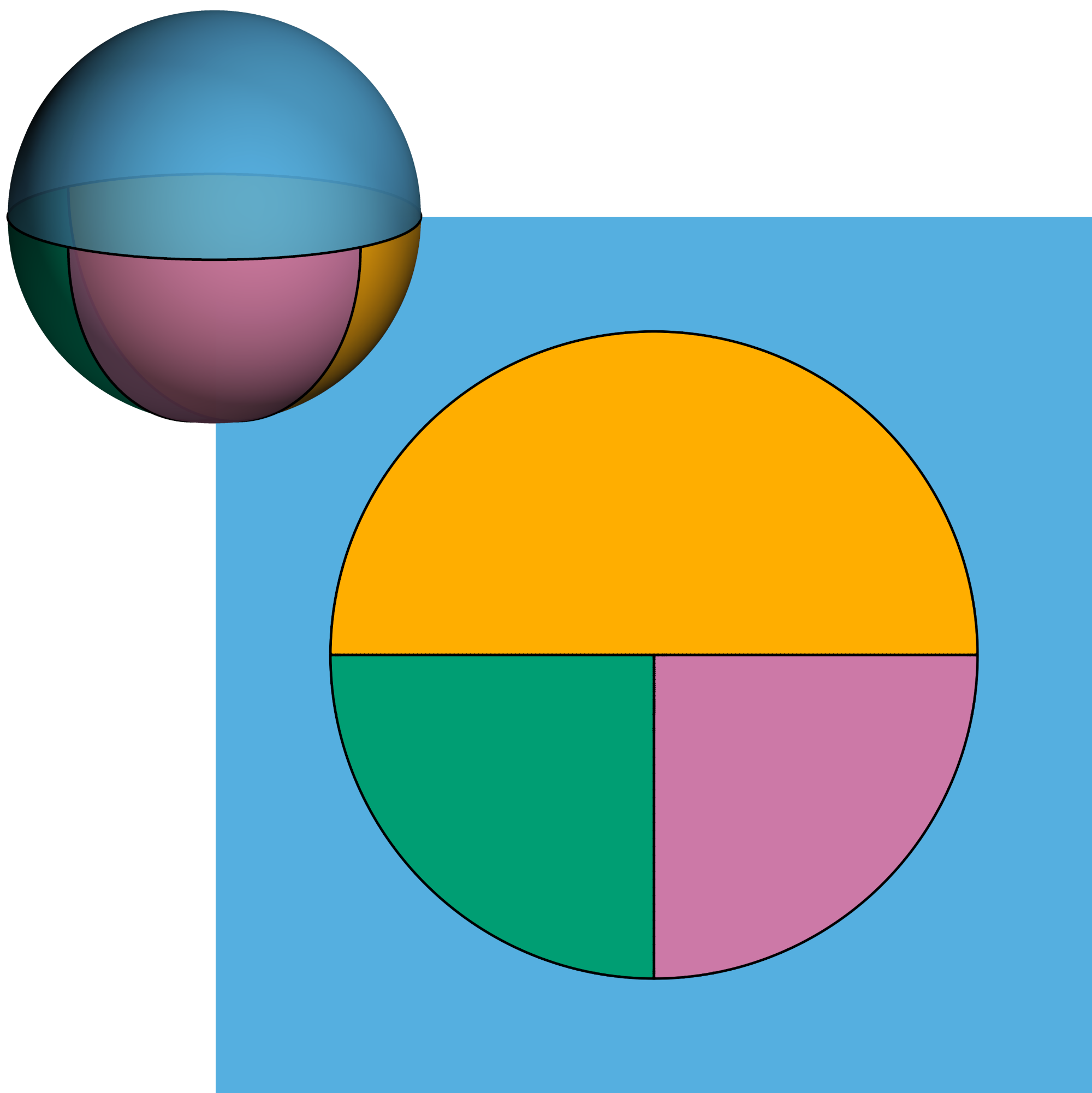}
            \put(90,70){\makebox(0,0){\large A}}
            \put(60,55){\makebox(0,0){\large B}}
            \put(45,28){\makebox(0,0){\large C}}
            \put(75,28){\makebox(0,0){\large D}}
            \put(31,34){\color{BrickRed}\large $\Omega_c$}
            \put(80,34){\color{BrickRed}\large $\Omega_d$}
            \foreach \i in {0,...,9}{%
                \pgfmathtruncatemacro{\n}{39 + 4*\i}
                \pgfmathtruncatemacro{\nn}{41 + 4*\i}
                \put(\n,36){\color{BrickRed}\qbezier(0,0)(1,3)(2,0)}
                \put(\nn,36){\color{BrickRed}\qbezier(0,0)(1,-3)(2,0)}
            }
        \end{overpic}
        \caption{This partition obstructs $4$-partite separability. By reducing the symmetry to the edge of BCD, and then to the points $c$ and $d$, we reach a contradiction if regions C and D are separable.}
        \label{fig:fourpartite-nontrivial}
    \end{subfigure}
    \begin{subfigure}{\linewidth}
        \centering
        \begin{overpic}[width=0.65\linewidth]{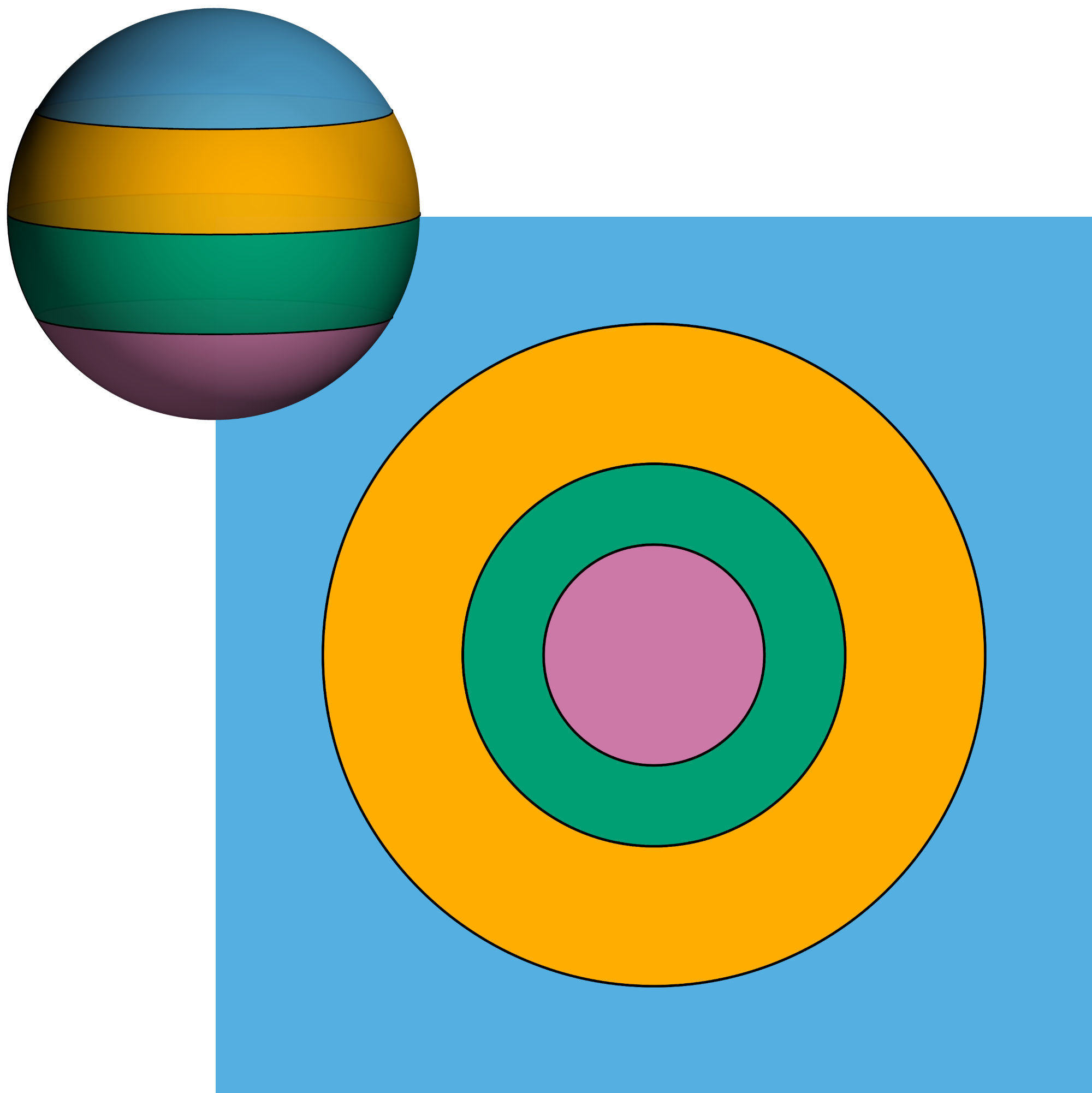}
            \put(90,70){\makebox(0,0){\large A}}
            \put(77,57){\makebox(0,0){\large B}}
            \put(69.5,49.5){\makebox(0,0){\large C}}
            \put(60,40){\makebox(0,0){\large D}}
        \end{overpic}
        \caption{The anomaly-nonseparability connection presented here cannot be applied to this partition, since it does not have any intersection point to which we can reduce the symmetry.}
        \label{fig:fourpartite-trivial}
    \end{subfigure}
    \caption{Distinct four-partitions in (2+1)-D, shown on the sphere and the plane after stereographic projection.}
    \label{fig:fourpartite}
\end{figure}

To illustrate the idea more concretely, we consider a (2+1)-D anomalous system and a symmetric pure state $\ket{\Psi} = \ket{A}\ket{B}\ket{C}\ket{D}$ that is separable for the four-partition $\{A,B,C,D\}$, as in Fig.~\ref{fig:fourpartite-nontrivial}. Following Eq.~\eqref{eq:constructrestriction}, we can reduce the symmetry to region $BCD$. Then, we extract its boundary operator $\Omega_{BCD}(g,h)$ and reduce it to $CD$. Extracting the boundary operator again, we arrive at $\Omega_c(g,h,k)$ (and $\Omega_d(g,h,k)$), supported near the intersection points of $A$, $B$ and $C$ (and $D$), inside $C$ ($D$). The separability of $C$ and $D$ regions require $\langle\Omega_c\rangle\langle\Omega_d\rangle=\langle\Omega_c\Omega_d\rangle\neq0$, which in turns requires $\Omega_{c,d}$ to carry trivial symmetry charge. Finally, since the four-cocycle $[\omega] \in H^4(G, U(1))$ is extracted from the symmetry charges of $\Omega_{c}(g,h,k)$, we conclude that the anomaly $[\omega]$ must be trivial for the symmetric state to be separable for such four-partitions.

Notice that for $d\geq2$ different partitions may have different topology, and not all $(d+2)$-partitions will result in a boundary intersection consisting of only one point. This is essential for our anomaly-nonseparability proof, and we have already seen partitions in (1+1)-D failing to satisfy this condition (See Fig.~\ref{fig:tripartite_circle_disconnected}). In higher dimensions, there are cases with no intersection point at all, such as the partition of the space into parallel ``strips'' in Fig.~\ref{fig:fourpartite-trivial}. For such topologies our argument does not hold, and it is not clear if anomaly still implies nonseparability for them.

Even though our argument is only valid for the specific form of symmetry action Eq.~\eqref{eq:fpform}, the universal nature of the result in $d\leq1$ motivates us to conjecture that the anomaly-nonseparability connection holds as long as the symmetry action $U(g)$ is anomalous. The argument in Sec.~\ref{sec:long-range_entanglement} then implies that the nonseparability is long-ranged, in the sense that a symmetric state cannot be prepared from a $(d+2)$-separable state using a finite-depth local channel.

\section{Mixed strong-weak anomaly}
\label{sec:mixed}
We now consider effects of weak symmetries. If there is only weak symmetry, then the maximally mixed state $\rho \propto \one$ will always be symmetric, since the identity matrix $\one$ commutes with all symmetry elements. This means that weak symmetry alone does not guarantee nontrivial universal structure, which agrees with previous results that weak symmetry alone does not support nontrivial bulk SPT phases~\cite{deGroot2022,ma_average_2023,Ma_intrisic_2023}. The next step, then, is to consider systems with both strong and weak symmetries, with mixed anomalies involving both. In the simplest case, a system has a mixed anomaly under $G \times H$ if it is anomalous under a symmetry representation of the product group $G \times H$, but not under one of its factors, say $G$, alone. If a subgroup, say $H$, is only a weak symmetry, then according to previous results the anomaly remains nontrivial~\cite{ma_average_2023,Ma_intrisic_2023}. However, our preceding result for nonseparability will no longer apply.

Let us first consider a simple example. Consider a periodic qubit chain (1d) with $\Z_2^3$ symmetry generated by
\begin{equation}\label{eq:Z2cubed_symmetries}
    \Xo = \prod_{i\text{ odd}} X_i, \ \Xe = \prod_{i\text{ even}} X_i, \ \CZ = \prod_{i=1}^L CZ_{i,i+1}.
\end{equation}
This symmetry action is anomalous~\cite{yoshida_topological_2016, bultinck_uv_2019}, as can be seen by reducing it to a region $A = \{1, 2, \ldots, 2n\}$ and calculating the boundary operators: $\CZ^{(A)} X_{\text{odd}}^{(A)} \CZ^{(A)} = Z_{2n} X_{\text{odd}}^{(A)}$. In this case, the boundary operator $Z_{2n}$ anticommutes not with the symmetry $\Xo$ that generated it, but with the other one, $\Xe$. This is the signature of a mixed anomaly. In contrast, the boundary operator of the anomalous $\Z_2$ of Sec.\ref{sec:4-qubit_CZX} was charged under the same symmetry $U_{BC}$ that generated it\footnote{A more precise statement of the mixed anomaly would come from calculating the 3-cocycle $\omega$ from Eq.\ref{eq:associativity_reduced_boundary_ops}. Apart from permutation of labels, it is easy to check that $\omega(a, b, c) = \exp(i \pi a_1 b_2 c_3)$, for $a, b, c \in \Z_2^3$. This 3-cocycle is not a coboundary, and it matches with the expected low-energy term in the gauged SPT bulk's Lagrangian, $\frac{1}{2} A_1 A_2 A_3 \in H^3(\Z_2^3, U(1))$~\cite{tantivasadakarn_building_2023}. Importantly, this anomaly is mixed because restricting $\omega$ to any $\Z_2^2 \simeq (\Z_2, \Z_2, 0) \subset \Z_2^3$ subgroup gives the trivial 3-cocycle $\omega|_{\Z_2^2} = 1$.}. In this Section, we will mainly focus on this particular $\Z_2^3$ symmetry for simplicity, even though we expect our discussions to be generalizable to other anomalous symmetries in $1d$.

If a mixed state is strongly symmetric under, say, just the $\Z_2^2$ subgroup, and weakly on the other $\Z_2$, our anomaly-nonseparability result no longer applies. As an example, if we choose the strong $\Z_2^2$ to be generated by $\Xe$ and $\Xo$, the infinite temperature states $\rho_{\infty} = \frac{1}{2^L}(\one + s_1 \Xe)(\one + s_2 \Xo)$ are equal mixtures of fully separable states $\ket{\pm \pm \cdots \pm}$, with appropriate choice of signs $\pm$ to respect $\Xe = s_1$ and $\Xo = s_2$. Hence, as mixed states, they are also fully separable.

Even though mixed anomaly can evade the nonseparability results, recent works on average SPT phases~\cite{ma_average_2023,Ma_intrisic_2023} suggest that states with strong-weak mixed anomaly must still be nontrivial in some sense. So what is nontrivial about the above $\rho_{\infty}$, or more generally states that respect strong and weak symmetries with mixed anomaly? We now show several ways in which strong-weak anomaly constrains its symmetric states. First we show that, similar to the pure state case, mixed anomaly serves as an obstruction of ``symmetry localization'' (Sec.~\ref{sec:obstructlocalization}). This has two consequences: the obstruction of localization for strong symmetry indicates that the mixed state cannot be symmetrically invertible~\cite{Ma_intrisic_2023} (Sec. \ref{sec:sym-noninv}); the obstruction of localization for weak symmetry, on the other hand, indicates that the mutual information and conditional mutual information cannot both be exponentially decaying (Sect. \ref{sec:sw-cmi}) -- this condition is analogous to being ``gapless'' for a mixed quantum state~\cite{sang2024stability}. The general lesson is that while the strong-weak mixed anomaly in general does not constrain quantum entanglement, it does constrain correlations of mixed states in more subtle ways.\footnote{We note that after the first version of our manuscript appeared online, several works\cite{WangLi2024,XuJian2024} also independently studied the consequences of mixed strong-weak anomaly. The constraints obtained here are different from and complementary to those in Refs.~\cite{WangLi2024,XuJian2024}.} Finally, we remind the reader that strong symmetries are also weak, by definition. Thus, all of the following results also apply to states with purely strong anomalous symmetries.

\subsection{Obstruction of symmetry localization}
\label{sec:obstructlocalization}

For pure states, the proposition of ``symmetry localization'' states that if a gapped ground state has a global symmetry $U|\psi\rangle=|\psi\rangle$, then the action of the symmetry restricted to a region $A$ (denoted as $U_A$) on $|\psi\rangle$ is equivalent to a unitary $V_{\partial A}$ acting only on the boundary of $A$:
\begin{equation}
    U_A|\psi\rangle=V_{\partial A}|\psi\rangle.
\end{equation}
Here the boundary region $\partial A$ has a width larger than the correlation length of the state. Symmetry localization is crucial for our understanding of symmetric gapped states, including symmetry-protected and symmetry-enriched topological orders\cite{else_classifying_2014, gauging3}. On the other hand, states with spontaneous symmetry breaking do not satisfy symmetry localization.

We now generalize the definition of symmetry localization to mixed states, for both strong and weak symmetries. We will show that in the presence of mixed anomaly, localization of one symmetry inevitably leads to the spontaneous breaking of the other symmetry. First we start with strong symmetry:
\begin{definition}[Strong symmetry localization]
    \label{def:strong_sym_localization}
    A strong symmetry $U$ is localized on state $\rho$ if, when restricted on a large region $A$, $U_A$ acts as a boundary operator $V_{\partial A}$:
    \begin{equation}
    \label{eq:stronglocalization}
        U_A\cdot\rho=V_{\partial A}\cdot\rho,
    \end{equation}
    where 
    \begin{enumerate}
        \item $V_{\partial A}$ has operator norm (the largest singular value) $\|V_{\partial A}\|=1$;
        \item In $1d$, $V_{\partial A}$ is locality-preserving.  In particular, when $\partial A$ is composed of two disconnected components, $\partial A = L \sqcup R$, we should have $V_{\partial A}=V_L \otimes V_R$. 
        \end{enumerate}
\end{definition}
The norm-$1$ condition on $V_{\partial A}$ (instead of unitarity) is needed for later purposes, and does not seem to bear much physical meaning beyond improving mathematical precision. Furthermore, the locality-preserving condition is not required for dimensions $d \geq 2$ to allow for more exotic states and symmetries that may elude the group cohomology classification outlined in section~\ref{sec:higher_dims} and localize to ``nonlocal'' boundary operators $V_{\partial A}$.

Using an argument similar to those in Sec.~\ref{sec:czx_tripartite_entanglement} and \ref{sec:tripartite_entanglement}, we have the following result:
\begin{theorem}
\label{thm:stronglocalization}
    For a $1d$ state $\rho$ with the anomalous $\Z_2^3$ symmetry Eq.~\eqref{eq:Z2cubed_symmetries}, where $K=\Z_2$ is strong and $G=\Z_2^2$ is weak, if the strong symmetry $K$ is localized on $\rho$, then $\rho$ must break $G$ spontaneously.
\end{theorem}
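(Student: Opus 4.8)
The plan is to run the Else--Nayak boundary-operator argument of Secs.~\ref{sec:czx_tripartite_entanglement} and \ref{sec:tripartite_entanglement}, but now with the twist that the resulting obstruction is resolved by spontaneous breaking rather than being absolute. Take the strong $\Z_2$ to be generated by $\CZ$ and the weak $\Z_2^2$ by $\Xo,\Xe$. Fix a large interval $A$ with its two boundary points $\ell$ and $r$, and let $\CZ^{(A)}$ be the restriction of $\CZ$ to the bulk of $A$. Strong localization (Def.~\ref{def:strong_sym_localization}) provides norm-$1$ boundary operators with $\CZ^{(A)}\rho=(V_\ell\otimes V_r)\rho$. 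Because $\CZ$ is strong, Lemma~\ref{lemma:strongly_symmetric_decomposition} lets me read every identity below as acting on $\supp(\rho)$, i.e.\ on the symmetric vectors in the range of $\rho$; because $\Xo,\Xe$ are weak, $[\Xo,\rho]=[\Xe,\rho]=0$.

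First I would record the anomaly as a pair of commutation relations. A short computation from $X_i\,CZ_{i,i+1}\,X_i=Z_{i+1}\,CZ_{i,i+1}$ gives, after the bulk $Z$'s telescope,
\begin{equation}
    \Xo\,\CZ^{(A)}\,\Xo=Z_r\,\CZ^{(A)},\qquad \Xe\,\CZ^{(A)}\,\Xe=Z_\ell\,\CZ^{(A)}.
\end{equation}
The mixed nature of the anomaly is the crucial input: the boundary operator $Z_r$ produced by conjugation with $\Xo$ is charged under the \emph{other} generator, $\{Z_r,\Xe\}=0$, and symmetrically $\{Z_\ell,\Xo\}=0$. Conjugating the localization equation by $\Xo$ (resp.\ $\Xe$) and using weak invariance of $\rho$ then yields, on $\supp(\rho)$, the factorized identities $V_\ell(Z_r V_r)\rho=(\Xo V_\ell\Xo)(\Xo V_r\Xo)\rho$ and $(Z_\ell V_\ell)V_r\,\rho=(\Xe V_\ell\Xe)(\Xe V_r\Xe)\rho$.

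The argument then proceeds by contradiction: assume $G$ is \emph{not} spontaneously broken on $\rho$. I would use this hypothesis to promote the two boundary identities to \emph{scalar} charge assignments on $\supp(\rho)$, namely $\Xo V_\ell\Xo=c_1 V_\ell$ and $\Xe V_\ell\Xe=c_2\,Z_\ell V_\ell$ with $c_1,c_2\in\{\pm1\}$ (and the matching relations for $V_r$). Intuitively, if the left factor failed to be a scalar multiple of $V_\ell$, the discrepancy would be a $G$-charged operator localized at $\ell$ whose correlation with its partner at $r$ does not decay with $|A|$ --- exactly a long-range order parameter, i.e.\ SSB of $G$. With the scalar assignments in hand the contradiction is immediate: since $\Xo$ and $\Xe$ commute, evaluating $(\Xo\Xe)\,V_\ell\,(\Xe\Xo)$ in the two orders gives $-c_1c_2\,Z_\ell V_\ell$ versus $+c_1c_2\,Z_\ell V_\ell$ on $\supp(\rho)$, the sign arising precisely from the anomalous anticommutation $\Xo Z_\ell\Xo=-Z_\ell$. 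Hence $Z_\ell V_\ell\,\rho=0$, so $V_\ell\,\rho=0$ and therefore $\CZ^{(A)}\rho=(V_\ell\otimes V_r)\rho=0$, impossible for the unitary $\CZ^{(A)}$ with $\rho\neq0$. Thus $G$ must break spontaneously.

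The main obstacle is the middle step --- converting ``$G$ is unbroken'' into the scalar charge assignment for the norm-$1$, generally non-unitary boundary operators. This is where one must fix a precise definition of spontaneous breaking of a weak symmetry (long-range order of a $G$-charged local operator, with vanishing one-point function) and then show that the operator-Schmidt/purification factorization of the two displayed identities (cf.\ Lemma~\ref{lemma:separable_unitary}, applied across the cuts after purifying $\rho$) produces a genuine scalar rather than an operator precisely when there is no such long-range order. Handling the non-invertibility of $V_\ell,V_r$ on $\operatorname{Ker}(\rho)$, and checking that the would-be order parameter built from the factorization discrepancy is both local and $G$-charged, are the technical points that require care; the remaining telescoping and cocycle bookkeeping are identical to those in Sec.~\ref{sec:tripartite_entanglement}.
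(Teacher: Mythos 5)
Your overall strategy is the right one and matches the paper's in spirit: restrict the strong symmetry to an interval, use absence of weak-symmetry breaking to force the boundary operators to carry definite charges, and let the mixed anomaly produce a contradiction. But the step you yourself flag as ``the main obstacle'' --- converting ``$G$ unbroken'' into the scalar assignments $\Xo V_\ell\Xo=c_1V_\ell$ and $\Xe V_\ell\Xe=c_2Z_\ell V_\ell$ on $\supp(\rho)$ --- is precisely where the paper does the actual work, and you leave it as a gesture. The paper's version of this step is concrete: it takes the strong generator to be $\Xo$ (not $\CZ$), decomposes $V_{L/R}=V_{L/R,+}+V_{L/R,-}$ into $\Xe$-charge eigencomponents, uses the fact that $\Xe$ commutes \emph{exactly} with $X_{\mathrm{odd},A}$ (no boundary term) together with weak invariance of $\rho$ to get the clean cross-term identity $(V_{L,+}V_{R,-}+V_{L,-}V_{R,+})\rho=0$, and then multiplies by $V_{L,-}^{\dagger}V_{R,+}^{\dagger}$ and traces to exhibit $V_{L,-}^{\dagger}V_{L,+}$ as a long-range order parameter unless one component annihilates $\rho$. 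Your choice of $\CZ$ as the strong generator removes exactly this handle: \emph{both} weak generators fail to commute with $\CZ^{(A)}$, each producing a boundary $Z$, so your two identities have the form $V_\ell(Z_rV_r)\rho=(\Xo V_\ell\Xo)(\Xo V_r\Xo)\rho$ rather than a pure charge-neutrality statement. If you expand $V_{\ell/r}$ into $\Xo$-charge components there, the $Z_r$ insertion prevents the cross terms from decoupling, and no analogue of the paper's trace argument goes through directly. So the ``WLOG'' is not free: your representative case is strictly harder at the load-bearing step, and that step is not supplied.

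The paper also avoids your endgame entirely. Once the single-irrep property of $V_{L/R}$ under $\Xe$ is established, it conjugates the localization identity by the weak $\CZ$, producing $\rho=(V'_LZ_LV_L^{\dagger})(V'_RZ_RV_R^{\dagger})\rho$ with $V'_{L/R}=\CZ V_{L/R}\CZ^{\dagger}$ carrying the same $\Xe$-charge as $V_{L/R}$; the operator $V'_LZ_LV_L^{\dagger}$ is then a local, $\Xe$-charged operator with $O(1)$ long-range correlation --- that \emph{is} the claimed SSB, read off directly rather than via a sign clash and $V_\ell\rho=0$. Your final contradiction is internally consistent \emph{given} the two scalar assignments, and is an acceptable alternative endgame; but as written the proposal proves the theorem only modulo the unproved factorization step. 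To repair it with least effort, switch the strong generator to $\Xo$ so that one weak generator ($\Xe$) commutes exactly with the restricted symmetry, establish the single-irrep property by the decomposition-and-trace argument, and then use the other weak generator ($\CZ$) to produce the charged boundary $Z$'s.
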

One example of the situation described above when $K$ is generated by $X_{\text{odd}}$, and $G$, by $X_{\text{even}}$ and $U_{CZ}$, is the state $\rho = \ketbra{+}{+}_{\text{odd}} \otimes (\frac{1}{2}[00\cdots0]+\frac{1}{2}[11\cdots1])_{\text{even}}$, where $[\psi] \defeq \ketbra{\psi}{\psi}$, for which $V_{\partial A} = \one$ and $X_{\text{even}}$ is spontaneously broken from a weak symmetry to nothing.

We expect this result to hold for general $1d$ symmetries with mixed strong-weak anomaly. But for simplicity we will only prove the statement for the special $\Z_2^3$ case.
\begin{proof}
    Without loss of generality, we choose the strong $\Z_2$ symmetry to be generated by $\Xo$, and the weak $\Z_2^2$ symmetry by $\Xe$ and $\CZ$ (see Eq.~\eqref{eq:Z2cubed_symmetries}). Let us denote the restricted action of $\Xo$ on a segment $A$ as $X_{{\rm odd},A}$, and the corresponding boundary operator $V_{\partial A}=V_LV_R$. For convenience we can always define the boundary so that $X_{{\rm odd},A}$ and $V_{L/R}$ commute.
    
    Now consider the effect of the weak symmetry $\Xe$. First, we show that as long as $\Xe$ is not spontaneously broken, $V_L$ (and $V_R$) can at most transform as a single irrep of $\Xe$ under conjugation. Suppose on the contrary $V_{L/R}=V_{L/R,+}+V_{L/R,-}$ where $V_{L/R,+}$ is invariant under conjugation by $\Xe$ and $V_{L/R,-}$ picks up a minus sign, and none of these operators annihilate $\rho$. Since $X_{{\rm odd},A}\cdot\rho$ commutes with $\Xe$, we must have
    \begin{equation}
        (V_{L,+}V_{R,-}+V_{L,-}V_{R,+})\rho=0.
    \end{equation}
    By assumption the operators $V_{L/R,\pm}$ do not annihilate $\rho$ individually. Since $V_{L,\pm}$ and $V_{R,\pm}$ act on different regions, their bilinear product should not annihilate $\rho$ either. We can then left-multiply the above equation with $V_{L,-}^{\dagger}V_{R,+}^{\dagger}$ to obtain
    \begin{eqnarray}
        \Tr [(V_{L,-}^{\dagger}V_{L,+})(V_{R,+}^{\dagger}V_{R,-})\rho]&=&-\Tr[(V_{L,-}^{\dagger}V_{L,-})(V_{R,+}^{\dagger}V_{R,+})\rho] \nonumber \\
        &=&-c<0,
    \end{eqnarray}
    where $c$ is an $O(1)$ number. The above equation indicates spontaneous breaking of the $\Xe$ symmetry, with $V_{L,-}^{\dagger}V_{L,+}$ being the local order parameter. Therefore in the absence of spontaneous symmetry breaking, $V_{L/R}$ should transform as a single irrep under $\Xe$.
    
    We can now consider the effect of the weak $\Z_2$ generated by $\CZ$:
    \begin{eqnarray}
    \label{eq:ODLRanomaly}
        \rho&=&V_L X_{\text{odd},A} V_R\rho=V_L X_{\text{odd},A} V_R\CZ\rho\CZ^{\dagger} \nonumber \\
        &=&V'_LZ_L X_{\text{odd},A} Z_RV'_R\rho \nonumber \\
        &=&(V'_LZ_LV^{\dagger}_L)(V'_RZ_RV^{\dagger}_R)\rho,
    \end{eqnarray}
    where $V'_{L/R}:=\CZ V_{L/R}\CZ^{\dagger}$. $Z_{L,R}$ are the Pauli-$Z$ operators at the boundary of $A$, transforming nontrivially under $\Xe$ -- the manifestation of the mutual anomaly. Since $\CZ$ commutes with $\Xe$, $V'_{L/R}$ carries the same $\Xe$ irrep as $V_{L/R}$. Therefore Eq.~\eqref{eq:ODLRanomaly} indicates spontaneous breaking of the $\Xe$ symmetry, with $V'Z V^{\dagger}$ being the local order parameter. 

    We therefore conclude that in the absence of spontaneous breaking of the weak symmetries, the strong symmetry (which has a mixed anomaly with the weak ones) cannot be localized.
\end{proof}

In Sec.~\ref{sec:sym-noninv} we will show that the non-localization of strong symmetry indicates that the anomalous state cannot be ``symmetrically invertible''.

Now we turn to weak symmetries:
\begin{definition}[Weak symmetry localization]
    \label{def:weak_sym_localization}
    A weak symmetry $W$ is localized on state $\rho$ if, when restricted on a large region $A$, $W_A$ acts as a boundary quantum channel $\E_{\partial A}$:
    \begin{equation}
    \label{eq:weaklocalization}
        W_A\rho W_A^{\dagger}=\E_{\partial A}[\rho],
    \end{equation}
    where the channel $\E_{\partial A}$ acts nontrivially only near the boundary, and is locality-preserving in $1d$: $\E_{\partial A}=\E_L \otimes \E_R$ for $A$ an interval with $\partial A = L \sqcup R$.
\end{definition}

The strong-weak mixed anomaly also serves as an obstruction of the weak symmetry localization:
\begin{theorem}
\label{thm:weaknonlocalization}
    For a $1d$ state $\rho$ with the anomalous $\Z_2^3$ symmetry Eq.~\eqref{eq:Z2cubed_symmetries}, where $K=\Z_2$ is strong and $G=\Z_2^2$ is weak, if the weak symmetry $G$ is localized on $\rho$, then $\rho$ must break the strong symmetry $K$ spontaneously.
\end{theorem}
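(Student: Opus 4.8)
\emph{Proof proposal.} The plan is to mirror the direct order-parameter argument used for Theorem~\ref{thm:stronglocalization}, but with the roles of strong and weak exchanged: rather than localizing the strong symmetry to boundary \emph{operators} and reading off a weak-charged order parameter, I would localize the two weak generators to boundary \emph{channels} and read off a \emph{strong}-charged order parameter. Without loss of generality fix the strong $\Z_2$ to be generated by $\Xo$ and the weak $\Z_2^2$ by $\Xe$ and $\CZ$, so that $\Xo\rho=\lambda\rho$ with $\lambda=\pm1$, while $\Xe$ and $\CZ$ both localize. On an interval $A$ with two boundary points, the localization hypothesis (Def.~\ref{def:weak_sym_localization}) reads $\Xe^{(A)}\rho\,\Xe^{(A)}=\E^e_L\otimes\E^e_R[\rho]$ and $\CZ^{(A)}\rho\,\CZ^{(A)}=\E^c_L\otimes\E^c_R[\rho]$, each boundary channel acting only near its endpoint.

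First I would isolate the anomaly as an operator identity. The restricted weak generators fail to commute by exactly a strong-charged boundary operator:
\begin{equation}
    \CZ^{(A)}\Xe^{(A)}\CZ^{(A)}\Xe^{(A)} = Z^o_L\,Z^o_R,
\end{equation}
where $Z^o_{L,R}$ is the Pauli-$Z$ on the odd site at each boundary of $A$. Such an operator is charged under the strong $\Xo$, which it anticommutes with, but is neutral under both weak generators (it commutes with $\CZ$, which is diagonal, and with $\Xe$). This is the $\Z_2^3$ avatar of the mixed anomaly, and the analogue of the $Z_LZ_R$ factor produced in Eq.~\eqref{eq:ODLRanomaly}.

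Next I would feed the two localizations into this identity to localize the charge defect to each endpoint separately. Because the left-hand side is built from operators whose conjugation on $\rho$ factorizes as $\E_L\otimes\E_R$, the failure of $\E^c$ and $\E^e$ to commute near the left boundary should be implementable as conjugation by $Z^o_L$, and similarly by $Z^o_R$ at the right, so that each endpoint carries a nontrivial, well-localized $\Xo$-charge that cannot be removed. The global strong-symmetry selection rule $\Xo\rho=\lambda\rho$ then ties the two endpoints together: a single $Z^o_L$ cannot be screened locally, so it must be correlated with its partner $Z^o_R$ at arbitrary distance. Concretely, I would show that the appropriate strong-symmetry order parameter, the fidelity / R\'enyi-2 correlator of $Z^o_L$ and $Z^o_R$, is bounded below by an $O(1)$ constant independent of $|A|$, which is precisely the statement that the strong $\Z_2$ generated by $\Xo$ is spontaneously broken.

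The hard part will be step three: weak localization furnishes boundary \emph{channels} $\E_{L,R}$ rather than boundary \emph{unitaries}, so both ``the channels fail to commute by conjugation with $Z^o_L$'' and the passage from this to genuine long-range order of a charged operator need to be made precise for CPTP maps. I expect the cleanest fix is to lift the problem to a symmetric purification: by Corollary~\ref{corollary:strong_symmetry_extension} any purification $\ket{\Psi}$ is invariant under $\Xo\otimes\one_\anc$, and a weakly symmetric, locality-preserving channel dilates to a locality-preserving boundary \emph{unitary} on system-plus-ancilla. On the purification the argument reduces to the pure-state / Else--Nayak operator algebra of Sec.~\ref{sec:tripartite_entanglement}, where the charged boundary operator and its long-range order are manifest, after which one descends back to $\rho$ by tracing out the ancilla. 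Controlling the ancilla action of the weak symmetry during this dilation, and confirming that the descended order parameter diagnoses the strong (rather than merely the weak) symmetry, is the main technical obstacle, and is where I would concentrate the effort.
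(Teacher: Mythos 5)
Your proposal contains the same three essential ingredients as the paper's proof: (i) the mixed anomaly is exposed as a boundary Pauli-$Z$, charged under the strong $\Xo$, that appears when a restricted weak generator is conjugated by the other weak generator; (ii) the boundary channels furnished by weak localization must create a \emph{definite} strong charge, which you propose to prove by dilating to a purification where the channel becomes a boundary unitary --- this is exactly the paper's move (it considers the dilated form $W_A\ket{\Psi_\rho}=V_{\partial A}\ket{\Psi_\rho}$ and reuses the argument from Thm.~\ref{thm:stronglocalization}); and (iii) the conclusion is read off as long-range order of the charged boundary operator, i.e.\ strong-to-weak SSB of $\Xo$. So the skeleton is right.

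The one place you diverge, and where your route acquires an avoidable difficulty, is that you invoke the localization of \emph{both} weak generators and try to chain the two boundary channels through the group-commutator identity $\CZ^{(A)}\Xe^{(A)}\CZ^{(A)}\Xe^{(A)}=Z^o_LZ^o_R$. The localization hypothesis (Def.~\ref{def:weak_sym_localization}) only tells you how conjugation by $\Xe^{(A)}$ or $\CZ^{(A)}$ acts \emph{on $\rho$ itself}, not on $\E[\rho]$, so composing the two localizations requires an additional covariance property of the channels that you would have to establish --- precisely the step you flag as ``the hard part.'' The paper sidesteps it entirely: it localizes only $\Xe$, and for $\CZ$ it uses nothing but the \emph{global} weak invariance $\rho=\CZ\rho\CZ^\dagger$, inserted inside the single localization identity (Eq.~\eqref{eq:swssb}). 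Commuting the global $\CZ$ past the restricted $X_{\text{even},A}$ then produces the charged $Z_LZ_R$ directly, yielding $\rho=Z_LZ_R\left(\E'_L\circ\E_L\circ\E'_R\circ\E_R[\rho]\right)Z_LZ_R$ with a boundary channel that preserves the strong $\Xo$ charge --- which is verbatim the statement of strong-to-weak SSB. If you replace your channel-commutator step with this one-sided insertion of the global $\CZ$, the rest of your argument goes through, and the purification is needed only for ingredient (ii), where it works exactly as you describe.
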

An example the above when $K = \langle X_{\text{odd}}\rangle$ and $G = \langle X_{\text{even}}, U_{CZ}\rangle$, is the state $\rho = \frac{1}{2^L}(\one + X_{\text{odd}})$. When acting on $\rho$, the boundary channel of $X_{\text{even}}$ is $\E^{\text{even}}_{\partial A} = \one$, and the one of $U_{CZ}$ is conjugation by $\prod_{\langle i,j\rangle \in \partial A} CZ_{ij}$. Indeed, $\rho$ is the prototypical state with a strong symmetry $X_{\text{odd}}$ that is spontaneously broken to a weak symmetry \cite{SSBtoappear, sala_spontaneous_2024}.

Again we expect this result to hold for general $1d$ symmetries with mixed strong-weak anomaly, but will prove the statement only for the special $\Z_2^3$ case.
\begin{proof}
    Without loss of generality, we choose the strong $\Z_2$ symmetry to be generated by $\Xo$, and the weak $\Z_2^2$ symmetry by $\Xe$ and $\CZ$ (see Eq.~\eqref{eq:Z2cubed_symmetries}). Consider the restriction of $W:=\Xe$ on a large segment $A$ (denoted as $X_{\text{even},A}$), which by assumption satisfies Eq.~\eqref{eq:weaklocalization} (again for convenience we can choose the boundary region so that $X_{\text{even},A}$ and $\E_{L/R}$ commute). By considering the dilated (or purified) form of Eq.~\eqref{eq:weaklocalization} $W_A|\Psi_{\rho}\rangle=V_{\partial A}|\Psi_{\rho}\rangle$, we can use the same argument in the proof of Thm.~\ref{thm:stronglocalization} to show that $\E_{L/R}$ can at most create a definite $\Z_2$ charge ($+1$ or $-1$) of the strong symmetry $\Xo$. We then consider the effect of the weak $\CZ$ symmetry:
    \begin{eqnarray}
    \label{eq:swssb}
        \rho&=&X_{\text{even},A}\E_L\circ\E_R[\rho]X_{\text{even},A} \nonumber \\
        &=&X_{\text{even},A}\E_L\circ\E_R[\CZ\rho\CZ^\dagger]X_{\text{even},A} \nonumber \\
        &=&X_{\text{even},A}Z_LZ_R\E'_L\circ\E'_R[\rho]X_{\text{even},A}Z_LZ_R \nonumber \\
        &=&Z_LZ_R\left(\E'_L\circ \E_L \circ \E'_R\circ \E_R[\rho]\right) Z_LZ_R,
    \end{eqnarray}
    where the modified channel 
    \begin{equation}
        \E'_{L/R}[\rho]:=\CZ^{\dagger}\E_{L/R}[\CZ\rho\CZ^{\dagger}]\CZ
    \end{equation} 
    creates the same strong $\Xo$ charge as $\E_{L/R}$. This means that the local channel $\E'_{L/R}\circ\E_{L/R}$ preserves the strong $\Xo$ symmetry, while $Z_{L/R}$ on $\partial A$ is charged under $\Xo$. Eq.~\ref{eq:swssb} is thus the statement that the strong $\Xo$ symmetry is spontaneously broken to a weak symmetry, or nothing\cite{SSBtoappear}.

    We therefore conclude that in the absence of spontaneous breaking of the strong symmetry, the weak symmetry which has a mixed anomaly with the strong one cannot be localized.
\end{proof}

In Sec.~\ref{sec:sw-cmi} we will show that the non-localization of weak symmetry indicates that, in $1d$, anomalous state cannot be ``gapped'', which is analogous to the statement in pure states.

\subsection{Strong symmetry localization and symmetric invertibility}
\label{sec:sym-noninv}

First we review the notion of \emph{symmetric invertibility} from Ref.~\cite{Ma_intrisic_2023}:

\begin{definition}[Symmetric invertibility]\label{def:sym_inv}
   A mixed state on the physical Hilbert space $\rho\in\Hilb_\sys$ is \emph{symmetrically invertible} if there exists an ancillary mixed states (the ``inverse'' state) $\rho_I, \in \Hilb_\anc$, and a pair of symmetric finite-depth local channels $\E_T, \E_P : Q(\Hilb_\sys \otimes \Hilb_\anc) \to Q(\Hilb_\sys \otimes \Hilb_\anc)$ such that
   \begin{equation}
    \label{eq:invertible}
       \rho \otimes \rho_I \xrightarrow{\E_T} \trivialstate \xrightarrow{\E_P} \rho \otimes \rho_I,
    \end{equation}
    where $\trivialstate \in \Hilb_\sys \otimes \Hilb_\anc$ is a symmetric pure product state.
\end{definition}
We now explain the details of the definition above. First, the ancillary Hilbert space $\Hilb_\anc$ and the enlarged Hilbert space to which the local channels are purified have to be compatible with the geometry of the physical Hilbert space $\Hilb_\sys$. Here, this means that each on-site Hilbert space of $\Hilb_\sys$ is individually enlarged. Second, we extend the symmetry action from $\Hilb_\sys$ to $\Hilb_\sys \otimes \Hilb_\anc$ in a faithful manner: for a symmetry element $g$ we extend the symmetry action to $U_g\to U_g\otimes U_g^{\mathcal{A}}$, where $U_g^{\mathcal{A}}$ is a finite-depth unitary acting on the ancilla. Third, each gate composing the channels $\E_T$ and $\E_P$ is symmetric under this extended symmetry in accordance to the strong and weak parts of the symmetry as explained in Sec. \ref{sec:intro-strong_and_weak_symmetries}.

For pure states, an invertible state can be viewed as a unique gapped ground state of some local Hamiltonian, which forms the basis of classifying unique gapped ground states using invertible topological quantum field theory~\cite{invertible2,invertible3}. For mixed state SPT, invertibility is also needed to make the bulk topological invariants well-defined~\cite{Ma_intrisic_2023}.

In Ref.~\cite{Ma_intrisic_2023} it was argued that symmetric invertibility guarantees strong symmetry localization. We review this result below:
\begin{theorem}
    For a symmetric invertible state $\rho$, a strong symmetry $g$ can always be localized (in the sense of Def.~\ref{def:strong_sym_localization}).
\end{theorem}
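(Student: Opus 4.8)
The plan is to transport the (trivially localized) symmetry action of the reference product state $\trivialstate$ through the finite-depth channel that prepares $\rho$, exploiting the invertibility hypothesis. Concretely, I would start from the preparation map $\E_P$ in Eq.~\eqref{eq:invertible}: since $\E_P$ is a symmetric finite-depth local channel, dilate it to a symmetric finite-depth local unitary $V$ on $\Hilb_\sys \otimes \Hilb_\anc \otimes \Hilb_{\rm env}$, so that $\rho \otimes \rho_I = \Tr_{\rm env}[V\,\ketbra{\Theta}{\Theta}\,V^\dagger]$ with $\ket{\Theta} = \trivialstate \otimes \ket{0}_{\rm env}$. Because $\E_P$ is \emph{strongly} symmetric, its Kraus operators commute with the extended symmetry, so the dilation $V$ commutes with $\hat U(g) := U(g)_\sys \otimes U^\anc(g) \otimes \one_{\rm env}$, which acts on-site on the product state $\ket{\Theta}$ and localizes trivially: $\hat U_A(g)\ket{\Theta} = \lambda_A(g)\ket{\Theta}$ for a large region $A$, i.e.\ with a pure-phase boundary operator.

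Next I would establish genuine localization on the purification $\ket{\Phi} := V\ket{\Theta}$ of $\rho \otimes \rho_I$. The \emph{dressed} restriction $U^{\rm loc}_A(g) := V\,\hat U_A(g)\,V^\dagger$ inherits the trivial eigenvalue, $U^{\rm loc}_A(g)\ket{\Phi} = \lambda_A(g)\ket{\Phi}$, and by finite-depth locality of $V$ (depth $D = O(1)$) it is supported on $A$ thickened by a collar of width $O(D)$. Set $V_{\partial A}(g) := U^{\rm loc}_A(g)\,\hat U_A(g)^{-1}$. Because $V$ commutes with the global $\hat U(g) = \hat U_A(g)\hat U_{\comp{A}}(g)$, the dressed and bare restrictions have the same product with their respective complements, so $V_{\partial A}(g)$ admits expressions supported both in $A\cup\partial A$ and in $\comp{A}\cup\partial A$; hence it is supported in a collar about $\partial A$. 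This gives $\hat U_A(g)\ket{\Phi} = \lambda_A(g)\,V_{\partial A}(g)^\dagger\ket{\Phi}$, so the genuine restricted symmetry localizes to the boundary operator $\lambda_A(g) V_{\partial A}(g)^\dagger$, which, being a product of unitaries, has operator norm $1$. In $1d$, with $\partial A = L \sqcup R$ two well-separated points, the collar disconnects and $V_{\partial A}(g) = V_L(g) \otimes V_R(g)$ automatically, yielding the locality-preserving property demanded by Def.~\ref{def:strong_sym_localization}.

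The remaining step is to descend from the combined state $\rho \otimes \rho_I$ to $\rho$ itself. Here I would invoke the strong-symmetry condition crucially: since $g$ is strong, $\rho$ carries a definite charge and, by Corollary~\ref{corollary:strong_symmetry_extension}, its purification is symmetric under a \emph{trivially}-extended action, so the charge cannot be transferred into the inverse-state ancilla $\Hilb_\anc$ or the Stinespring environment. This rigidity should force $V_{\partial A}(g)$ to be choosable as $V_{\partial A}^\sys(g) \otimes \one_\anc$ supported on the physical boundary collar alone, so that tracing out $\Hilb_\anc$ and $\Hilb_{\rm env}$ preserves the localization and gives $U_A(g)\,\rho = V_{\partial A}^\sys(g)\,\rho$ with $V_{\partial A}^\sys$ still norm-$1$ and factorizing across $\partial A$.

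The main obstacle is precisely this descent: proving that the combined-system boundary operator factorizes so that its physical part alone localizes the symmetry on $\rho$, rather than acting jointly on system and inverse-state degrees of freedom near $\partial A$. This is exactly where the strong/weak distinction enters — the definite, unsharable charge of a strong symmetry is what permits the restriction to $\Hilb_\sys$, whereas a merely weak symmetry could distribute its boundary action between $\sys$ and $\anc$ and would only localize to a boundary \emph{channel}, as in Def.~\ref{def:weak_sym_localization}. A secondary technical point, handled by the finite-depth structure just as in the derivation leading to Eq.~\eqref{eq:associativity_reduced_boundary_ops}, is confirming that the dressed and bare restrictions differ only by the genuinely boundary-supported $V_{\partial A}(g)$, so that the localized operator is truly local near $\partial A$.
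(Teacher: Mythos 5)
Your construction of the boundary operator on the purification is sound and parallels the paper's, but the proof has a genuine gap exactly where you flag it: the descent from $\rho\otimes\rho_I$ (or its purification $\ket{\Phi}=V\ket{\Theta}$) down to $\rho$ alone. The state $\ket{\Phi}$ produced by the preparation unitary is in general entangled between $\Hilb_\sys$ and $\Hilb_\anc\otimes\Hilb_{\rm env}$, so from $\hat U_A(g)\ket{\Phi}=\lambda_A(g)V_{\partial A}(g)^\dagger\ket{\Phi}$ you cannot extract $U_A(g)\,\rho = V^\sys_{\partial A}(g)\,\rho$: already the left-hand side fails, since $\Tr_{\anc,{\rm env}}[(U_A\otimes U^\anc_A\otimes\one)\ketbra{\Phi}{\Phi}]$ equals $U_A$ applied to $\Tr_{\anc,{\rm env}}[(\one\otimes U^\anc_A\otimes\one)\ketbra{\Phi}{\Phi}]$, which is not proportional to $\rho$ when $\ket{\Phi}$ is entangled across that cut. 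Your proposed fix --- that the definite charge of a strong symmetry ``should force'' $V_{\partial A}(g)$ to factorize as $V^\sys_{\partial A}(g)\otimes\one$ --- is an assertion, not an argument; Corollary~\ref{corollary:strong_symmetry_extension} controls the \emph{global} trivially-extended action on a purification, not the \emph{restricted} action $U^\anc_A$ or the support of the dressed boundary operator, which generically does spill onto ancilla and environment sites inside the lightcone of $\partial A$.

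The paper closes this gap by running the argument in the opposite direction, using the trivialization channel $\E_T$ rather than the preparation channel $\E_P$. Dilating $\E_T$ to $U_T$ gives $U_T(\rho\otimes\rho_I\otimes\ketbra{0}{0}_a)U_T^\dagger = \ketbra{\psi_{\rm Trivial}}{\psi_{\rm Trivial}}\otimes\rho_a$, and the boundary-operator identity $(g_A\otimes g^\anc_A)\,\sigma = \tilde V_{\partial A}\,\sigma$ with $\tilde V_{\partial A}=(g_Ag^\anc_A)U_T^\dagger(g_Ag^\anc_A)^\dagger U_T$ then acts on the state $\sigma=\rho\otimes\rho_I\otimes\ketbra{0}{0}_a$, which is a tensor \emph{product} between $\Hilb_\sys$ and $\Hilb_\anc\otimes\Hilb_a$. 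Because of that product structure the partial trace is harmless: the $g^\anc_A$ factor on the left contributes only a scalar, and $\Tr_{\anc,a}[\tilde V_{\partial A}(\one_\sys\otimes\rho_I\otimes\ketbra{0}{0}_a)]$ is a well-defined operator supported near $\partial A$. In other words, what makes the descent work is not a factorization of the boundary \emph{operator} (which need not hold) but a factorization of the \emph{state} it acts on --- and that is available only on the trivial-product-state side of the invertibility relation. If you reroute your argument through $\E_T$, the rest of your construction goes through essentially unchanged.
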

\begin{proof}
    Consider a dilated form of $\E_T$ in Eq.~\eqref{eq:invertible} (with the ancilla $a$):
    \begin{equation}
       U_T \rho\otimes\rho_I\otimes|0\rangle\langle0|_a U_T^{\dagger}=|\psi_{\rm Trivial}\rangle\langle \psi_{\rm Trivial}|\otimes \rho_a.
    \end{equation}
    For the trivial product state $|\psi_{\rm Trivial}\rangle$, the restricted symmetry action $g_A\otimes g^{\mathcal{A}}_A$ acts trivially, so we have
    \begin{equation}
        g_A\otimes g^{\mathcal{A}}_A\cdot\rho\otimes\rho_I\otimes|0\rangle\langle0|_a=\tilde{V}_{\partial A}\cdot \rho\otimes\rho_I\otimes|0\rangle\langle0|_a,
    \end{equation}
    where $\tilde{V}_{\partial A}=(g_Ag^{\mathcal{A}}_A)U^{\dagger}_T(g_Ag^{\mathcal{A}}_A)^{\dagger}U_T$ acts nontrivially on the boundary region $\partial A$ because $U_T$ commutes with the strong symmetry in the bulk of $A$.  Tracing out $\mathcal{H}_{\mathcal{A}}\otimes \mathcal{H}_a$, we obtain the desired form of symmetry localization Eq.~\eqref{eq:stronglocalization}.
\end{proof}

If we combine the above theorem with Thm~\ref{thm:stronglocalization}, we conclude that a $1d$ mixed state with mixed strong-weak anomaly cannot be symmetrically invertible as long as the weak symmetry is not spontaneously broken. Then we notice that spontaneous breaking of weak symmetry also leads to non-invertibility, since the existence of $\E_P$ in Eq.~\eqref{eq:invertible} requires all connected correlation functions to vanish. We therefore conclude that
\begin{corollary}
    A $1d$ mixed state $\rho$ with mixed strong-weak anomaly cannot be symmetrically invertible.
\end{corollary}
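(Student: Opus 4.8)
The plan is to argue by contradiction, chaining together the two obstruction-to-localization results established above with a basic property of the preparation channel $\E_P$. Assume that $\rho$ is symmetrically invertible in the sense of Definition~\ref{def:sym_inv}. The first step is to invoke the theorem reviewed immediately above (from Ref.~\cite{Ma_intrisic_2023}), which states that symmetric invertibility forces the strong symmetry $K$ to be localized in the sense of Definition~\ref{def:strong_sym_localization}, i.e.\ $K_A\cdot\rho = V_{\partial A}\cdot\rho$ for some boundary operator $V_{\partial A}$. This converts the abstract invertibility hypothesis into the concrete structural input that the next theorem consumes.

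The second step is to feed this localization into Theorem~\ref{thm:stronglocalization}. Since $K$ and the weak group $G$ carry a mixed anomaly, localization of $K$ forces $\rho$ to spontaneously break $G$. Concretely, the proof of that theorem supplies a \emph{local} order parameter (schematically $V' Z V^\dagger$, supported near a boundary point) whose connected two-point function between the two boundary components stays bounded away from zero at arbitrarily large separation.

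The third and final step closes the loop by showing this long-range order is itself incompatible with invertibility. Invertibility requires $\rho\otimes\rho_I = \E_P[\trivialstate]$ (Eq.~\eqref{eq:invertible}), with $\trivialstate$ a symmetric pure product state and $\E_P$ a finite-depth local channel. A product state has strictly zero connected correlations, and a finite-depth local channel can only dress local operators within a bounded light cone, so it cannot manufacture long-range connected correlations. Hence every connected correlator in $\rho\otimes\rho_I$ decays at large distance; since the order parameter acts only on $\Hilb_\sys$, tensoring with $\rho_I$ leaves its correlator unchanged, and the same conclusion holds in $\rho$ itself. This contradicts the spontaneous breaking established in Step~2, so $\rho$ cannot be symmetrically invertible.

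The hard part will be Step~3: making rigorous the claim that $\E_P$ cannot generate long-range order. This requires a Lieb--Robinson / light-cone bound for finite-depth local channels, together with care that the operator furnished by Theorem~\ref{thm:stronglocalization} is genuinely a bounded local observable (not an unbounded or nonlocal object) so that it qualifies as a legitimate order parameter. One should also confirm that the two logical branches --- weak symmetry unbroken, handled by Steps~1--2, and weak symmetry broken, handled by Step~3 --- together exhaust all cases, which is precisely what the single contradiction chain above accomplishes.
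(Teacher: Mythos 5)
Your proposal is correct and follows essentially the same route as the paper: invertibility forces strong-symmetry localization (the reviewed theorem from Ref.~\cite{Ma_intrisic_2023}), Theorem~\ref{thm:stronglocalization} then forces spontaneous breaking of the weak symmetry, and the resulting long-range connected correlations are incompatible with preparing $\rho\otimes\rho_I$ from a product state by the finite-depth channel $\E_P$. The paper phrases this as two exhaustive cases (weak symmetry unbroken vs.\ broken) rather than a single contradiction chain, but the logical content is identical.
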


The above {result} covers both mixed strong-weak anomaly and purely strong anomaly. Purely weak symmetries do not have nontrivial anomaly, as discussed in Refs.~\cite{deGroot2022,ma_average_2023,LeeYouXu2022,ZhangQiBi2022,Ma_intrisic_2023}. For example, the maximally mixed state is weakly symmetric under any symmetry\footnote{See, however, Refs.~\cite{Hsin2023,Zang2023} for recent discussions on ``anomaly'' defined in different ways, which can be nontrivial even for purely weak symmetries.}.

It is natural to expect that our result can be extended to higher dimensions. However in higher dimensions we do not have a controlled argument, which has to do with the fact that the Else-Nayak mechanism for t'Hooft anomaly has not been rigorously developed. In this work we will leave the higher dimensional version as a conjecture.

Recall that, for pure states, a non-invertible state always come with some sharp feature, such as spontaneous symmetry breaking (SSB), gapless low-energy spectrum or topological order. From this perspective, the non-invertibility of the seemingly innocuous mixed state $\rho_{\infty}$ may come as a surprise. Even though we do not fully understand the landscape of non-invertible mixed states at this point, the simple $\rho_{\infty}$ state does come with a sharp feature: it spontaneously breaks the strong $\Z_2\times\Z_2$ symmetry  down to only a weak symmetry, {akin to an SSB cat state}. Such unusual pattern of spontaneous symmetry-breaking, possible only in mixed quantum states, has been proposed recently in Refs.~\cite{LeeJianXu2023, Ma_intrisic_2023} and was studied in more detail in Ref.~\cite{SSBtoappear}.

\subsection{Weak symmetry localization and gapped Markovian states}
\label{sec:sw-cmi}

First we define the analogue of ``gapped states'' in mixed states:
\begin{definition}[Gapped Markovian states] 
    \label{def:gapped}
    Consider a disk region $A$, a buffer region $B$ surrounding $A$ of width $w$ and the rest of the system $C$, so that $\mathrm{dist}(A,C) \geq w$ (See Fig.~\ref{fig:gapped_markovian_def}). We call $\rho$ a \textit{gapped Markovian state} if for any such partitioning of space, we have
    \begin{enumerate}
        \item exponentially decaying mutual information (MI) $I(A:C) = O(e^{-w/\xi_{\text{MI}}})$ for some finite correlation length $\xi_{\text{MI}}$; and
        \item exponentially decaying conditional mutual information (CMI) $I(A:C|B) = O(e^{-w/\xi_{\text{CMI}}})$ for some finite ``Markov length'' $\xi_{\text{CMI}}$.
    \end{enumerate}
\end{definition}

\begin{figure}[th]
    \centering
    \def\svgwidth{\linewidth}
\begingroup%
  \makeatletter%
  \providecommand\color[2][]{%
    \errmessage{(Inkscape) Color is used for the text in Inkscape, but the package 'color.sty' is not loaded}%
    \renewcommand\color[2][]{}%
  }%
  \providecommand\transparent[1]{%
    \errmessage{(Inkscape) Transparency is used (non-zero) for the text in Inkscape, but the package 'transparent.sty' is not loaded}%
    \renewcommand\transparent[1]{}%
  }%
  \providecommand\rotatebox[2]{#2}%
  \newcommand*\fsize{\dimexpr\f@size pt\relax}%
  \newcommand*\lineheight[1]{\fontsize{\fsize}{#1\fsize}\selectfont}%
  \ifx\svgwidth\undefined%
    \setlength{\unitlength}{231.69452337bp}%
    \ifx\svgscale\undefined%
      \relax%
    \else%
      \setlength{\unitlength}{\unitlength * \real{\svgscale}}%
    \fi%
  \else%
    \setlength{\unitlength}{\svgwidth}%
  \fi%
  \global\let\svgwidth\undefined%
  \global\let\svgscale\undefined%
  \makeatother%
  \begin{picture}(1,0.40291719)%
    \lineheight{1}%
    \setlength\tabcolsep{0pt}%
    \put(0,0){\includegraphics[width=\unitlength,page=1]{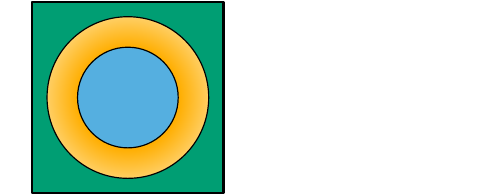}}%
    \put(0.2649139,0.18886319){\color[rgb]{0,0,0}\makebox(0,0)[t]{\lineheight{0.80000001}\smash{\begin{tabular}[t]{c}$A$\end{tabular}}}}%
    \put(0.35339794,0.29102078){\color[rgb]{0,0,0}\makebox(0,0)[t]{\lineheight{0.80000001}\smash{\begin{tabular}[t]{c}$B$\end{tabular}}}}%
    \put(0.41248512,0.35010777){\color[rgb]{0,0,0}\makebox(0,0)[t]{\lineheight{0.80000001}\smash{\begin{tabular}[t]{c}$C$\end{tabular}}}}%
    \put(0,0){\includegraphics[width=\unitlength,page=2]{gapped_markovian_def.pdf}}%
    \put(0.12907221,0.20734866){\color[rgb]{0,0,0}\makebox(0,0)[t]{\lineheight{0.80000001}\smash{\begin{tabular}[t]{c}$w$\end{tabular}}}}%
    \put(0.96688686,0.00163133){\color[rgb]{0,0,0}\makebox(0,0)[t]{\lineheight{0.80000001}\smash{\begin{tabular}[t]{c}$w$\end{tabular}}}}%
    \put(0,0){\includegraphics[width=\unitlength,page=3]{gapped_markovian_def.pdf}}%
    \put(0.67338928,0.10045887){\color[rgb]{0.56470588,0.15686275,0.61568627}\makebox(0,0)[rt]{\lineheight{0.80000001}\smash{\begin{tabular}[t]{r}MI\end{tabular}}}}%
    \put(0.73316027,0.18005851){\color[rgb]{0.74509804,0.28235294,0.23529412}\makebox(0,0)[lt]{\lineheight{0.80000001}\smash{\begin{tabular}[t]{l}CMI\end{tabular}}}}%
    \put(0,0){\includegraphics[width=\unitlength,page=4]{gapped_markovian_def.pdf}}%
    \put(0.61269196,0.36355832){\color[rgb]{0,0,0}\makebox(0,0)[lt]{\lineheight{0.80000001}\smash{\begin{tabular}[t]{l}Correlation\end{tabular}}}}%
    \put(0.68253129,0.00564201){\color[rgb]{0,0,0}\makebox(0,0)[t]{\lineheight{0.80000001}\smash{\begin{tabular}[t]{c}$\xi_{\text{MI}}$\end{tabular}}}}%
    \put(0.78595869,0.00564201){\color[rgb]{0,0,0}\makebox(0,0)[t]{\lineheight{0.80000001}\smash{\begin{tabular}[t]{c}$\xi_{\text{CMI}}$\end{tabular}}}}%
    \put(0.04974802,0.377021){\color[rgb]{0,0,0}\makebox(0,0)[rt]{\lineheight{0.80000001}\smash{\begin{tabular}[t]{r}(a)\end{tabular}}}}%
    \put(0.56451067,0.377021){\color[rgb]{0,0,0}\makebox(0,0)[rt]{\lineheight{0.80000001}\smash{\begin{tabular}[t]{r}(b)\end{tabular}}}}%
  \end{picture}%
\endgroup%
    \caption{(a) Partitioning of the space used for definition \ref{def:gapped} of gapped Markovian states, and (b) plot of the exponential decay of the MI and CMI with the width $w$ of the buffer region $B$ for such states. Note that we do not have $I(A:C|B) \geq I(A:C)$ nor $\xi_{\text{CMI}} \geq \xi_{\text{MI}}$ in general, and figure (b) illustrates just one possible scenario.}
    \label{fig:gapped_markovian_def}
\end{figure}

Recall that the MI is defined as $I(A:C)\equiv S(A)+S(C)-S(AC)$ ($S$ being the von Neumann entropy) and the CMI is $I(A:C|B)\equiv S(AB)+S(BC)-S(C)-S(ABC)$. For pure states, the MI coincides with CMI, and the gap condition reduces to the usual one that demands correlation functions to decay exponentially, which is indeed satisfied by ground states of gapped Hamiltonians~\cite{hastings_liebschultzmattis_2004, nachtergaele_liebrobinson_2006}. For general mixed states, the MI condition is rather intuitive as it demands correlation functions to decay exponentially. The CMI condition was emphasized recently in Ref.~\cite{sang2024stability} as a desirable property of mixed states with features similar to ground states of gapped Hamiltonians. For our purpose, the most important consequence of the CMI condition is that the state on $A$ can be recovered, using the Petz map, by a channel acting on $AB$ only as long as $w \gg \xi_{\text{CMI}}$~\cite{fawzi_quantum_2015}, up to an error that decays exponentially with $r/\xi_{\text{CMI}}$ in trace distance. 

We now present the main result of this subsection:

\begin{figure*}[t]
    \centering
    \def\svgwidth{0.9\linewidth}
\begingroup%
  \makeatletter%
  \providecommand\color[2][]{%
    \errmessage{(Inkscape) Color is used for the text in Inkscape, but the package 'color.sty' is not loaded}%
    \renewcommand\color[2][]{}%
  }%
  \providecommand\transparent[1]{%
    \errmessage{(Inkscape) Transparency is used (non-zero) for the text in Inkscape, but the package 'transparent.sty' is not loaded}%
    \renewcommand\transparent[1]{}%
  }%
  \providecommand\rotatebox[2]{#2}%
  \newcommand*\fsize{\dimexpr\f@size pt\relax}%
  \newcommand*\lineheight[1]{\fontsize{\fsize}{#1\fsize}\selectfont}%
  \ifx\svgwidth\undefined%
    \setlength{\unitlength}{340.57868056bp}%
    \ifx\svgscale\undefined%
      \relax%
    \else%
      \setlength{\unitlength}{\unitlength * \real{\svgscale}}%
    \fi%
  \else%
    \setlength{\unitlength}{\svgwidth}%
  \fi%
  \global\let\svgwidth\undefined%
  \global\let\svgscale\undefined%
  \makeatother%
  \begin{picture}(1,0.29304658)%
    \lineheight{1}%
    \setlength\tabcolsep{0pt}%
    \put(0,0){\includegraphics[width=\unitlength,page=1]{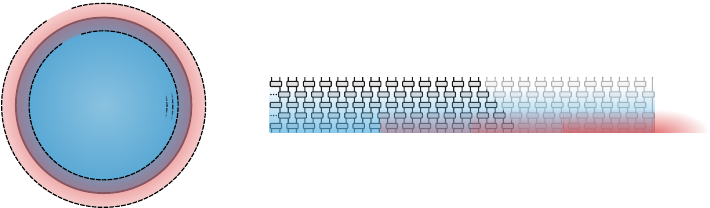}}%
    \put(0.06835012,0.26648355){\makebox(0,0)[lt]{\lineheight{1.25}\smash{\begin{tabular}[t]{l}$R^+_w$\end{tabular}}}}%
    \put(0.08906051,0.2315012){\makebox(0,0)[lt]{\lineheight{1.25}\smash{\begin{tabular}[t]{l}$R^-_w$\end{tabular}}}}%
    \put(0,0){\includegraphics[width=\unitlength,page=2]{gapped_localization.pdf}}%
    \put(0.26079629,0.2654585){\color[rgb]{0.79607843,0.29803922,0.29803922}\makebox(0,0)[lt]{\lineheight{1.25}\smash{\begin{tabular}[t]{l}$\partial R_w$\end{tabular}}}}%
    \put(0,0){\includegraphics[width=\unitlength,page=3]{gapped_localization.pdf}}%
    \put(0.44940037,0.19774406){\makebox(0,0)[lt]{\lineheight{1.25}\smash{\begin{tabular}[t]{l}$\tilde{W}_R$\end{tabular}}}}%
    \put(0.6519117,0.1948118){\makebox(0,0)[t]{\lineheight{1.25}\smash{\begin{tabular}[t]{c}\large $W_R$\end{tabular}}}}%
    \put(0.79393841,0.07325078){\color[rgb]{0,0,0}\makebox(0,0)[t]{\lineheight{0.80000001}\smash{\begin{tabular}[t]{c}$\frac{1}{2}w$\end{tabular}}}}%
    \put(0.66557741,0.07325078){\color[rgb]{0,0,0}\makebox(0,0)[t]{\lineheight{0.80000001}\smash{\begin{tabular}[t]{c}$\frac{1}{2}\tilde{w}$\end{tabular}}}}%
    \put(0.53721627,0.07325078){\color[rgb]{0,0,0}\makebox(0,0)[t]{\lineheight{0.80000001}\smash{\begin{tabular}[t]{c}$\frac{1}{2}\tilde{w}'$\end{tabular}}}}%
    \put(0,0){\includegraphics[width=\unitlength,page=4]{gapped_localization.pdf}}%
    \put(0.02122652,0.14894803){\color[rgb]{0,0,0}\makebox(0,0)[t]{\lineheight{0.80000001}\smash{\begin{tabular}[t]{c}$w$\end{tabular}}}}%
    \put(0.14605266,0.13847027){\color[rgb]{0,0,0}\makebox(0,0)[t]{\lineheight{0.80000001}\smash{\begin{tabular}[t]{c}\large $R$\end{tabular}}}}%
    \put(0.37320351,0.09476555){\color[rgb]{0.3372549,0.68627451,0.87843137}\makebox(0,0)[rt]{\lineheight{0.80000001}\smash{\begin{tabular}[t]{r}\large $R$\end{tabular}}}}%
    \put(0,0){\includegraphics[width=\unitlength,page=5]{gapped_localization.pdf}}%
  \end{picture}%
\endgroup%

    \caption{Partioning of the space used in the proof of theorem \ref{thm:gappedMarkovian}. On the left, a disk-like region $R$ is pictured. $\partial R$ is its boundary region with thickness $w$, from which we define $R^-_w \defeq R \setminus \partial R_w$ and $R^+_w \defeq R \cup \partial R_w$. On the right, a snapshot of the radial direction around $\partial R$ is depicted. The symmetry FDLU $W$ is restricted to region $R$ by removing all gates acting outside it, resulting in $W_R$. Importantly, we can remove further gates (the ones more transparent to the right) to arrive at a FDLU $\tilde{W}_R$ that acts like $W_R$ on $R^-_{\tilde{w}}$ but trivially on $\partial R_{w}$.}
    \label{fig:gapped_Markovian_proof}
\end{figure*}

\begin{theorem}
    For a gapped Markovian state $\rho$, a weak symmetry $g$ can always be localized (in the sense of Def.~\ref{def:weak_sym_localization}).
    \label{thm:gappedMarkovian}
\end{theorem}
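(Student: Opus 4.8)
The plan is to adapt the restriction-and-recover strategy of Thm.~\ref{thm:stronglocalization}, but with the disentangler $U_T$ of the invertible-state argument replaced by the Petz recovery map that the CMI condition of Def.~\ref{def:gapped} guarantees. Fix an interval $R$ and a boundary buffer $\partial R_w$ of width $w \gg \xi_{\text{MI}}, \xi_{\text{CMI}}$, splitting the chain into the interior $X = R^-_w$, the buffer $Y = \partial R_w$, and the exterior $Z = \comp R$. Using the finite-depth structure of $W$ exactly as in the surgery of Fig.~\ref{fig:local_unitary_surgery} and Lemma~\ref{lemma:separable_unitary}, I would decompose $W = u_{\partial R}\,(W_R \otimes W_{\comp R})$, with $u_{\partial R}$ supported in $Y$, $W_R$ (the restriction $W_A$ of Def.~\ref{def:weak_sym_localization}) supported in $R$ and agreeing with $W$ in the bulk of $R$, and $W_{\comp R}$ supported deep in $Z$. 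I would then use the weak-symmetry condition to trade the bulk action in $R$ for one deep in the exterior: from $W\rho W^\dagger = \rho$ one gets $(W_R \otimes W_{\comp R})\rho(W_R \otimes W_{\comp R})^\dagger = u_{\partial R}^\dagger \rho\, u_{\partial R} =: \rho'$, hence
\begin{equation}
    \sigma := W_R \rho W_R^\dagger = W_{\comp R}^\dagger\, \rho'\, W_{\comp R}.
\end{equation}
Comparing the marginals computed from the two sides of this identity yields the crucial consequences of the symmetry: tracing out $YZ$ gives $\sigma_X = \rho_X$ (the deep bulk of $R$ is untouched), and tracing out $XY$ gives $[W_{\comp R}, \rho_Z] = 0$. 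These play the role that ``$U_T$ commutes with the symmetry in the bulk'' played in Thm.~\ref{thm:stronglocalization}.

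Next I would absorb the exterior unitary $W_{\comp R}$ into a boundary operation using the MI condition. Because the mutual information decays across the wide buffer, $\rho_{XZ} \approx \rho_X \otimes \rho_Z$ and $\rho_{YZ}$ is approximately uncorrelated between $Y$ and the deep region supporting $W_{\comp R}$, so that $[W_{\comp R}, \rho_{YZ}] = O(e^{-w/\xi_{\text{MI}}})$. Combined with $[W_{\comp R},\rho_Z]=0$, this shows that the two near-boundary marginals of $\sigma$ coincide, up to $O(e^{-w/\xi_{\text{MI}}})$, with those of the boundary-conjugated state obtained from the channel $\E_Y(\cdot) := u_{\partial R}^\dagger(\cdot)u_{\partial R}$ acting only on $Y$: concretely $\sigma_{XY} = (\id_X \otimes \E_Y)\rho_{XY}$ (exactly) and $\sigma_{YZ} \approx (\E_Y \otimes \id_Z)\rho_{YZ}$.

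Finally I would reconstruct the full states from these matched marginals. Both $\rho$ and $\sigma = W_R\rho W_R^\dagger$ are gapped Markovian (the latter with an $O(1)$-enlarged Markov length, since $W_R$ is finite depth), so each is reproduced, up to $O(e^{-w/\xi_{\text{CMI}}})$, by the Petz recovery of $Z$ from $Y$ applied to its own $XY$-marginal~\cite{fawzi_quantum_2015}. Since this recovery depends only on the $YZ$-marginal, and those agree to exponential accuracy, the \emph{same} recovery channel reconstructs both states. I would thus conclude $W_R\rho W_R^\dagger = \E_{\partial R}[\rho]$ up to exponentially small error, with $\E_{\partial R}$ the composition of $\E_Y$ with the recovery, a genuine CPTP map supported near $\partial R$. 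In $1d$ the boundary $\partial R$ has two components separated by the interior of width $\gg \xi_{\text{MI}}$, and MI decay forces the recovered channel to factorize as $\E_{\partial R} = \E_L \otimes \E_R$, as required by Def.~\ref{def:weak_sym_localization}.

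The main obstacle is the last two steps taken together: rigorously ``absorbing'' the deep exterior unitary $W_{\comp R}$ and propagating the approximate marginal identities through the Petz map. The Fawzi--Renner recovery bound is only approximate, so one must track how its error compounds with the MI-suppression errors, confirm that $\sigma$ inherits a controlled Markov length under the boundary-straddling $W_R$, and verify that the reconstructed $\E_{\partial R}$ is an honest channel supported on $\partial R$ rather than a map that merely matches marginals. Controlling all these errors uniformly in system size, so that localization survives in the large-$w$ limit, is the technical heart of the argument.
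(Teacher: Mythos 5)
Your overall strategy --- use the MI decay to show that $\sigma := W_R\rho W_R^\dagger$ differs from $\rho$ only near $\partial R$, then use the CMI decay and the Fawzi--Renner recovery to turn this into a localized channel --- is the same as the paper's, and your first half (the circuit surgery $W = u_{\partial R}(W_R\otimes W_{\comp{R}})$ plus the weak-symmetry identity $\sigma = W_{\comp{R}}^\dagger\, u_{\partial R}^\dagger \rho\, u_{\partial R}\, W_{\comp{R}}$, giving $\sigma_X=\rho_X$ and an exact boundary form for $\sigma_{XY}$) is a legitimate, arguably cleaner variant of the paper's marginal-matching argument. However, there is a genuine gap in the final step, and it is exactly the point you flagged as an ``obstacle'' without resolving: with your tripartition (interior $X$, buffer $Y$, exterior $Z=\comp{R}$), the Petz/Fawzi--Renner map you invoke is $\mathcal{R}_{Y\to YZ}$, whose output system is \emph{all of} $Z$. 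The composite $\E_{\partial R}=\mathcal{R}_{Y\to YZ}\circ\E_Y\circ\Tr_Z$ therefore discards and rebuilds the entire exterior; even though its output is (approximately) the correct state, it is not a channel ``acting nontrivially only near the boundary'' as Def.~\ref{def:weak_sym_localization} requires, and in $1d$ it certainly does not factorize as $\E_L\otimes\E_R$. Showing that this globally supported map can be replaced by a boundary-supported one is precisely the statement being proved, so the argument is circular at this point. (A secondary slip: the recovery maps for $\rho$ and $\sigma$ are \emph{not} the same channel, since by your own Eq. for the marginals $\sigma_{YZ}\approx(\E_Y\otimes\id)\rho_{YZ}\neq\rho_{YZ}$; this is repairable but symptomatic of the same looseness.)

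The missing idea is a re-partitioning of the Markov triple so that the \emph{recovered} region is the thin boundary annulus itself rather than the bulk. The paper first shows (via MI decay and weak symmetry, playing the role of your marginal identities) that $\Tr_{\partial R_{\tilde w}}[\sigma]\approx\Tr_{\partial R_{\tilde w}}[\rho]$, i.e. the two states already agree outside a thickened boundary $A=\partial R_{\tilde w}$. It then takes $B=\partial R_{\tilde w'}\setminus\partial R_{\tilde w}$ to be a slightly larger annulus surrounding $A$ on both sides, and $C$ to be the deep interior together with the deep exterior. The CMI condition $I(A:C|B)=O(e^{-w/\xi_{\rm CMI}})$ for \emph{this} triple yields a recovery channel $\E_{B\to AB}$ whose input and output both live in the neighborhood $\partial R_{\tilde w'}$ of the boundary, so that $\E_{\partial R}=\E_{B\to AB}\circ\Tr_A$ is genuinely boundary-supported and, applied per connected component of $\partial R$, automatically locality-preserving in $1d$. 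Your three concentric regions cannot produce this; you need the five-layer radial structure (deep interior $|$ buffer $|$ thin annulus $|$ buffer $|$ deep exterior) so that the Petz map only ever writes onto the annulus.
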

\begin{proof}
    Consider a disk region $R$ and the state
    \begin{equation}
        \rho' \defeq W_R\rho W_R^{\dagger},
    \end{equation}
    where $W_R$ is the weak symmetry restricted to $R$. We will first use the exponentially decaying MI property to argue that $\rho'$ is equal to $\rho$ far from the boundary of $R$, and then use the exponentially decaying CMI to locally recover $\rho'$ from $\rho$, thus arriving at a localized action of the weak symmetry (See Eq.~\eqref{eq:weaklocalization}).

    Let $\partial R_w$ be the boundary of $R$ thickened by a width $w$, and define $R^-_w \defeq R \setminus \partial R_w$ and $R^+_w \defeq R \cup \partial R_w$ (See Fig. \ref{fig:gapped_Markovian_proof}). If $w \gg \xi_{\text{MI}}$, then $\rho$ has almost no correlations between regions $R^-_w$ and $(R^+_w)^c$, separated by a distance $w$\footnote{Recall that the mutual information bounds the distance between a correlated state $\rho_{AB}$ and its marginals $\rho_A$ and $\rho_B$, as $I_{\rho}(A:B) = S(\rho_{AB} \parallel \rho_A \otimes \rho_B) \geq \frac{1}{2 \ln(2)} \norm{\rho_{AB} - \rho_A \otimes \rho_B}^2_1$, where $S(\cdot \parallel \cdot)$ is the relative entropy, and $\norm{\cdot}_1$ the trace norm~\cite{hiai_sufficiency_1981, watrous_theory_2018a}.}. That is,
    \begin{equation}\label{eq:uncorrelated_rho}
        \Tr_{\partial R_w}[\rho] \approx \rho_{R^-_w} \otimes \rho_{(R^+_w)^c},
    \end{equation}
    up to an exponentially small deviation in $w/\xi_{\text{MI}}$. 

    Now let $r_W$ be the range of the $W$, defined as the smallest length $r$ such that the support of $W^\dagger X_A W$ is contained in $A^+_r$ for $X_A$ an arbitrary observable supported in a region $A$. Then, for an extended width $\tilde{w} > 2 r_W + w$, the state $\rho' = W_R \rho W_R^\dagger$ is similarly uncorrelated at large distances:
    \begin{align}
        \Tr_{\partial R_{\tilde{w}}}[\rho'] & = \Tr_{\partial R_{\tilde{w}} \setminus \partial R_w}[\tilde{W}_R \Tr_{\partial R_w}[\rho] \tilde{W}_R^\dagger] \\
        & \stackrel{\eqref{eq:uncorrelated_rho}}{\approx} \rho'_{R_{\tilde{w}}^-} \otimes \rho_{(R^+_{\tilde{w}})^c},
    \end{align}
    where $\tilde{W}_R$ acts like $W_R$ in $R^-_{\tilde{w}}$ but trivially in $\partial R_w$. Such operator can be constructed by removing gates outside the ``light cone'' coming from $R^-_ {\tilde{w}}$ (See Fig. \ref{fig:gapped_Markovian_proof}). Moreover, we have
    \begin{equation}
        \rho'_{R_{\tilde{w}}^-} = \Tr_{(R_{\tilde{w}}^-)^c}[W_R \rho W_R^\dagger] 
        = \Tr_{(R_{\tilde{w}}^-)^c}[W \rho W^\dagger]
        = \rho_{R_{\tilde{w}}^-},
    \end{equation}
    where in the last equality we used that $W$ is a weak symmetry of $\rho$. Therefore,
    \begin{equation}\label{eq:bulk_indistinguishability}
        \Tr_{\partial R_{\tilde{w}}}[\rho'] \approx \rho_{R_{\tilde{w}}^-} \otimes \rho_{(R^+_{\tilde{w}})^c} \approx \Tr_{\partial R_{\tilde{w}}}[\rho].
    \end{equation}

    Finally, we extend the boundary region of $R$ even more to a width of $\tilde{w}'$ satisfying $\tilde{w}' \gg \xi_{\text{CMI}} + \tilde{w}$. By treating the extension $\partial R_{\tilde{w}'} \setminus \partial R_{\tilde{w}}$ as a buffer region $B$ separating the smaller boundary $A = \partial R_{\tilde{w}}$ from the bulk regions $R^-_{\tilde{w}'} \cup (R^+_{\tilde{w}'})^c$, the exponentially small CMI $I(A:C|B)$ from the gapped Markovian condition\footnote{In Def.~\ref{def:gapped}, we required $A$ to be a disk region, which is not the case for annulus $\partial R_{\tilde{w}}$ in dimensions $d \geq 2$. However, since it does not wrap around the base manifold nontrivially (as it is the boundary of a disk region $R$), we find it reasonable to assume exponentially decaying CMI for $A=\partial R_{\tilde{w}}$ as well. In particular, topologically ordered states do satisfy this extra assumption.} implies the existence of a channel $\E_{B \to AB}$ from $B$ to $AB = \partial R_{\tilde{w}'}$ such that~\cite{fawzi_quantum_2015}
    \begin{equation}
        \rho' \approx \E_{B \to AB}(\rho'_{BC}) \stackrel{\eqref{eq:bulk_indistinguishability}}{\approx} \E_{B \to AB}(\rho_{BC}),
    \end{equation}
    which is of the form of Eq.~\eqref{eq:weaklocalization} if we define $\E_{\partial R} \defeq \E_{B \to AB} \circ \Tr_A$.

    In $1d$, we can further guarantee $\E_{\partial R}$ to be locality-preserving by applying the steps above in each of the connected components of $\partial R$ separately. For $R$ an interval, for example, we would arrive at $\E_{\partial R} = \E_L \otimes \E_R$, where $\E_L$ and $\E_R$ are supported around the left anf right endpoints of $R$, respectively.
\end{proof}

The above theorem combined with Thm.~\ref{thm:weaknonlocalization} implies that a $1d$ mixed state with mixed strong-weak anomaly cannot be gapped Markovian if the strong symmetry is not spontaneously broken. Moreover, it was shown in Ref.~\cite{SSBtoappear} that a state with spontaneously broken strong symmetry cannot have a finite Markov length for all tripartitions $ABC$, and hence cannot be gapped Markovian. We therefore conclude that
\begin{corollary}\label{cor:anomaly_implies_gapless}
    A $1d$ mixed state $\rho$ with mixed strong-weak anomaly cannot be gapped Markovian (in the sense of Def.~\ref{def:gapped}).
\end{corollary}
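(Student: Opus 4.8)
The plan is to argue by contradiction, stringing together the two localization obstructions proved above with the known incompatibility between strong spontaneous symmetry breaking (SSB) and a finite Markov length. So I would begin by assuming that $\rho$ is gapped Markovian in the sense of Def.~\ref{def:gapped}, i.e. that both its mutual information and conditional mutual information decay exponentially in the buffer width, and aim to expose a clash with the mixed strong-weak anomaly.

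The first step is to apply Theorem~\ref{thm:gappedMarkovian}: a gapped Markovian $\rho$ admits a localized weak symmetry. Concretely, restricting the weak $\Z_2^2$ symmetry (generated by $\Xe$ and $\CZ$) to a large interval $R$ yields a boundary channel $\E_{\partial R} = \E_L \otimes \E_R$ supported only near the two endpoints of $R$, the locality-preserving factorization being exactly what that theorem delivers in $1d$. The second step feeds this into Theorem~\ref{thm:weaknonlocalization}: for the anomalous $\Z_2^3$ action with strong $K = \langle \Xo \rangle$ and weak $G = \langle \Xe, \CZ \rangle$, localization of the weak symmetry forces the strong symmetry $K$ to be spontaneously broken. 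Hence the gapped-Markovian hypothesis has driven $\rho$ into a strong-SSB phase.

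The final step closes the loop and is where I would rely on external input: by Ref.~\cite{SSBtoappear}, a state that spontaneously breaks a strong symmetry cannot have a finite Markov length for all tripartitions $ABC$, so its conditional mutual information does not decay, violating condition (2) of Def.~\ref{def:gapped}. This contradicts the starting assumption, so no such $\rho$ can be gapped Markovian. I expect this last step to be the main obstacle: the nontrivial and characteristically mixed-state feature is that strong SSB manifests as a diverging Markov length (a CMI effect) rather than as long-range connected correlations (which would instead spoil the MI condition), and its full justification is deferred to \cite{SSBtoappear}. A minor caveat I would flag is that Theorem~\ref{thm:weaknonlocalization} is established only for the specific $\Z_2^3$ symmetry, so strictly the argument proves the corollary in that case, with the general mixed strong-weak anomaly expected to behave identically.
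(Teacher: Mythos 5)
Your proposal is correct and follows essentially the same route as the paper: combine Theorem~\ref{thm:gappedMarkovian} (gapped Markovian implies weak symmetry localization) with Theorem~\ref{thm:weaknonlocalization} (weak localization plus the anomaly forces strong SSB) and then invoke Ref.~\cite{SSBtoappear} to rule out a finite Markov length for states with spontaneously broken strong symmetry. Your caveat about the argument being rigorous only for the $\Z_2^3$ case matches the paper's own qualification.
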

The above statement directly generalizes the familiar result in pure states. To illustrate the necessity of both the MI and CMI condition, let us consider the $\Z_2^2$ symmetry generated by $\CZ$ and $X \defeq \Xe\Xo$. We can take $X$ to be strong and $\CZ$ to be weak, then the state $\rho_0\propto \one+X$ is fully symmetric and has zero MI between distant regions $A$ and $C$, $I(A:C) = 0$. However, the CMI $I(A:C|B)$ does not decay with $|B|$, instead saturating at $\log 2$. This example shows that the CMI can be positive even though the MI is zero. For the contrary, we can take $\CZ$ to be strong and $X$ to be weak. A simple symmetric state would be the clasically correlated state
\begin{equation}
    \rho\propto \ketbra{\uparrow\uparrow\cdots}{\uparrow\uparrow\cdots}+\ketbra{\downarrow\downarrow\cdots}{\downarrow\downarrow\cdots}.
\end{equation}
This state has vanishing CMI, but the MI between any two regions is $\log 2>0$.

We note that a simpler argument to corollary \ref{cor:anomaly_implies_gapless} can be made if all the symmetries are strong. Indeed, a $1d$ gapped Markovian state $\rho$ can be prepared from a tripartite separable state using a finite-depth channel\footnote{The channel can be constructed as follows. Starting from consider a tripartition $A,B,C$, we can trace out the boundary regions as in the proof of Thm.~\ref{thm:gappedMarkovian}, which leaves behind a tripartite factorized state $\rho_{A^-}\otimes\rho_{B^-}\otimes\rho_{C^-}$ due to the vanishing MI. Because of the vanishing CMI, we can then recover the original $\rho$ using Petz maps at the boundary region -- this precisely gives a local channel that prepares $\rho$ from a tripartite factorized state.}. Based on our result on long-range tripartite entanglement of Sec.~\ref{sec:tripartite_entanglement}, we conclude that gapped Markovianity is also forbidden by strong symmetry anomaly in $1d$.

Another consequence of Thm.~\ref{thm:weaknonlocalization} is that for symmetric gapped state, we can define and discuss the notion of symmetry-protected topology (SPT) by examining the symmetry properties of the boundary channels $\E_{\partial A}$ -- this leads naturally to the ``decorated domain wall'' picture advocated in Refs.~\cite{ma_average_2023,Ma_intrisic_2023}.

In this Section we discussed (the absence of) two mixed-state properties: symmetric invertibility and gapped Markovianity. In $1d$, these two notions are essentially equivalent for pure states -- for example, one can use either to define the notion of SPT. An open question is whether these two notions remain equivalent for mixed states. We leave this question to future study.

\section{Lieb-Schultz-Mattis and nonseparability}
\label{sec:LSM}

In Sec.~\ref{sec:mixed} we discussed the general consequences of strong-weak mixed anomaly in terms of symmetry localization, invertible and gapped Markovian states. It turns out that for some special (yet physically interesting) cases we can have stronger results on quantum entanglement. In this section we discuss spin-$1/2$ chains (1d) with continuous non-Abelian onsite strong symmetry ($SO(3)$ or $O(2)$) and weak lattice translation symmetry. For pure states, it is by now well established that there is a mutual 't Hooft anomaly involving the onsite and translation symmetries~\cite{Cheng2015,Cheng2023}. This mutual anomaly strongly constrains the long-range behavior of the system, resulting in the celebrated Lieb-Schultz-Mattis (LSM) theorem~\cite{Lieb1961}. Below we will show that the LSM anomaly also puts a strong constraint on mixed states, namely any mixed state $\rho$ that does not break the weak symmetries (explicitly or spontaneously) must be bipartite nonseparable. Furthermore, this nonseparability is long-ranged: $\rho$ cannot be obtained from a bipartite separable state $\rho_0$ through a finite-depth local channel.

Suppose on the contrary that for a bipartition $A,B$,
\begin{equation}
\label{eq:LSMbisep}
    \rho=\sum_i p_i |A_i\rangle\langle A_i|\otimes |B_i\rangle\langle B_i|,
\end{equation}
where $|A_i\rangle$ ($|B_i\rangle$) are states living on the subsystem $A$ ($B$). Let us consider the Pauli operator $Z_x$ that generates an $SO(2)$ symmetry, with the total $SO(2)$ charge $Q=\sum_xZ_x$. Now consider two points $x$ and $y$  inside the regions $A$ and $B$, respectively. We have
\begin{equation}
\label{eq:LSMsep}
\Tr[\rho Z_x Z_y]=\sum_i p_i\langle A_i|Z_x|A_i\rangle\langle B_i|Z_y|B_i\rangle=0.
\end{equation}
In the last equality above we have used the fact that either $O(2)$ or $SO(3)$ contains a $\pi$-rotation operation that sends $Z_x\to-Z_x$, and that we can apply this rotation on $A$ (or $B$) alone, since each $\ket{A_i}$ is individually symmetric from the strong symmetry of $\rho$ and the on-siteness of the spin flip $\prod_i X_i$.

The condition Eq.~\eqref{eq:LSMsep} is valid only for $x \in A$ and $y \in B$. To include cases in which both $x, y \in A$ or $B$, we invoke the weak translation symmetry of $\rho$, implying that $\Tr[\rho Z_xZ_y]=0$ for all $x\neq y$. The strong symmetry condition also requires $\sum_xZ_x\rho=0$, namely $\rho$ contains only states with zero total $SO(2)$ charge. We then have
\begin{eqnarray}
    0&=&\Tr \rho (\sum_xZ_x)^2 \nonumber \\
     &=&\Tr \rho (\sum_xZ_x^2+\sum_{x\neq y}Z_xZ_y ) \nonumber \\
     &=&L,
\end{eqnarray}
which is a contradiction. So our original proposition Eq.~\eqref{eq:LSMbisep}, namely $\rho$ is bipartite separable, must be false.

Notice that if we only have the Abelian on-site strong symmetry $SO(2)$, then Eq.~\eqref{eq:LSMsep} may not hold since we could have $\langle A_i|Z_x|A_i\rangle\neq0$. In this case we can have a fully separable density matrix $\rho\propto P_{Q=0}$ (the projection operator to the subspace with $Q=\sum_xZ_x=0$) that clearly does not spontaneously break translation symmetry. This is curiously in accordance with the fact that the LSM constraint no longer corresponds to a standard 't Hooft anomaly once the onsite symmetry is lowered to $SO(2)$~\cite{Cheng2015,Song2019,Else2021,Cheng2023}.

Back to the $O(2)$ or $SO(3)$ case, we now show that a symmetric state $\rho$ not only is bipartite nonseparable, but also cannot be prepared from any biparte separable state $\rho_0=\sum_i p_i|A_i\rangle\langle A_i|\otimes |B_i\rangle\langle B_i|$ using a finite depth local channel. In this statement there is no symmetry requirement on the initial state $\rho_0$ and the finite-depth channel $\E_{FD}$.

To prove the above statement, suppose on the contrary that $\rho=\E_{\rm FDLC}(\rho_0)$, then consider part of the ensemble $\rho_i \defeq \E_{\rm FDLC}(|A_i\rangle\langle A_i|\otimes |B_i\rangle\langle B_i|)$. The connected correlation function $\langle Z_x Z_y\rangle-\langle Z_x\rangle \langle Z_y\rangle=0$ on $|A_i\rangle\otimes|B_i\rangle$ for $x$ ($y$) deep in the region $A$ ($B$), simply because the two regions are uncorrelated. Now under a finite depth channel, this connected correlation function should remain zero, or at least exponentially small in $|x-y|$ if the channel is made of gates with exponential tails. But for the state $\rho_i$, $\langle Z_x\rangle=\langle Z_y\rangle=0$ by strong symmetry. Therefore $\langle Z_xZ_y\rangle=0$ on $\rho_i$ as well as on $\rho=\sum_i p_i \rho_i$, up to exponential tails.

It was proved in Ref.~\cite{KimchiNahumSenthil} that as long as $\sum_xZ_x\rho=0$, namely that $SO(2)$ is a strong symmetry at half-filling, then we have either (a) spontaneous breaking of the weak (average) translation symmetry, in the sense that for some local operator $V(x)$, $\Tr[\rho V(x) V(y)] \sim (-1)^{(x-y)}$; or (b) the two-point function $\Tr[\rho Z_x Z_y]$ should decay no faster than $1/|x-y|^2$. Since $\langle Z_xZ_y\rangle=0$ (up to exponential tails) clearly violates the second condition, we conclude that states $\rho$ that are bipartite separable up to finite depth local channels with strong on-site $O(2)$ or $SO(3)$ symmetry must break the weak translation symmetry, either explicitly or spontaneously.\footnote{Strictly speaking, the proof in Ref.~\cite{KimchiNahumSenthil} was done in the $L\to\infty$ limit, assuming a ``smooth'' thermodynamic limit. However, it is not difficult to modify the proof for finite $L$. The more precise conclusion is that the weak translation symmetry would be spontaneously broken if both the following conditions are satisfied: (1) for $|x-y|\ll L$, $\langle Z_xZ_y\rangle$ decays faster than $1/|x-y|^2$; (2) for $|x-y|\sim O(L)$, $L^2\langle Z_xZ_y\rangle\to 0$. For example, the state $\rho\propto P_{Q=0}$ with only strong $SO(2)$ symmetry has $\langle Z_xZ_y\rangle\sim 1/L$, violating the second condition, therefore is compatible with translation symmetry not being spontaneously broken. }

We note that the LSM theorem in mixed state context was also studied recently in Ref.~\cite{Zhouetal2023}, from a very different perspective.

Finally, we notice that if we lower the onsite symmetry to a discrete subgroup, for example $\Z_2\times \Z_2$, then our nonseparability result no longer applies, and the situation becomes very similar to the general discussions in Sec.~\ref{sec:mixed}. Indeed, we can have a symmetric density matrix $\rho\propto (\one+\prod_iX_i)(\one+\prod_iZ_i)$ that is $(L/2)$-separable under disjoint Bell pairs -- it is just \emph{symmetrically non-invertible} and has a \emph{diverging Markov length}.

\section{Relationship with other works}
\label{sec:relation}

We have focused on consequences of 't Hooft anomaly for mixed quantum states, in terms of constraints on long-range entanglement (for strong symmetry anomaly) and correlation (for mixed strong-weak anomaly). We now place our results in the broader context of research on mixed-state quantum phases, an area that has recently attracted significant attention. 

The notion of 't Hooft anomaly is intimately linked to symmetry-protected topological (SPT) phases. Recently SPT phases has been generalized to mixed-states\cite{deGroot2022,ma_average_2023,LeeYouXu2022,ZhangQiBi2022,Ma_intrisic_2023}. While prior works mainly focused on the bulk physics (which only have short-ranged entanglement and correlation), our work is concerned with the boundary physics. We have shown that a nontrivial bulk indeed leads to a nontrivial boundary in terms of long-range correlation and/or entanglement.

Shortly after the first version of our paper appeared on arXiv, two independent works on anomaly constraints for mixed states were posted\cite{WangLi2024,XuJian2024}. While a common theme of all these works (including ours) is that mixed states with nontrivial strong-weak t'Hooft anomaly must be nontrivial in some sense (in agreement with previous classification of bulk SPT phases), our work focused on different aspects of mixed-state phases in comparison to others. For strong symmetry anomaly, we focused on multi-partite entanglement, which allowed us to identify a class of \textit{intrinsically mixed} quantum phases that are not two-way connected to any pure states (Sec.~\ref{sec:intrinsicallymixed}). For mixed strong-weak anomaly, we first discussed its impact on \textit{symmetric invertibility}, and in a subsequent update discussed its impact on more information-theoretic aspects of long-range correlations.

For systems with only weak symmetry, there is no nontrivial anomaly in our definition. The simplest way to see this is to note that the maximally mixed state $\rho\sim \one$, which certainly has no entanglement nor correlation, is always invariant under any weak symmetry. However, a different notion of ``nontriviality'' can be formulated in terms of the degeneracies in the eigenvalue spectrum of $\rho$. Refs.~\cite{Hsin2023,Zang2023,Zhouetal2023} studied the notion of anomaly and LSM constraints from this point of view, which is different from the view taken in our work.

\section{Discussion}
\label{sec:Discussion}

In this work we explored several consequences of 't Hooft anomaly for mixed quantum states. Most notably, we showed that an anomalous strong symmetry requires the mixed state to be multipartite entangled, or nonseparable. Specifically, for a $d$-dimensional bosonic lattice system with symmetry $G$ and anomaly $\omega\in H^{d+2}(G,U(1))$, any strongly symmetric mixed state must be $(d+2)$-partite nonseparable. Furthermore, the nonseparability is long-ranged, in the sense that the symmetric state cannot be obtained from a $(d+2)$-partite separable state using a finite-depth local quantum channel. We proved this result for $d\leq1$ and provided some plausibility argument for $d>1$. { We note that, in a different context, separability properties of bulk mixed-state SPT phases have been studied in Ref.~\cite{ChenGrover2023}. It remains to understand the precise relation between the bulk and ``boundary'' perspectives.}

From a quantum informational perspective, anomalies can be useful to generate examples of states with interesting patterns of mixed state entanglement. For example, strongly symmetric infinite-temperature states $\rho_\infty$ can be $(d+2)$-nonseparable and $(d+1)$-separable at the same time. Even in $d=1$, we discovered examples (such as $\rho_\infty$ of the CZX model in Eq.\eqref{eq:infinite-temperature_CZX}) that are are long-ranged entangled in the sense that they cannot be prepared from trivial states using finite-depth local channels; and, at the same time, any proper subsystems are maximally disordered, so that all connected correlation functions behave trivially.

We also discussed mixed anomaly involving both strong and weak symmetries. In general the mixed strong-weak anomaly no longer constrains quantum entanglement. Instead such anomaly constrains the long-range correlations and forbids the states from being symmetrically invertible or gapped Markovian. However, for the special -- but physically relevant -- case of spin-$1/2$ chain with weak translation and strong $SO(3)$ (or $O(2)$) spin rotation symmetries, we again concluded that any symmetric mixed state must be long-range entangled. It remains to see to what extent we can generalize such statement to other systems.

We end with some open questions:
\begin{enumerate}
    \item In this paper we have focused on bosonic anomalies of unitary zero-form symmetries. A natural question is whether our results can be extended to include fermions and/or higher-form symmetries. Even for bosonic zero-form symmetries, our result is rigorously established only for $d\leq1$, and a controlled treatment of higher dimensional cases is left to future work.
    \item In this work we focused on entanglement properties of lattice systems. It is natural to ask whether we can formulate similar results for continuum field theories, where the notion of 't Hooft anomaly can also be naturally defined.
    \item The simple example states discussed in Sec.~\ref{sec:prototypical_examples} and \ref{sec:mixed} are infinite-temperature states that exhibits strong-to-weak spontaneous symmetry breaking~\cite{LeeJianXu2023,Ma_intrisic_2023,SSBtoappear}. For pure states with anomalous symmetry, it is known that if spontaneous symmetry breaking is forbidden, we can have stronger constraints on the states -- typically the state will exhibit either power-law correlation function or intrinsic topological order. Similarly, it is natural to ask: if there is no strong-to-weak spontaneous symmetry breaking, can we make even stronger statements about the mixed quantum states?

    \item For ground states, anomalous symmetries not only impose constraints on pure states, leading to non-trivial features like long-range entanglement, but also restrict their parent Hamiltonians, which cannot be symmetrically gapped without topological order. Can an analogous statement be established regarding the dynamics of anomalous open systems? In such context, the Lindbladian naturally replaces the role of the Hamiltonian as a generator of dynamics. One interesting direction is to investigate the conditions under which the steady-states of Lindbladian evolutions are gapped Markovian or not (Def.~\ref{def:gapped}), and how this relates to properties of the Lindbladian.
    \item In Sec.~\ref{sec:LSM} we have seen that whether the strong on-site symmetry is continuous or discrete makes a significant difference -- the former makes the states nonseparable while the latter does not, even though the 't Hooft anomaly is nontrivial in both cases. Are non-Abelian continuous symmetries intrinsically special in some sense? Or is this just a special feature of the 1d spin-$1/2$ chain?
    \item Last, but perhaps the most important: can we find some {physically observable} consequences of the multipartite nonseparability?
\end{enumerate}

\begin{acknowledgments}
We thank Tyler Ellison, Tarun Grover, Timothy Hsieh, Chaoming Jian, Anton Kapustin, Tsung-Cheng Lu, Ruochen Ma, Shengqi Sang, Amin Moharramipour, Amir-Reza Negari,  Jianhao Zhang and Yijian Zou for illuminating discussions. LAL acknowledges  supports  from  the  Natural Sciences and Engineering Research Council of Canada (NSERC)  through  Discovery  Grants. Research at Perimeter Institute (LAL and CW) is supported in part by the Government of Canada through the Department of Innovation, Science and Industry Canada and by the Province of Ontario through the Ministry of Colleges and Universities. MC acknowledges supports from NSF under award number DMR-1846109.
\end{acknowledgments}

\bibliography{bibliography}

\begin{thebibliography}{91}%
\makeatletter
\providecommand \@ifxundefined [1]{%
 \@ifx{#1\undefined}
}%
\providecommand \@ifnum [1]{%
 \ifnum #1\expandafter \@firstoftwo
 \else \expandafter \@secondoftwo
 \fi
}%
\providecommand \@ifx [1]{%
 \ifx #1\expandafter \@firstoftwo
 \else \expandafter \@secondoftwo
 \fi
}%
\providecommand \natexlab [1]{#1}%
\providecommand \enquote  [1]{``#1''}%
\providecommand \bibnamefont  [1]{#1}%
\providecommand \bibfnamefont [1]{#1}%
\providecommand \citenamefont [1]{#1}%
\providecommand \href@noop [0]{\@secondoftwo}%
\providecommand \href [0]{\begingroup \@sanitize@url \@href}%
\providecommand \@href[1]{\@@startlink{#1}\@@href}%
\providecommand \@@href[1]{\endgroup#1\@@endlink}%
\providecommand \@sanitize@url [0]{\catcode `\\12\catcode `\$12\catcode `\&12\catcode `\#12\catcode `\^12\catcode `\_12\catcode `\%12\relax}%
\providecommand \@@startlink[1]{}%
\providecommand \@@endlink[0]{}%
\providecommand \url  [0]{\begingroup\@sanitize@url \@url }%
\providecommand \@url [1]{\endgroup\@href {#1}{\urlprefix }}%
\providecommand \urlprefix  [0]{URL }%
\providecommand \Eprint [0]{\href }%
\providecommand \doibase [0]{https://doi.org/}%
\providecommand \selectlanguage [0]{\@gobble}%
\providecommand \bibinfo  [0]{\@secondoftwo}%
\providecommand \bibfield  [0]{\@secondoftwo}%
\providecommand \translation [1]{[#1]}%
\providecommand \BibitemOpen [0]{}%
\providecommand \bibitemStop [0]{}%
\providecommand \bibitemNoStop [0]{.\EOS\space}%
\providecommand \EOS [0]{\spacefactor3000\relax}%
\providecommand \BibitemShut  [1]{\csname bibitem#1\endcsname}%
\let\auto@bib@innerbib\@empty
\bibitem [{\citenamefont {Adler}(1969)}]{AdlerAnomaly}%
  \BibitemOpen
  \bibfield  {author} {\bibinfo {author} {\bibfnamefont {S.~L.}\ \bibnamefont {Adler}},\ }\bibfield  {title} {\bibinfo {title} {Axial-vector vertex in spinor electrodynamics},\ }\href {https://doi.org/10.1103/PhysRev.177.2426} {\bibfield  {journal} {\bibinfo  {journal} {Phys. Rev.}\ }\textbf {\bibinfo {volume} {177}},\ \bibinfo {pages} {2426} (\bibinfo {year} {1969})}\BibitemShut {NoStop}%
\bibitem [{\citenamefont {Bell}\ and\ \citenamefont {Jackiw}(1969)}]{BJAnomaly}%
  \BibitemOpen
  \bibfield  {author} {\bibinfo {author} {\bibfnamefont {J.}~\bibnamefont {Bell}}\ and\ \bibinfo {author} {\bibfnamefont {R.}~\bibnamefont {Jackiw}},\ }\bibfield  {title} {\bibinfo {title} {A pcac puzzle: $\pi^0\to\gamma\gamma$ in the $\sigma$-model},\ }\href {https://doi.org/10.1007/BF02823296} {\bibfield  {journal} {\bibinfo  {journal} {Nuovo Cimento A}\ }\textbf {\bibinfo {volume} {60}},\ \bibinfo {pages} {47} (\bibinfo {year} {1969})}\BibitemShut {NoStop}%
\bibitem [{\citenamefont {Hooft}(1980)}]{Hooft1980}%
  \BibitemOpen
  \bibfield  {author} {\bibinfo {author} {\bibfnamefont {G.}~\bibnamefont {Hooft}},\ }\bibinfo {title} {Naturalness, chiral symmetry, and spontaneous chiral symmetry breaking},\ in\ \href {https://doi.org/10.1007/978-1-4684-7571-5_9} {\emph {\bibinfo {booktitle} {Recent Developments in Gauge Theories}}},\ \bibinfo {editor} {edited by\ \bibinfo {editor} {\bibfnamefont {G.}~\bibnamefont {Hooft}}, \bibinfo {editor} {\bibfnamefont {C.}~\bibnamefont {Itzykson}}, \bibinfo {editor} {\bibfnamefont {A.}~\bibnamefont {Jaffe}}, \bibinfo {editor} {\bibfnamefont {H.}~\bibnamefont {Lehmann}}, \bibinfo {editor} {\bibfnamefont {P.~K.}\ \bibnamefont {Mitter}}, \bibinfo {editor} {\bibfnamefont {I.~M.}\ \bibnamefont {Singer}},\ and\ \bibinfo {editor} {\bibfnamefont {R.}~\bibnamefont {Stora}}}\ (\bibinfo  {publisher} {Springer US},\ \bibinfo {address} {Boston, MA},\ \bibinfo {year} {1980})\ pp.\ \bibinfo {pages} {135--157}\BibitemShut {NoStop}%
\bibitem [{\citenamefont {{Chen}}\ \emph {et~al.}(2010)\citenamefont {{Chen}}, \citenamefont {{Gu}},\ and\ \citenamefont {{Wen}}}]{ChenLRE}%
  \BibitemOpen
  \bibfield  {author} {\bibinfo {author} {\bibfnamefont {X.}~\bibnamefont {{Chen}}}, \bibinfo {author} {\bibfnamefont {Z.-C.}\ \bibnamefont {{Gu}}},\ and\ \bibinfo {author} {\bibfnamefont {X.-G.}\ \bibnamefont {{Wen}}},\ }\bibfield  {title} {\bibinfo {title} {{Local unitary transformation, long-range quantum entanglement, wave function renormalization, and topological order}},\ }\href {https://doi.org/10.1103/PhysRevB.82.155138} {\bibfield  {journal} {\bibinfo  {journal} {\prb}\ }\textbf {\bibinfo {volume} {82}},\ \bibinfo {eid} {155138} (\bibinfo {year} {2010})},\ \Eprint {https://arxiv.org/abs/1004.3835} {arXiv:1004.3835 [cond-mat.str-el]} \BibitemShut {NoStop}%
\bibitem [{\citenamefont {Chen}\ \emph {et~al.}(2012)\citenamefont {Chen}, \citenamefont {Gu}, \citenamefont {Liu},\ and\ \citenamefont {Wen}}]{chen_symmetryprotected_2012}%
  \BibitemOpen
  \bibfield  {author} {\bibinfo {author} {\bibfnamefont {X.}~\bibnamefont {Chen}}, \bibinfo {author} {\bibfnamefont {Z.-C.}\ \bibnamefont {Gu}}, \bibinfo {author} {\bibfnamefont {Z.-X.}\ \bibnamefont {Liu}},\ and\ \bibinfo {author} {\bibfnamefont {X.-G.}\ \bibnamefont {Wen}},\ }\bibfield  {title} {\bibinfo {title} {Symmetry-{{Protected Topological Orders}} in {{Interacting Bosonic Systems}}},\ }\href {https://doi.org/10.1126/science.1227224} {\bibfield  {journal} {\bibinfo  {journal} {Science}\ }\textbf {\bibinfo {volume} {338}},\ \bibinfo {pages} {1604} (\bibinfo {year} {2012})}\BibitemShut {NoStop}%
\bibitem [{\citenamefont {{Chen}}\ \emph {et~al.}(2013)\citenamefont {{Chen}}, \citenamefont {{Gu}}, \citenamefont {{Liu}},\ and\ \citenamefont {{Wen}}}]{chen2013}%
  \BibitemOpen
  \bibfield  {author} {\bibinfo {author} {\bibfnamefont {X.}~\bibnamefont {{Chen}}}, \bibinfo {author} {\bibfnamefont {Z.-C.}\ \bibnamefont {{Gu}}}, \bibinfo {author} {\bibfnamefont {Z.-X.}\ \bibnamefont {{Liu}}},\ and\ \bibinfo {author} {\bibfnamefont {X.-G.}\ \bibnamefont {{Wen}}},\ }\bibfield  {title} {\bibinfo {title} {{Symmetry protected topological orders and the group cohomology of their symmetry group}},\ }\href {https://doi.org/10.1103/PhysRevB.87.155114} {\bibfield  {journal} {\bibinfo  {journal} {\prb}\ }\textbf {\bibinfo {volume} {87}},\ \bibinfo {eid} {155114} (\bibinfo {year} {2013})},\ \Eprint {https://arxiv.org/abs/1106.4772} {arXiv:1106.4772 [cond-mat.str-el]} \BibitemShut {NoStop}%
\bibitem [{\citenamefont {Lieb}\ \emph {et~al.}(1961)\citenamefont {Lieb}, \citenamefont {Schultz},\ and\ \citenamefont {Mattis}}]{Lieb1961}%
  \BibitemOpen
  \bibfield  {author} {\bibinfo {author} {\bibfnamefont {E.}~\bibnamefont {Lieb}}, \bibinfo {author} {\bibfnamefont {T.}~\bibnamefont {Schultz}},\ and\ \bibinfo {author} {\bibfnamefont {D.}~\bibnamefont {Mattis}},\ }\bibfield  {title} {\bibinfo {title} {Two soluble models of an antiferromagnetic chain},\ }\href {https://doi.org/https://doi.org/10.1016/0003-4916(61)90115-4} {\bibfield  {journal} {\bibinfo  {journal} {Annals of Physics}\ }\textbf {\bibinfo {volume} {16}},\ \bibinfo {pages} {407 } (\bibinfo {year} {1961})}\BibitemShut {NoStop}%
\bibitem [{\citenamefont {{Oshikawa}}(2000)}]{Oshikawa1999}%
  \BibitemOpen
  \bibfield  {author} {\bibinfo {author} {\bibfnamefont {M.}~\bibnamefont {{Oshikawa}}},\ }\bibfield  {title} {\bibinfo {title} {{Commensurability, Excitation Gap, and Topology in Quantum Many-Particle Systems on a Periodic Lattice}},\ }\href {https://doi.org/10.1103/PhysRevLett.84.1535} {\bibfield  {journal} {\bibinfo  {journal} {\prl}\ }\textbf {\bibinfo {volume} {84}},\ \bibinfo {pages} {1535} (\bibinfo {year} {2000})},\ \Eprint {https://arxiv.org/abs/cond-mat/9911137} {arXiv:cond-mat/9911137 [cond-mat.str-el]} \BibitemShut {NoStop}%
\bibitem [{\citenamefont {{Hastings}}(2004)}]{Hastings2003}%
  \BibitemOpen
  \bibfield  {author} {\bibinfo {author} {\bibfnamefont {M.~B.}\ \bibnamefont {{Hastings}}},\ }\bibfield  {title} {\bibinfo {title} {{Lieb-Schultz-Mattis in higher dimensions}},\ }\href {https://doi.org/10.1103/PhysRevB.69.104431} {\bibfield  {journal} {\bibinfo  {journal} {\prb}\ }\textbf {\bibinfo {volume} {69}},\ \bibinfo {eid} {104431} (\bibinfo {year} {2004})},\ \Eprint {https://arxiv.org/abs/cond-mat/0305505} {arXiv:cond-mat/0305505 [cond-mat.str-el]} \BibitemShut {NoStop}%
\bibitem [{\citenamefont {{Cheng}}\ \emph {et~al.}(2016)\citenamefont {{Cheng}}, \citenamefont {{Zaletel}}, \citenamefont {{Barkeshli}}, \citenamefont {{Vishwanath}},\ and\ \citenamefont {{Bonderson}}}]{Cheng2015}%
  \BibitemOpen
  \bibfield  {author} {\bibinfo {author} {\bibfnamefont {M.}~\bibnamefont {{Cheng}}}, \bibinfo {author} {\bibfnamefont {M.}~\bibnamefont {{Zaletel}}}, \bibinfo {author} {\bibfnamefont {M.}~\bibnamefont {{Barkeshli}}}, \bibinfo {author} {\bibfnamefont {A.}~\bibnamefont {{Vishwanath}}},\ and\ \bibinfo {author} {\bibfnamefont {P.}~\bibnamefont {{Bonderson}}},\ }\bibfield  {title} {\bibinfo {title} {{Translational Symmetry and Microscopic Constraints on Symmetry-Enriched Topological Phases: A View from the Surface}},\ }\href {https://doi.org/10.1103/PhysRevX.6.041068} {\bibfield  {journal} {\bibinfo  {journal} {Physical Review X}\ }\textbf {\bibinfo {volume} {6}},\ \bibinfo {eid} {041068} (\bibinfo {year} {2016})},\ \Eprint {https://arxiv.org/abs/1511.02263} {arXiv:1511.02263 [cond-mat.str-el]} \BibitemShut {NoStop}%
\bibitem [{\citenamefont {{Po}}\ \emph {et~al.}(2017)\citenamefont {{Po}}, \citenamefont {{Watanabe}}, \citenamefont {{Jian}},\ and\ \citenamefont {{Zaletel}}}]{Po2017}%
  \BibitemOpen
  \bibfield  {author} {\bibinfo {author} {\bibfnamefont {H.~C.}\ \bibnamefont {{Po}}}, \bibinfo {author} {\bibfnamefont {H.}~\bibnamefont {{Watanabe}}}, \bibinfo {author} {\bibfnamefont {C.-M.}\ \bibnamefont {{Jian}}},\ and\ \bibinfo {author} {\bibfnamefont {M.~P.}\ \bibnamefont {{Zaletel}}},\ }\bibfield  {title} {\bibinfo {title} {{Lattice Homotopy Constraints on Phases of Quantum Magnets}},\ }\href {https://doi.org/10.1103/PhysRevLett.119.127202} {\bibfield  {journal} {\bibinfo  {journal} {\prl}\ }\textbf {\bibinfo {volume} {119}},\ \bibinfo {eid} {127202} (\bibinfo {year} {2017})},\ \Eprint {https://arxiv.org/abs/1703.06882} {arXiv:1703.06882 [cond-mat.str-el]} \BibitemShut {NoStop}%
\bibitem [{\citenamefont {{Ye}}\ \emph {et~al.}(2022)\citenamefont {{Ye}}, \citenamefont {{Guo}}, \citenamefont {{He}}, \citenamefont {{Wang}},\ and\ \citenamefont {{Zou}}}]{Ye2021}%
  \BibitemOpen
  \bibfield  {author} {\bibinfo {author} {\bibfnamefont {W.}~\bibnamefont {{Ye}}}, \bibinfo {author} {\bibfnamefont {M.}~\bibnamefont {{Guo}}}, \bibinfo {author} {\bibfnamefont {Y.-C.}\ \bibnamefont {{He}}}, \bibinfo {author} {\bibfnamefont {C.}~\bibnamefont {{Wang}}},\ and\ \bibinfo {author} {\bibfnamefont {L.}~\bibnamefont {{Zou}}},\ }\bibfield  {title} {\bibinfo {title} {{Topological characterization of Lieb-Schultz-Mattis constraints and applications to symmetry-enriched quantum criticality}},\ }\href {https://doi.org/10.21468/SciPostPhys.13.3.066} {\bibfield  {journal} {\bibinfo  {journal} {SciPost Physics}\ }\textbf {\bibinfo {volume} {13}},\ \bibinfo {eid} {066} (\bibinfo {year} {2022})},\ \Eprint {https://arxiv.org/abs/2111.12097} {arXiv:2111.12097 [cond-mat.str-el]} \BibitemShut {NoStop}%
\bibitem [{\citenamefont {{Cheng}}\ and\ \citenamefont {{Seiberg}}(2023)}]{Cheng2023}%
  \BibitemOpen
  \bibfield  {author} {\bibinfo {author} {\bibfnamefont {M.}~\bibnamefont {{Cheng}}}\ and\ \bibinfo {author} {\bibfnamefont {N.}~\bibnamefont {{Seiberg}}},\ }\bibfield  {title} {\bibinfo {title} {{Lieb-Schultz-Mattis, Luttinger, and 't Hooft - anomaly matching in lattice systems}},\ }\href {https://doi.org/10.21468/SciPostPhys.15.2.051} {\bibfield  {journal} {\bibinfo  {journal} {SciPost Physics}\ }\textbf {\bibinfo {volume} {15}},\ \bibinfo {eid} {051} (\bibinfo {year} {2023})},\ \Eprint {https://arxiv.org/abs/2211.12543} {arXiv:2211.12543 [cond-mat.str-el]} \BibitemShut {NoStop}%
\bibitem [{\citenamefont {{Chen}}\ \emph {et~al.}(2011)\citenamefont {{Chen}}, \citenamefont {{Liu}},\ and\ \citenamefont {{Wen}}}]{CZX}%
  \BibitemOpen
  \bibfield  {author} {\bibinfo {author} {\bibfnamefont {X.}~\bibnamefont {{Chen}}}, \bibinfo {author} {\bibfnamefont {Z.-X.}\ \bibnamefont {{Liu}}},\ and\ \bibinfo {author} {\bibfnamefont {X.-G.}\ \bibnamefont {{Wen}}},\ }\bibfield  {title} {\bibinfo {title} {{Two-dimensional symmetry-protected topological orders and their protected gapless edge excitations}},\ }\href {https://doi.org/10.1103/PhysRevB.84.235141} {\bibfield  {journal} {\bibinfo  {journal} {\prb}\ }\textbf {\bibinfo {volume} {84}},\ \bibinfo {eid} {235141} (\bibinfo {year} {2011})},\ \Eprint {https://arxiv.org/abs/1106.4752} {arXiv:1106.4752 [cond-mat.str-el]} \BibitemShut {NoStop}%
\bibitem [{\citenamefont {{Levin}}\ and\ \citenamefont {{Gu}}(2012)}]{LevinGu}%
  \BibitemOpen
  \bibfield  {author} {\bibinfo {author} {\bibfnamefont {M.}~\bibnamefont {{Levin}}}\ and\ \bibinfo {author} {\bibfnamefont {Z.-C.}\ \bibnamefont {{Gu}}},\ }\bibfield  {title} {\bibinfo {title} {{Braiding statistics approach to symmetry-protected topological phases}},\ }\href {https://doi.org/10.1103/PhysRevB.86.115109} {\bibfield  {journal} {\bibinfo  {journal} {\prb}\ }\textbf {\bibinfo {volume} {86}},\ \bibinfo {eid} {115109} (\bibinfo {year} {2012})},\ \Eprint {https://arxiv.org/abs/1202.3120} {arXiv:1202.3120 [cond-mat.str-el]} \BibitemShut {NoStop}%
\bibitem [{\citenamefont {Yoshida}(2016)}]{yoshida_topological_2016}%
  \BibitemOpen
  \bibfield  {author} {\bibinfo {author} {\bibfnamefont {B.}~\bibnamefont {Yoshida}},\ }\bibfield  {title} {\bibinfo {title} {Topological phases with generalized global symmetries},\ }\href {https://doi.org/10.1103/PhysRevB.93.155131} {\bibfield  {journal} {\bibinfo  {journal} {Physical Review B}\ }\textbf {\bibinfo {volume} {93}},\ \bibinfo {pages} {155131} (\bibinfo {year} {2016})}\BibitemShut {NoStop}%
\bibitem [{\citenamefont {Bu{\v c}a}\ and\ \citenamefont {Prosen}(2012)}]{buca_note_2012}%
  \BibitemOpen
  \bibfield  {author} {\bibinfo {author} {\bibfnamefont {B.}~\bibnamefont {Bu{\v c}a}}\ and\ \bibinfo {author} {\bibfnamefont {T.}~\bibnamefont {Prosen}},\ }\bibfield  {title} {\bibinfo {title} {A note on symmetry reductions of the {{Lindblad}} equation: Transport in constrained open spin chains},\ }\href {https://doi.org/10.1088/1367-2630/14/7/073007} {\bibfield  {journal} {\bibinfo  {journal} {New Journal of Physics}\ }\textbf {\bibinfo {volume} {14}},\ \bibinfo {pages} {073007} (\bibinfo {year} {2012})}\BibitemShut {NoStop}%
\bibitem [{\citenamefont {de~Groot}\ \emph {et~al.}(2022)\citenamefont {de~Groot}, \citenamefont {Turzillo},\ and\ \citenamefont {Schuch}}]{deGroot2022}%
  \BibitemOpen
  \bibfield  {author} {\bibinfo {author} {\bibfnamefont {C.}~\bibnamefont {de~Groot}}, \bibinfo {author} {\bibfnamefont {A.}~\bibnamefont {Turzillo}},\ and\ \bibinfo {author} {\bibfnamefont {N.}~\bibnamefont {Schuch}},\ }\bibfield  {title} {\bibinfo {title} {{Symmetry Protected Topological Order in Open Quantum Systems}},\ }\href {https://doi.org/10.22331/q-2022-11-10-856} {\bibfield  {journal} {\bibinfo  {journal} {Quantum}\ }\textbf {\bibinfo {volume} {6}},\ \bibinfo {pages} {856} (\bibinfo {year} {2022})}\BibitemShut {NoStop}%
\bibitem [{\citenamefont {Ma}\ and\ \citenamefont {Wang}(2023)}]{ma_average_2023}%
  \BibitemOpen
  \bibfield  {author} {\bibinfo {author} {\bibfnamefont {R.}~\bibnamefont {Ma}}\ and\ \bibinfo {author} {\bibfnamefont {C.}~\bibnamefont {Wang}},\ }\bibfield  {title} {\bibinfo {title} {Average {{Symmetry-Protected Topological Phases}}},\ }\href {https://doi.org/10.1103/PhysRevX.13.031016} {\bibfield  {journal} {\bibinfo  {journal} {Physical Review X}\ }\textbf {\bibinfo {volume} {13}},\ \bibinfo {pages} {031016} (\bibinfo {year} {2023})}\BibitemShut {NoStop}%
\bibitem [{\citenamefont {{Lee}}\ \emph {et~al.}(2022)\citenamefont {{Lee}}, \citenamefont {{You}},\ and\ \citenamefont {{Xu}}}]{LeeYouXu2022}%
  \BibitemOpen
  \bibfield  {author} {\bibinfo {author} {\bibfnamefont {J.~Y.}\ \bibnamefont {{Lee}}}, \bibinfo {author} {\bibfnamefont {Y.-Z.}\ \bibnamefont {{You}}},\ and\ \bibinfo {author} {\bibfnamefont {C.}~\bibnamefont {{Xu}}},\ }\bibfield  {title} {\bibinfo {title} {{Symmetry protected topological phases under decoherence}},\ }\href {https://doi.org/10.48550/arXiv.2210.16323} {\bibfield  {journal} {\bibinfo  {journal} {arXiv e-prints}\ ,\ \bibinfo {eid} {arXiv:2210.16323}} (\bibinfo {year} {2022})},\ \Eprint {https://arxiv.org/abs/2210.16323} {arXiv:2210.16323 [cond-mat.str-el]} \BibitemShut {NoStop}%
\bibitem [{\citenamefont {{Zhang}}\ \emph {et~al.}(2022)\citenamefont {{Zhang}}, \citenamefont {{Qi}},\ and\ \citenamefont {{Bi}}}]{ZhangQiBi2022}%
  \BibitemOpen
  \bibfield  {author} {\bibinfo {author} {\bibfnamefont {J.-H.}\ \bibnamefont {{Zhang}}}, \bibinfo {author} {\bibfnamefont {Y.}~\bibnamefont {{Qi}}},\ and\ \bibinfo {author} {\bibfnamefont {Z.}~\bibnamefont {{Bi}}},\ }\bibfield  {title} {\bibinfo {title} {{Strange Correlation Function for Average Symmetry-Protected Topological Phases}},\ }\href {https://doi.org/10.48550/arXiv.2210.17485} {\bibfield  {journal} {\bibinfo  {journal} {arXiv e-prints}\ ,\ \bibinfo {eid} {arXiv:2210.17485}} (\bibinfo {year} {2022})},\ \Eprint {https://arxiv.org/abs/2210.17485} {arXiv:2210.17485 [cond-mat.str-el]} \BibitemShut {NoStop}%
\bibitem [{\citenamefont {{Ma}}\ \emph {et~al.}(2023)\citenamefont {{Ma}}, \citenamefont {{Zhang}}, \citenamefont {{Bi}}, \citenamefont {{Cheng}},\ and\ \citenamefont {{Wang}}}]{Ma_intrisic_2023}%
  \BibitemOpen
  \bibfield  {author} {\bibinfo {author} {\bibfnamefont {R.}~\bibnamefont {{Ma}}}, \bibinfo {author} {\bibfnamefont {J.-H.}\ \bibnamefont {{Zhang}}}, \bibinfo {author} {\bibfnamefont {Z.}~\bibnamefont {{Bi}}}, \bibinfo {author} {\bibfnamefont {M.}~\bibnamefont {{Cheng}}},\ and\ \bibinfo {author} {\bibfnamefont {C.}~\bibnamefont {{Wang}}},\ }\bibfield  {title} {\bibinfo {title} {{Topological Phases with Average Symmetries: the Decohered, the Disordered, and the Intrinsic}},\ }\href {https://doi.org/10.48550/arXiv.2305.16399} {\bibfield  {journal} {\bibinfo  {journal} {arXiv e-prints}\ ,\ \bibinfo {eid} {arXiv:2305.16399}} (\bibinfo {year} {2023})},\ \Eprint {https://arxiv.org/abs/2305.16399} {arXiv:2305.16399 [cond-mat.str-el]} \BibitemShut {NoStop}%
\bibitem [{\citenamefont {Horodecki}\ \emph {et~al.}(2024)\citenamefont {Horodecki}, \citenamefont {Rudnicki},\ and\ \citenamefont {{\.Z}yczkowski}}]{horodecki_multipartite_2024}%
  \BibitemOpen
  \bibfield  {author} {\bibinfo {author} {\bibfnamefont {P.}~\bibnamefont {Horodecki}}, \bibinfo {author} {\bibfnamefont {{\L}.}~\bibnamefont {Rudnicki}},\ and\ \bibinfo {author} {\bibfnamefont {K.}~\bibnamefont {{\.Z}yczkowski}},\ }\href {https://doi.org/10.48550/arXiv.2409.04566} {\bibinfo {title} {Multipartite entanglement}} (\bibinfo {year} {2024})\BibitemShut {NoStop}%
\bibitem [{\citenamefont {D{\"u}r}\ \emph {et~al.}(2000)\citenamefont {D{\"u}r}, \citenamefont {Vidal},\ and\ \citenamefont {Cirac}}]{dur_three_2000}%
  \BibitemOpen
  \bibfield  {author} {\bibinfo {author} {\bibfnamefont {W.}~\bibnamefont {D{\"u}r}}, \bibinfo {author} {\bibfnamefont {G.}~\bibnamefont {Vidal}},\ and\ \bibinfo {author} {\bibfnamefont {J.~I.}\ \bibnamefont {Cirac}},\ }\bibfield  {title} {\bibinfo {title} {Three qubits can be entangled in two inequivalent ways},\ }\href {https://doi.org/10.1103/PhysRevA.62.062314} {\bibfield  {journal} {\bibinfo  {journal} {Physical Review A}\ }\textbf {\bibinfo {volume} {62}},\ \bibinfo {pages} {062314} (\bibinfo {year} {2000})}\BibitemShut {NoStop}%
\bibitem [{\citenamefont {Verstraete}\ \emph {et~al.}(2002)\citenamefont {Verstraete}, \citenamefont {Dehaene}, \citenamefont {De~Moor},\ and\ \citenamefont {Verschelde}}]{verstraete_four_2002}%
  \BibitemOpen
  \bibfield  {author} {\bibinfo {author} {\bibfnamefont {F.}~\bibnamefont {Verstraete}}, \bibinfo {author} {\bibfnamefont {J.}~\bibnamefont {Dehaene}}, \bibinfo {author} {\bibfnamefont {B.}~\bibnamefont {De~Moor}},\ and\ \bibinfo {author} {\bibfnamefont {H.}~\bibnamefont {Verschelde}},\ }\bibfield  {title} {\bibinfo {title} {Four qubits can be entangled in nine different ways},\ }\href {https://doi.org/10.1103/PhysRevA.65.052112} {\bibfield  {journal} {\bibinfo  {journal} {Physical Review A}\ }\textbf {\bibinfo {volume} {65}},\ \bibinfo {pages} {052112} (\bibinfo {year} {2002})}\BibitemShut {NoStop}%
\bibitem [{\citenamefont {Horodecki}\ \emph {et~al.}(2009)\citenamefont {Horodecki}, \citenamefont {Horodecki}, \citenamefont {Horodecki},\ and\ \citenamefont {Horodecki}}]{horodecki_quantum_2009}%
  \BibitemOpen
  \bibfield  {author} {\bibinfo {author} {\bibfnamefont {R.}~\bibnamefont {Horodecki}}, \bibinfo {author} {\bibfnamefont {P.}~\bibnamefont {Horodecki}}, \bibinfo {author} {\bibfnamefont {M.}~\bibnamefont {Horodecki}},\ and\ \bibinfo {author} {\bibfnamefont {K.}~\bibnamefont {Horodecki}},\ }\bibfield  {title} {\bibinfo {title} {Quantum entanglement},\ }\href {https://doi.org/10.1103/RevModPhys.81.865} {\bibfield  {journal} {\bibinfo  {journal} {Reviews of Modern Physics}\ }\textbf {\bibinfo {volume} {81}},\ \bibinfo {pages} {865} (\bibinfo {year} {2009})}\BibitemShut {NoStop}%
\bibitem [{\citenamefont {Bennett}\ \emph {et~al.}(1999)\citenamefont {Bennett}, \citenamefont {DiVincenzo}, \citenamefont {Mor}, \citenamefont {Shor}, \citenamefont {Smolin},\ and\ \citenamefont {Terhal}}]{bennett_unextendible_1999}%
  \BibitemOpen
  \bibfield  {author} {\bibinfo {author} {\bibfnamefont {C.~H.}\ \bibnamefont {Bennett}}, \bibinfo {author} {\bibfnamefont {D.~P.}\ \bibnamefont {DiVincenzo}}, \bibinfo {author} {\bibfnamefont {T.}~\bibnamefont {Mor}}, \bibinfo {author} {\bibfnamefont {P.~W.}\ \bibnamefont {Shor}}, \bibinfo {author} {\bibfnamefont {J.~A.}\ \bibnamefont {Smolin}},\ and\ \bibinfo {author} {\bibfnamefont {B.~M.}\ \bibnamefont {Terhal}},\ }\bibfield  {title} {\bibinfo {title} {Unextendible {{Product Bases}} and {{Bound Entanglement}}},\ }\href {https://doi.org/10.1103/PhysRevLett.82.5385} {\bibfield  {journal} {\bibinfo  {journal} {Physical Review Letters}\ }\textbf {\bibinfo {volume} {82}},\ \bibinfo {pages} {5385} (\bibinfo {year} {1999})}\BibitemShut {NoStop}%
\bibitem [{\citenamefont {Gurvits}(2003)}]{gurvits_classical_2003}%
  \BibitemOpen
  \bibfield  {author} {\bibinfo {author} {\bibfnamefont {L.}~\bibnamefont {Gurvits}},\ }\bibfield  {title} {\bibinfo {title} {Classical deterministic complexity of {{Edmonds}}' {{Problem}} and quantum entanglement},\ }in\ \href {https://doi.org/10.1145/780542.780545} {\emph {\bibinfo {booktitle} {Proceedings of the Thirty-Fifth Annual {{ACM}} Symposium on {{Theory}} of Computing}}},\ \bibinfo {series and number} {{{STOC}} '03}\ (\bibinfo  {publisher} {{Association for Computing Machinery}},\ \bibinfo {address} {{New York, NY, USA}},\ \bibinfo {year} {2003})\ pp.\ \bibinfo {pages} {10--19}\BibitemShut {NoStop}%
\bibitem [{\citenamefont {Ioannou}(2007)}]{ioannou_computational_2007}%
  \BibitemOpen
  \bibfield  {author} {\bibinfo {author} {\bibfnamefont {L.~M.}\ \bibnamefont {Ioannou}},\ }\bibfield  {title} {\bibinfo {title} {Computational complexity of the quantum separability problem},\ }\href@noop {} {\bibfield  {journal} {\bibinfo  {journal} {Quantum Information \& Computation}\ }\textbf {\bibinfo {volume} {7}},\ \bibinfo {pages} {335} (\bibinfo {year} {2007})}\BibitemShut {NoStop}%
\bibitem [{\citenamefont {Gharibian}(2010)}]{gharibian_strong_2010}%
  \BibitemOpen
  \bibfield  {author} {\bibinfo {author} {\bibfnamefont {S.}~\bibnamefont {Gharibian}},\ }\bibfield  {title} {\bibinfo {title} {Strong {{NP-hardness}} of the quantum separability problem},\ }\href@noop {} {\bibfield  {journal} {\bibinfo  {journal} {Quantum Information \& Computation}\ }\textbf {\bibinfo {volume} {10}},\ \bibinfo {pages} {343} (\bibinfo {year} {2010})}\BibitemShut {NoStop}%
\bibitem [{\citenamefont {Gutoski}\ \emph {et~al.}(2015)\citenamefont {Gutoski}, \citenamefont {Hayden}, \citenamefont {Milner},\ and\ \citenamefont {Wilde}}]{gutoski_quantum_2015}%
  \BibitemOpen
  \bibfield  {author} {\bibinfo {author} {\bibfnamefont {G.}~\bibnamefont {Gutoski}}, \bibinfo {author} {\bibfnamefont {P.}~\bibnamefont {Hayden}}, \bibinfo {author} {\bibfnamefont {K.}~\bibnamefont {Milner}},\ and\ \bibinfo {author} {\bibfnamefont {M.~M.}\ \bibnamefont {Wilde}},\ }\bibfield  {title} {\bibinfo {title} {Quantum {{Interactive Proofs}} and the {{Complexity}} of {{Separability Testing}}},\ }\href {https://doi.org/10.4086/toc.2015.v011a003} {\bibfield  {journal} {\bibinfo  {journal} {Theory of Computing}\ }\textbf {\bibinfo {volume} {11}},\ \bibinfo {pages} {59} (\bibinfo {year} {2015})}\BibitemShut {NoStop}%
\bibitem [{\citenamefont {Brassard}\ and\ \citenamefont {Mor}(1999)}]{brassard_multiparticle_1999}%
  \BibitemOpen
  \bibfield  {author} {\bibinfo {author} {\bibfnamefont {G.}~\bibnamefont {Brassard}}\ and\ \bibinfo {author} {\bibfnamefont {T.}~\bibnamefont {Mor}},\ }\bibfield  {title} {\bibinfo {title} {Multi-particle {{Entanglement}} via {{Two-Particle Entanglement}}},\ }in\ \href {https://doi.org/10.1007/3-540-49208-9_1} {\emph {\bibinfo {booktitle} {Quantum {{Computing}} and {{Quantum Communications}}}}},\ \bibinfo {series and number} {Lecture {{Notes}} in {{Computer Science}}},\ \bibinfo {editor} {edited by\ \bibinfo {editor} {\bibfnamefont {C.~P.}\ \bibnamefont {Williams}}}\ (\bibinfo  {publisher} {{Springer}},\ \bibinfo {address} {{Berlin, Heidelberg}},\ \bibinfo {year} {1999})\ pp.\ \bibinfo {pages} {1--9}\BibitemShut {NoStop}%
\bibitem [{\citenamefont {G{\"u}hne}\ and\ \citenamefont {T{\'o}th}(2009)}]{guhne_entanglement_2009}%
  \BibitemOpen
  \bibfield  {author} {\bibinfo {author} {\bibfnamefont {O.}~\bibnamefont {G{\"u}hne}}\ and\ \bibinfo {author} {\bibfnamefont {G.}~\bibnamefont {T{\'o}th}},\ }\bibfield  {title} {\bibinfo {title} {Entanglement detection},\ }\href {https://doi.org/10.1016/j.physrep.2009.02.004} {\bibfield  {journal} {\bibinfo  {journal} {Physics Reports}\ }\textbf {\bibinfo {volume} {474}},\ \bibinfo {pages} {1} (\bibinfo {year} {2009})}\BibitemShut {NoStop}%
\bibitem [{\citenamefont {Brand{\~a}o}\ and\ \citenamefont {Vianna}(2004)}]{brandao_separable_2004}%
  \BibitemOpen
  \bibfield  {author} {\bibinfo {author} {\bibfnamefont {F.~G. S.~L.}\ \bibnamefont {Brand{\~a}o}}\ and\ \bibinfo {author} {\bibfnamefont {R.~O.}\ \bibnamefont {Vianna}},\ }\bibfield  {title} {\bibinfo {title} {Separable {{Multipartite Mixed States}}: {{Operational Asymptotically Necessary}} and {{Sufficient Conditions}}},\ }\href {https://doi.org/10.1103/PhysRevLett.93.220503} {\bibfield  {journal} {\bibinfo  {journal} {Physical Review Letters}\ }\textbf {\bibinfo {volume} {93}},\ \bibinfo {pages} {220503} (\bibinfo {year} {2004})}\BibitemShut {NoStop}%
\bibitem [{\citenamefont {Horodecki}\ \emph {et~al.}(2006)\citenamefont {Horodecki}, \citenamefont {Horodecki},\ and\ \citenamefont {Horodecki}}]{horodecki_separability_2006}%
  \BibitemOpen
  \bibfield  {author} {\bibinfo {author} {\bibfnamefont {M.}~\bibnamefont {Horodecki}}, \bibinfo {author} {\bibfnamefont {P.}~\bibnamefont {Horodecki}},\ and\ \bibinfo {author} {\bibfnamefont {R.}~\bibnamefont {Horodecki}},\ }\bibfield  {title} {\bibinfo {title} {Separability of {{Mixed Quantum States}}: {{Linear Contractions}} and {{Permutation Criteria}}},\ }\href {https://doi.org/10.1007/s11080-006-7271-8} {\bibfield  {journal} {\bibinfo  {journal} {Open Systems \& Information Dynamics}\ }\textbf {\bibinfo {volume} {13}},\ \bibinfo {pages} {103} (\bibinfo {year} {2006})}\BibitemShut {NoStop}%
\bibitem [{\citenamefont {Huber}\ \emph {et~al.}(2010)\citenamefont {Huber}, \citenamefont {Mintert}, \citenamefont {Gabriel},\ and\ \citenamefont {Hiesmayr}}]{huber_detection_2010}%
  \BibitemOpen
  \bibfield  {author} {\bibinfo {author} {\bibfnamefont {M.}~\bibnamefont {Huber}}, \bibinfo {author} {\bibfnamefont {F.}~\bibnamefont {Mintert}}, \bibinfo {author} {\bibfnamefont {A.}~\bibnamefont {Gabriel}},\ and\ \bibinfo {author} {\bibfnamefont {B.~C.}\ \bibnamefont {Hiesmayr}},\ }\bibfield  {title} {\bibinfo {title} {Detection of {{High-Dimensional Genuine Multipartite Entanglement}} of {{Mixed States}}},\ }\href {https://doi.org/10.1103/PhysRevLett.104.210501} {\bibfield  {journal} {\bibinfo  {journal} {Physical Review Letters}\ }\textbf {\bibinfo {volume} {104}},\ \bibinfo {pages} {210501} (\bibinfo {year} {2010})}\BibitemShut {NoStop}%
\bibitem [{\citenamefont {Gabriel}\ \emph {et~al.}(2010)\citenamefont {Gabriel}, \citenamefont {Hiesmayr},\ and\ \citenamefont {Huber}}]{gabriel_criterion_2010}%
  \BibitemOpen
  \bibfield  {author} {\bibinfo {author} {\bibfnamefont {A.}~\bibnamefont {Gabriel}}, \bibinfo {author} {\bibfnamefont {B.~C.}\ \bibnamefont {Hiesmayr}},\ and\ \bibinfo {author} {\bibfnamefont {M.}~\bibnamefont {Huber}},\ }\href@noop {} {\bibinfo {title} {Criterion for k-separability in mixed multipartite systems}} (\bibinfo {year} {2010}),\ \Eprint {https://arxiv.org/abs/1002.2953} {arxiv:1002.2953 [quant-ph]} \BibitemShut {NoStop}%
\bibitem [{\citenamefont {Gao}\ and\ \citenamefont {Hong}(2011)}]{gao_separability_2011}%
  \BibitemOpen
  \bibfield  {author} {\bibinfo {author} {\bibfnamefont {T.}~\bibnamefont {Gao}}\ and\ \bibinfo {author} {\bibfnamefont {Y.}~\bibnamefont {Hong}},\ }\bibfield  {title} {\bibinfo {title} {Separability criteria for several classes of n-partite quantum states},\ }\href {https://doi.org/10.1140/epjd/e2010-10432-4} {\bibfield  {journal} {\bibinfo  {journal} {The European Physical Journal D}\ }\textbf {\bibinfo {volume} {61}},\ \bibinfo {pages} {765} (\bibinfo {year} {2011})}\BibitemShut {NoStop}%
\bibitem [{\citenamefont {Huber}\ \emph {et~al.}(2013)\citenamefont {Huber}, \citenamefont {{Perarnau-Llobet}},\ and\ \citenamefont {{de Vicente}}}]{huber_entropy_2013}%
  \BibitemOpen
  \bibfield  {author} {\bibinfo {author} {\bibfnamefont {M.}~\bibnamefont {Huber}}, \bibinfo {author} {\bibfnamefont {M.}~\bibnamefont {{Perarnau-Llobet}}},\ and\ \bibinfo {author} {\bibfnamefont {J.~I.}\ \bibnamefont {{de Vicente}}},\ }\bibfield  {title} {\bibinfo {title} {Entropy vector formalism and the structure of multidimensional entanglement in multipartite systems},\ }\href {https://doi.org/10.1103/PhysRevA.88.042328} {\bibfield  {journal} {\bibinfo  {journal} {Physical Review A}\ }\textbf {\bibinfo {volume} {88}},\ \bibinfo {pages} {042328} (\bibinfo {year} {2013})}\BibitemShut {NoStop}%
\bibitem [{\citenamefont {Liu}\ \emph {et~al.}(2015)\citenamefont {Liu}, \citenamefont {Gao},\ and\ \citenamefont {Yan}}]{liu_separability_2015}%
  \BibitemOpen
  \bibfield  {author} {\bibinfo {author} {\bibfnamefont {L.}~\bibnamefont {Liu}}, \bibinfo {author} {\bibfnamefont {T.}~\bibnamefont {Gao}},\ and\ \bibinfo {author} {\bibfnamefont {F.}~\bibnamefont {Yan}},\ }\bibfield  {title} {\bibinfo {title} {Separability criteria via sets of mutually unbiased measurements},\ }\href {https://doi.org/10.1038/srep13138} {\bibfield  {journal} {\bibinfo  {journal} {Scientific Reports}\ }\textbf {\bibinfo {volume} {5}},\ \bibinfo {pages} {13138} (\bibinfo {year} {2015})}\BibitemShut {NoStop}%
\bibitem [{\citenamefont {Hong}\ and\ \citenamefont {Luo}(2016)}]{hong_detecting_2016}%
  \BibitemOpen
  \bibfield  {author} {\bibinfo {author} {\bibfnamefont {Y.}~\bibnamefont {Hong}}\ and\ \bibinfo {author} {\bibfnamefont {S.}~\bibnamefont {Luo}},\ }\bibfield  {title} {\bibinfo {title} {Detecting \$k\$-nonseparability via local uncertainty relations},\ }\href {https://doi.org/10.1103/PhysRevA.93.042310} {\bibfield  {journal} {\bibinfo  {journal} {Physical Review A}\ }\textbf {\bibinfo {volume} {93}},\ \bibinfo {pages} {042310} (\bibinfo {year} {2016})}\BibitemShut {NoStop}%
\bibitem [{\citenamefont {Sarbicki}\ \emph {et~al.}(2020)\citenamefont {Sarbicki}, \citenamefont {Scala},\ and\ \citenamefont {Chru{\'s}ci{\'n}ski}}]{sarbicki_family_2020}%
  \BibitemOpen
  \bibfield  {author} {\bibinfo {author} {\bibfnamefont {G.}~\bibnamefont {Sarbicki}}, \bibinfo {author} {\bibfnamefont {G.}~\bibnamefont {Scala}},\ and\ \bibinfo {author} {\bibfnamefont {D.}~\bibnamefont {Chru{\'s}ci{\'n}ski}},\ }\bibfield  {title} {\bibinfo {title} {Family of multipartite separability criteria based on a correlation tensor},\ }\href {https://doi.org/10.1103/PhysRevA.101.012341} {\bibfield  {journal} {\bibinfo  {journal} {Physical Review A}\ }\textbf {\bibinfo {volume} {101}},\ \bibinfo {pages} {012341} (\bibinfo {year} {2020})}\BibitemShut {NoStop}%
\bibitem [{\citenamefont {Hong}\ \emph {et~al.}(2021)\citenamefont {Hong}, \citenamefont {Gao},\ and\ \citenamefont {Yan}}]{hong_detection_2021a}%
  \BibitemOpen
  \bibfield  {author} {\bibinfo {author} {\bibfnamefont {Y.}~\bibnamefont {Hong}}, \bibinfo {author} {\bibfnamefont {T.}~\bibnamefont {Gao}},\ and\ \bibinfo {author} {\bibfnamefont {F.}~\bibnamefont {Yan}},\ }\bibfield  {title} {\bibinfo {title} {Detection of k-partite entanglement and k-nonseparability of multipartite quantum states},\ }\href {https://doi.org/10.1016/j.physleta.2021.127347} {\bibfield  {journal} {\bibinfo  {journal} {Physics Letters A}\ }\textbf {\bibinfo {volume} {401}},\ \bibinfo {pages} {127347} (\bibinfo {year} {2021})}\BibitemShut {NoStop}%
\bibitem [{\citenamefont {Li}\ \emph {et~al.}(2023)\citenamefont {Li}, \citenamefont {Gao},\ and\ \citenamefont {Yan}}]{li_parameterized_2023b}%
  \BibitemOpen
  \bibfield  {author} {\bibinfo {author} {\bibfnamefont {H.}~\bibnamefont {Li}}, \bibinfo {author} {\bibfnamefont {T.}~\bibnamefont {Gao}},\ and\ \bibinfo {author} {\bibfnamefont {F.}~\bibnamefont {Yan}},\ }\href@noop {} {\bibinfo {title} {Parameterized multipartite entanglement measures}} (\bibinfo {year} {2023}),\ \Eprint {https://arxiv.org/abs/2308.16393} {arxiv:2308.16393 [quant-ph]} \BibitemShut {NoStop}%
\bibitem [{\citenamefont {Szalay}(2015)}]{szalayMultipartiteEntanglementMeasures2015}%
  \BibitemOpen
  \bibfield  {author} {\bibinfo {author} {\bibfnamefont {S.}~\bibnamefont {Szalay}},\ }\bibfield  {title} {\bibinfo {title} {Multipartite entanglement measures},\ }\href {https://doi.org/10.1103/PhysRevA.92.042329} {\bibfield  {journal} {\bibinfo  {journal} {Physical Review A}\ }\textbf {\bibinfo {volume} {92}},\ \bibinfo {pages} {042329} (\bibinfo {year} {2015})}\BibitemShut {NoStop}%
\bibitem [{\citenamefont {Georgi}(2018)}]{georgi_lie_2018}%
  \BibitemOpen
  \bibfield  {author} {\bibinfo {author} {\bibfnamefont {H.}~\bibnamefont {Georgi}},\ }\href@noop {} {\emph {\bibinfo {title} {Lie {{Algebras In Particle Physics}}: From {{Isospin To Unified Theories}}}}}\ (\bibinfo  {publisher} {{CRC Press}},\ \bibinfo {year} {2018})\BibitemShut {NoStop}%
\bibitem [{\citenamefont {Albert}\ and\ \citenamefont {Jiang}(2014)}]{albert_symmetries_2014}%
  \BibitemOpen
  \bibfield  {author} {\bibinfo {author} {\bibfnamefont {V.~V.}\ \bibnamefont {Albert}}\ and\ \bibinfo {author} {\bibfnamefont {L.}~\bibnamefont {Jiang}},\ }\bibfield  {title} {\bibinfo {title} {Symmetries and conserved quantities in {{Lindblad}} master equations},\ }\href {https://doi.org/10.1103/PhysRevA.89.022118} {\bibfield  {journal} {\bibinfo  {journal} {Physical Review A}\ }\textbf {\bibinfo {volume} {89}},\ \bibinfo {pages} {022118} (\bibinfo {year} {2014})}\BibitemShut {NoStop}%
\bibitem [{\citenamefont {Albert}(2018)}]{albert_lindbladians_2018}%
  \BibitemOpen
  \bibfield  {author} {\bibinfo {author} {\bibfnamefont {V.~V.}\ \bibnamefont {Albert}},\ }\href {https://doi.org/10.48550/arXiv.1802.00010} {\bibinfo {title} {Lindbladians with multiple steady states: Theory and applications}} (\bibinfo {year} {2018}),\ \Eprint {https://arxiv.org/abs/1802.00010} {arxiv:1802.00010 [cond-mat, physics:math-ph, physics:quant-ph]} \BibitemShut {NoStop}%
\bibitem [{\citenamefont {Lieu}\ \emph {et~al.}(2020)\citenamefont {Lieu}, \citenamefont {Belyansky}, \citenamefont {Young}, \citenamefont {Lundgren}, \citenamefont {Albert},\ and\ \citenamefont {Gorshkov}}]{lieu_symmetry_2020}%
  \BibitemOpen
  \bibfield  {author} {\bibinfo {author} {\bibfnamefont {S.}~\bibnamefont {Lieu}}, \bibinfo {author} {\bibfnamefont {R.}~\bibnamefont {Belyansky}}, \bibinfo {author} {\bibfnamefont {J.~T.}\ \bibnamefont {Young}}, \bibinfo {author} {\bibfnamefont {R.}~\bibnamefont {Lundgren}}, \bibinfo {author} {\bibfnamefont {V.~V.}\ \bibnamefont {Albert}},\ and\ \bibinfo {author} {\bibfnamefont {A.~V.}\ \bibnamefont {Gorshkov}},\ }\bibfield  {title} {\bibinfo {title} {Symmetry {{Breaking}} and {{Error Correction}} in {{Open Quantum Systems}}},\ }\href {https://doi.org/10.1103/PhysRevLett.125.240405} {\bibfield  {journal} {\bibinfo  {journal} {Physical Review Letters}\ }\textbf {\bibinfo {volume} {125}},\ \bibinfo {pages} {240405} (\bibinfo {year} {2020})}\BibitemShut {NoStop}%
\bibitem [{\citenamefont {Else}\ and\ \citenamefont {Nayak}(2014)}]{else_classifying_2014}%
  \BibitemOpen
  \bibfield  {author} {\bibinfo {author} {\bibfnamefont {D.~V.}\ \bibnamefont {Else}}\ and\ \bibinfo {author} {\bibfnamefont {C.}~\bibnamefont {Nayak}},\ }\bibfield  {title} {\bibinfo {title} {Classifying symmetry-protected topological phases through the anomalous action of the symmetry on the edge},\ }\href {https://doi.org/10.1103/PhysRevB.90.235137} {\bibfield  {journal} {\bibinfo  {journal} {Physical Review B}\ }\textbf {\bibinfo {volume} {90}},\ \bibinfo {pages} {235137} (\bibinfo {year} {2014})}\BibitemShut {NoStop}%
\bibitem [{\citenamefont {{Gu}}\ and\ \citenamefont {{Wen}}(2012)}]{Supercohomology}%
  \BibitemOpen
  \bibfield  {author} {\bibinfo {author} {\bibfnamefont {Z.-C.}\ \bibnamefont {{Gu}}}\ and\ \bibinfo {author} {\bibfnamefont {X.-G.}\ \bibnamefont {{Wen}}},\ }\bibfield  {title} {\bibinfo {title} {{Symmetry-protected topological orders for interacting fermions -- Fermionic topological nonlinear $\sigma$ models and a special group supercohomology theory}},\ }\href {https://doi.org/10.48550/arXiv.1201.2648} {\bibfield  {journal} {\bibinfo  {journal} {arXiv e-prints}\ ,\ \bibinfo {eid} {arXiv:1201.2648}} (\bibinfo {year} {2012})},\ \Eprint {https://arxiv.org/abs/1201.2648} {arXiv:1201.2648 [cond-mat.str-el]} \BibitemShut {NoStop}%
\bibitem [{\citenamefont {{Metlitski}}(2019)}]{Metlitski2019}%
  \BibitemOpen
  \bibfield  {author} {\bibinfo {author} {\bibfnamefont {M.~A.}\ \bibnamefont {{Metlitski}}},\ }\bibfield  {title} {\bibinfo {title} {{A 1d lattice model for the boundary of the quantum spin-Hall insulator}},\ }\href {https://doi.org/10.48550/arXiv.1908.08958} {\bibfield  {journal} {\bibinfo  {journal} {arXiv e-prints}\ ,\ \bibinfo {eid} {arXiv:1908.08958}} (\bibinfo {year} {2019})},\ \Eprint {https://arxiv.org/abs/1908.08958} {arXiv:1908.08958 [cond-mat.str-el]} \BibitemShut {NoStop}%
\bibitem [{\citenamefont {Affleck}\ \emph {et~al.}(1988)\citenamefont {Affleck}, \citenamefont {Kennedy}, \citenamefont {Lieb},\ and\ \citenamefont {Tasaki}}]{AKLT}%
  \BibitemOpen
  \bibfield  {author} {\bibinfo {author} {\bibfnamefont {I.}~\bibnamefont {Affleck}}, \bibinfo {author} {\bibfnamefont {T.}~\bibnamefont {Kennedy}}, \bibinfo {author} {\bibfnamefont {E.~H.}\ \bibnamefont {Lieb}},\ and\ \bibinfo {author} {\bibfnamefont {H.}~\bibnamefont {Tasaki}},\ }\bibfield  {title} {\bibinfo {title} {Valence bond ground states in isotropic quantum antiferromagnets},\ }\href@noop {} {\bibfield  {journal} {\bibinfo  {journal} {Communications in Mathematical Physics}\ }\textbf {\bibinfo {volume} {115}},\ \bibinfo {pages} {477} (\bibinfo {year} {1988})}\BibitemShut {NoStop}%
\bibitem [{\citenamefont {Pollmann}\ \emph {et~al.}(2010)\citenamefont {Pollmann}, \citenamefont {Turner}, \citenamefont {Berg},\ and\ \citenamefont {Oshikawa}}]{Pollman2010}%
  \BibitemOpen
  \bibfield  {author} {\bibinfo {author} {\bibfnamefont {F.}~\bibnamefont {Pollmann}}, \bibinfo {author} {\bibfnamefont {A.~M.}\ \bibnamefont {Turner}}, \bibinfo {author} {\bibfnamefont {E.}~\bibnamefont {Berg}},\ and\ \bibinfo {author} {\bibfnamefont {M.}~\bibnamefont {Oshikawa}},\ }\bibfield  {title} {\bibinfo {title} {Entanglement spectrum of a topological phase in one dimension},\ }\href {https://doi.org/10.1103/PhysRevB.81.064439} {\bibfield  {journal} {\bibinfo  {journal} {Phys. Rev. B}\ }\textbf {\bibinfo {volume} {81}},\ \bibinfo {pages} {064439} (\bibinfo {year} {2010})}\BibitemShut {NoStop}%
\bibitem [{\citenamefont {Pollmann}\ \emph {et~al.}(2012)\citenamefont {Pollmann}, \citenamefont {Berg}, \citenamefont {Turner},\ and\ \citenamefont {Oshikawa}}]{Pollman2012}%
  \BibitemOpen
  \bibfield  {author} {\bibinfo {author} {\bibfnamefont {F.}~\bibnamefont {Pollmann}}, \bibinfo {author} {\bibfnamefont {E.}~\bibnamefont {Berg}}, \bibinfo {author} {\bibfnamefont {A.~M.}\ \bibnamefont {Turner}},\ and\ \bibinfo {author} {\bibfnamefont {M.}~\bibnamefont {Oshikawa}},\ }\bibfield  {title} {\bibinfo {title} {Symmetry protection of topological phases in one-dimensional quantum spin systems},\ }\href {https://doi.org/10.1103/PhysRevB.85.075125} {\bibfield  {journal} {\bibinfo  {journal} {Phys. Rev. B}\ }\textbf {\bibinfo {volume} {85}},\ \bibinfo {pages} {075125} (\bibinfo {year} {2012})}\BibitemShut {NoStop}%
\bibitem [{\citenamefont {Briegel}\ and\ \citenamefont {Raussendorf}(2001)}]{briegel_persistent_2001}%
  \BibitemOpen
  \bibfield  {author} {\bibinfo {author} {\bibfnamefont {H.~J.}\ \bibnamefont {Briegel}}\ and\ \bibinfo {author} {\bibfnamefont {R.}~\bibnamefont {Raussendorf}},\ }\bibfield  {title} {\bibinfo {title} {Persistent {{Entanglement}} in {{Arrays}} of {{Interacting Particles}}},\ }\href {https://doi.org/10.1103/PhysRevLett.86.910} {\bibfield  {journal} {\bibinfo  {journal} {Physical Review Letters}\ }\textbf {\bibinfo {volume} {86}},\ \bibinfo {pages} {910} (\bibinfo {year} {2001})}\BibitemShut {NoStop}%
\bibitem [{\citenamefont {Lu}\ \emph {et~al.}(2022)\citenamefont {Lu}, \citenamefont {Lessa}, \citenamefont {Kim},\ and\ \citenamefont {Hsieh}}]{lu_measurement_2022}%
  \BibitemOpen
  \bibfield  {author} {\bibinfo {author} {\bibfnamefont {T.-C.}\ \bibnamefont {Lu}}, \bibinfo {author} {\bibfnamefont {L.~A.}\ \bibnamefont {Lessa}}, \bibinfo {author} {\bibfnamefont {I.~H.}\ \bibnamefont {Kim}},\ and\ \bibinfo {author} {\bibfnamefont {T.~H.}\ \bibnamefont {Hsieh}},\ }\bibfield  {title} {\bibinfo {title} {Measurement as a {{Shortcut}} to {{Long-Range Entangled Quantum Matter}}},\ }\href {https://doi.org/10.1103/PRXQuantum.3.040337} {\bibfield  {journal} {\bibinfo  {journal} {PRX Quantum}\ }\textbf {\bibinfo {volume} {3}},\ \bibinfo {pages} {040337} (\bibinfo {year} {2022})}\BibitemShut {NoStop}%
\bibitem [{\citenamefont {Bravyi}\ \emph {et~al.}(2022)\citenamefont {Bravyi}, \citenamefont {Kim}, \citenamefont {Kliesch},\ and\ \citenamefont {Koenig}}]{bravyi_adaptive_2022}%
  \BibitemOpen
  \bibfield  {author} {\bibinfo {author} {\bibfnamefont {S.}~\bibnamefont {Bravyi}}, \bibinfo {author} {\bibfnamefont {I.}~\bibnamefont {Kim}}, \bibinfo {author} {\bibfnamefont {A.}~\bibnamefont {Kliesch}},\ and\ \bibinfo {author} {\bibfnamefont {R.}~\bibnamefont {Koenig}},\ }\href {https://doi.org/10.48550/arXiv.2205.01933} {\bibinfo {title} {Adaptive constant-depth circuits for manipulating non-abelian anyons}} (\bibinfo {year} {2022}),\ \Eprint {https://arxiv.org/abs/2205.01933} {arxiv:2205.01933 [quant-ph]} \BibitemShut {NoStop}%
\bibitem [{\citenamefont {{Foss-Feig}}\ \emph {et~al.}(2023)\citenamefont {{Foss-Feig}}, \citenamefont {Tikku}, \citenamefont {Lu}, \citenamefont {Mayer}, \citenamefont {Iqbal}, \citenamefont {Gatterman}, \citenamefont {Gerber}, \citenamefont {Gilmore}, \citenamefont {Gresh}, \citenamefont {Hankin}, \citenamefont {Hewitt}, \citenamefont {Horst}, \citenamefont {Matheny}, \citenamefont {Mengle}, \citenamefont {Neyenhuis}, \citenamefont {Dreyer}, \citenamefont {Hayes}, \citenamefont {Hsieh},\ and\ \citenamefont {Kim}}]{foss-feig_experimental_2023a}%
  \BibitemOpen
  \bibfield  {author} {\bibinfo {author} {\bibfnamefont {M.}~\bibnamefont {{Foss-Feig}}}, \bibinfo {author} {\bibfnamefont {A.}~\bibnamefont {Tikku}}, \bibinfo {author} {\bibfnamefont {T.-C.}\ \bibnamefont {Lu}}, \bibinfo {author} {\bibfnamefont {K.}~\bibnamefont {Mayer}}, \bibinfo {author} {\bibfnamefont {M.}~\bibnamefont {Iqbal}}, \bibinfo {author} {\bibfnamefont {T.~M.}\ \bibnamefont {Gatterman}}, \bibinfo {author} {\bibfnamefont {J.~A.}\ \bibnamefont {Gerber}}, \bibinfo {author} {\bibfnamefont {K.}~\bibnamefont {Gilmore}}, \bibinfo {author} {\bibfnamefont {D.}~\bibnamefont {Gresh}}, \bibinfo {author} {\bibfnamefont {A.}~\bibnamefont {Hankin}}, \bibinfo {author} {\bibfnamefont {N.}~\bibnamefont {Hewitt}}, \bibinfo {author} {\bibfnamefont {C.~V.}\ \bibnamefont {Horst}}, \bibinfo {author} {\bibfnamefont {M.}~\bibnamefont {Matheny}}, \bibinfo {author} {\bibfnamefont {T.}~\bibnamefont {Mengle}}, \bibinfo {author} {\bibfnamefont {B.}~\bibnamefont {Neyenhuis}}, \bibinfo {author} {\bibfnamefont {H.}~\bibnamefont
  {Dreyer}}, \bibinfo {author} {\bibfnamefont {D.}~\bibnamefont {Hayes}}, \bibinfo {author} {\bibfnamefont {T.~H.}\ \bibnamefont {Hsieh}},\ and\ \bibinfo {author} {\bibfnamefont {I.~H.}\ \bibnamefont {Kim}},\ }\href {https://doi.org/10.48550/arXiv.2302.03029} {\bibinfo {title} {Experimental demonstration of the advantage of adaptive quantum circuits}} (\bibinfo {year} {2023}),\ \Eprint {https://arxiv.org/abs/2302.03029} {arxiv:2302.03029 [cond-mat, physics:quant-ph]} \BibitemShut {NoStop}%
\bibitem [{\citenamefont {Iqbal}\ \emph {et~al.}(2023)\citenamefont {Iqbal}, \citenamefont {Tantivasadakarn}, \citenamefont {Verresen}, \citenamefont {Campbell}, \citenamefont {Dreiling}, \citenamefont {Figgatt}, \citenamefont {Gaebler}, \citenamefont {Johansen}, \citenamefont {Mills}, \citenamefont {Moses}, \citenamefont {Pino}, \citenamefont {Ransford}, \citenamefont {Rowe}, \citenamefont {Siegfried}, \citenamefont {Stutz}, \citenamefont {{Foss-Feig}}, \citenamefont {Vishwanath},\ and\ \citenamefont {Dreyer}}]{iqbal_creation_2023}%
  \BibitemOpen
  \bibfield  {author} {\bibinfo {author} {\bibfnamefont {M.}~\bibnamefont {Iqbal}}, \bibinfo {author} {\bibfnamefont {N.}~\bibnamefont {Tantivasadakarn}}, \bibinfo {author} {\bibfnamefont {R.}~\bibnamefont {Verresen}}, \bibinfo {author} {\bibfnamefont {S.~L.}\ \bibnamefont {Campbell}}, \bibinfo {author} {\bibfnamefont {J.~M.}\ \bibnamefont {Dreiling}}, \bibinfo {author} {\bibfnamefont {C.}~\bibnamefont {Figgatt}}, \bibinfo {author} {\bibfnamefont {J.~P.}\ \bibnamefont {Gaebler}}, \bibinfo {author} {\bibfnamefont {J.}~\bibnamefont {Johansen}}, \bibinfo {author} {\bibfnamefont {M.}~\bibnamefont {Mills}}, \bibinfo {author} {\bibfnamefont {S.~A.}\ \bibnamefont {Moses}}, \bibinfo {author} {\bibfnamefont {J.~M.}\ \bibnamefont {Pino}}, \bibinfo {author} {\bibfnamefont {A.}~\bibnamefont {Ransford}}, \bibinfo {author} {\bibfnamefont {M.}~\bibnamefont {Rowe}}, \bibinfo {author} {\bibfnamefont {P.}~\bibnamefont {Siegfried}}, \bibinfo {author} {\bibfnamefont {R.~P.}\ \bibnamefont {Stutz}}, \bibinfo {author}
  {\bibfnamefont {M.}~\bibnamefont {{Foss-Feig}}}, \bibinfo {author} {\bibfnamefont {A.}~\bibnamefont {Vishwanath}},\ and\ \bibinfo {author} {\bibfnamefont {H.}~\bibnamefont {Dreyer}},\ }\href {https://doi.org/10.48550/arXiv.2305.03766} {\bibinfo {title} {Creation of {{Non-Abelian Topological Order}} and {{Anyons}} on a {{Trapped-Ion Processor}}}} (\bibinfo {year} {2023}),\ \Eprint {https://arxiv.org/abs/2305.03766} {arxiv:2305.03766 [cond-mat, physics:quant-ph]} \BibitemShut {NoStop}%
\bibitem [{\citenamefont {Horodecki}\ and\ \citenamefont {Horodecki}(1996)}]{horodecki_informationtheoretic_1996}%
  \BibitemOpen
  \bibfield  {author} {\bibinfo {author} {\bibfnamefont {R.}~\bibnamefont {Horodecki}}\ and\ \bibinfo {author} {\bibfnamefont {M.}~\bibnamefont {Horodecki}},\ }\bibfield  {title} {\bibinfo {title} {Information-theoretic aspects of inseparability of mixed states},\ }\href {https://doi.org/10.1103/PhysRevA.54.1838} {\bibfield  {journal} {\bibinfo  {journal} {Physical Review A}\ }\textbf {\bibinfo {volume} {54}},\ \bibinfo {pages} {1838} (\bibinfo {year} {1996})}\BibitemShut {NoStop}%
\bibitem [{\citenamefont {Watrous}(2018)}]{watrous_theory_2018a}%
  \BibitemOpen
  \bibfield  {author} {\bibinfo {author} {\bibfnamefont {J.}~\bibnamefont {Watrous}},\ }\href@noop {} {\emph {\bibinfo {title} {The Theory of Quantum Information}}}\ (\bibinfo  {publisher} {{Cambridge University Press}},\ \bibinfo {address} {{Cambridge, United Kingdom}},\ \bibinfo {year} {2018})\BibitemShut {NoStop}%
\bibitem [{\citenamefont {{Kapustin}}\ and\ \citenamefont {{Sopenko}}(2024)}]{KapustinSopenko2024}%
  \BibitemOpen
  \bibfield  {author} {\bibinfo {author} {\bibfnamefont {A.}~\bibnamefont {{Kapustin}}}\ and\ \bibinfo {author} {\bibfnamefont {N.}~\bibnamefont {{Sopenko}}},\ }\bibfield  {title} {\bibinfo {title} {{Anomalous symmetries of quantum spin chains and a generalization of the Lieb-Schultz-Mattis theorem}},\ }\href@noop {} {\bibfield  {journal} {\bibinfo  {journal} {arXiv e-prints}\ ,\ \bibinfo {eid} {arXiv:2401.02533}} (\bibinfo {year} {2024})},\ \Eprint {https://arxiv.org/abs/2401.02533} {arXiv:2401.02533 [math-ph]} \BibitemShut {NoStop}%
\bibitem [{\citenamefont {Hastings}(2011)}]{hastings_topological_2011}%
  \BibitemOpen
  \bibfield  {author} {\bibinfo {author} {\bibfnamefont {M.~B.}\ \bibnamefont {Hastings}},\ }\bibfield  {title} {\bibinfo {title} {Topological {{Order}} at {{Nonzero Temperature}}},\ }\href {https://doi.org/10.1103/PhysRevLett.107.210501} {\bibfield  {journal} {\bibinfo  {journal} {Physical Review Letters}\ }\textbf {\bibinfo {volume} {107}},\ \bibinfo {pages} {210501} (\bibinfo {year} {2011})}\BibitemShut {NoStop}%
\bibitem [{\citenamefont {Coser}\ and\ \citenamefont {{P{\'e}rez-Garc{\'i}a}}(2019)}]{coser_classification_2019a}%
  \BibitemOpen
  \bibfield  {author} {\bibinfo {author} {\bibfnamefont {A.}~\bibnamefont {Coser}}\ and\ \bibinfo {author} {\bibfnamefont {D.}~\bibnamefont {{P{\'e}rez-Garc{\'i}a}}},\ }\bibfield  {title} {\bibinfo {title} {Classification of phases for mixed states via fast dissipative evolution},\ }\href {https://doi.org/10.22331/q-2019-08-12-174} {\bibfield  {journal} {\bibinfo  {journal} {Quantum}\ }\textbf {\bibinfo {volume} {3}},\ \bibinfo {pages} {174} (\bibinfo {year} {2019})}\BibitemShut {NoStop}%
\bibitem [{\citenamefont {Sang}\ \emph {et~al.}(2023)\citenamefont {Sang}, \citenamefont {Zou},\ and\ \citenamefont {Hsieh}}]{sang_mixedstate_2023}%
  \BibitemOpen
  \bibfield  {author} {\bibinfo {author} {\bibfnamefont {S.}~\bibnamefont {Sang}}, \bibinfo {author} {\bibfnamefont {Y.}~\bibnamefont {Zou}},\ and\ \bibinfo {author} {\bibfnamefont {T.~H.}\ \bibnamefont {Hsieh}},\ }\href {https://doi.org/10.48550/arXiv.2310.08639} {\bibinfo {title} {Mixed-state {{Quantum Phases}}: {{Renormalization}} and {{Quantum Error Correction}}}} (\bibinfo {year} {2023}),\ \Eprint {https://arxiv.org/abs/2310.08639} {arxiv:2310.08639 [cond-mat, physics:quant-ph]} \BibitemShut {NoStop}%
\bibitem [{\citenamefont {{Ruiz-de-Alarc{\'o}n}}\ \emph {et~al.}(2022)\citenamefont {{Ruiz-de-Alarc{\'o}n}}, \citenamefont {{Garre-Rubio}}, \citenamefont {{Moln{\'a}r}},\ and\ \citenamefont {{P{\'e}rez-Garc{\'\i}a}}}]{Alberto2022}%
  \BibitemOpen
  \bibfield  {author} {\bibinfo {author} {\bibfnamefont {A.}~\bibnamefont {{Ruiz-de-Alarc{\'o}n}}}, \bibinfo {author} {\bibfnamefont {J.}~\bibnamefont {{Garre-Rubio}}}, \bibinfo {author} {\bibfnamefont {A.}~\bibnamefont {{Moln{\'a}r}}},\ and\ \bibinfo {author} {\bibfnamefont {D.}~\bibnamefont {{P{\'e}rez-Garc{\'\i}a}}},\ }\bibfield  {title} {\bibinfo {title} {{Matrix Product Operator Algebras II: Phases of Matter for 1D Mixed States}},\ }\href {https://doi.org/10.48550/arXiv.2204.06295} {\bibfield  {journal} {\bibinfo  {journal} {arXiv e-prints}\ ,\ \bibinfo {eid} {arXiv:2204.06295}} (\bibinfo {year} {2022})},\ \Eprint {https://arxiv.org/abs/2204.06295} {arXiv:2204.06295 [quant-ph]} \BibitemShut {NoStop}%
\bibitem [{\citenamefont {Lessa~et al.}(2024)}]{Lessatoappear}%
  \BibitemOpen
  \bibfield  {author} {\bibinfo {author} {\bibfnamefont {L.~A.}\ \bibnamefont {Lessa~et al.}},\ }\href@noop {} {\bibinfo {title} {to appear}} (\bibinfo {year} {2024})\BibitemShut {NoStop}%
\bibitem [{\citenamefont {{Wang}}\ \emph {et~al.}(2023)\citenamefont {{Wang}}, \citenamefont {{Wu}},\ and\ \citenamefont {{Wang}}}]{WangWuWang2023}%
  \BibitemOpen
  \bibfield  {author} {\bibinfo {author} {\bibfnamefont {Z.}~\bibnamefont {{Wang}}}, \bibinfo {author} {\bibfnamefont {Z.}~\bibnamefont {{Wu}}},\ and\ \bibinfo {author} {\bibfnamefont {Z.}~\bibnamefont {{Wang}}},\ }\bibfield  {title} {\bibinfo {title} {{Intrinsic Mixed-state Quantum Topological Order}},\ }\href {https://doi.org/10.48550/arXiv.2307.13758} {\bibfield  {journal} {\bibinfo  {journal} {arXiv e-prints}\ ,\ \bibinfo {eid} {arXiv:2307.13758}} (\bibinfo {year} {2023})},\ \Eprint {https://arxiv.org/abs/2307.13758} {arXiv:2307.13758 [quant-ph]} \BibitemShut {NoStop}%
\bibitem [{\citenamefont {Bultinck}(2019)}]{bultinck_uv_2019}%
  \BibitemOpen
  \bibfield  {author} {\bibinfo {author} {\bibfnamefont {N.}~\bibnamefont {Bultinck}},\ }\bibfield  {title} {\bibinfo {title} {{{UV}} perspective on mixed anomalies at critical points between bosonic symmetry-protected phases},\ }\href {https://doi.org/10.1103/PhysRevB.100.165132} {\bibfield  {journal} {\bibinfo  {journal} {Physical Review B}\ }\textbf {\bibinfo {volume} {100}},\ \bibinfo {pages} {165132} (\bibinfo {year} {2019})}\BibitemShut {NoStop}%
\bibitem [{\citenamefont {Tantivasadakarn}\ \emph {et~al.}(2023)\citenamefont {Tantivasadakarn}, \citenamefont {Thorngren}, \citenamefont {Vishwanath},\ and\ \citenamefont {Verresen}}]{tantivasadakarn_building_2023}%
  \BibitemOpen
  \bibfield  {author} {\bibinfo {author} {\bibfnamefont {N.}~\bibnamefont {Tantivasadakarn}}, \bibinfo {author} {\bibfnamefont {R.}~\bibnamefont {Thorngren}}, \bibinfo {author} {\bibfnamefont {A.}~\bibnamefont {Vishwanath}},\ and\ \bibinfo {author} {\bibfnamefont {R.}~\bibnamefont {Verresen}},\ }\bibfield  {title} {\bibinfo {title} {Building models of topological quantum criticality from pivot {{Hamiltonians}}},\ }\href {https://doi.org/10.21468/SciPostPhys.14.2.013} {\bibfield  {journal} {\bibinfo  {journal} {SciPost Physics}\ }\textbf {\bibinfo {volume} {14}},\ \bibinfo {pages} {013} (\bibinfo {year} {2023})}\BibitemShut {NoStop}%
\bibitem [{\citenamefont {{Sang}}\ and\ \citenamefont {{Hsieh}}(2024)}]{sang2024stability}%
  \BibitemOpen
  \bibfield  {author} {\bibinfo {author} {\bibfnamefont {S.}~\bibnamefont {{Sang}}}\ and\ \bibinfo {author} {\bibfnamefont {T.~H.}\ \bibnamefont {{Hsieh}}},\ }\bibfield  {title} {\bibinfo {title} {{Stability of mixed-state quantum phases via finite Markov length}},\ }\href {https://doi.org/10.48550/arXiv.2404.07251} {\bibfield  {journal} {\bibinfo  {journal} {arXiv e-prints}\ ,\ \bibinfo {eid} {arXiv:2404.07251}} (\bibinfo {year} {2024})},\ \Eprint {https://arxiv.org/abs/2404.07251} {arXiv:2404.07251 [quant-ph]} \BibitemShut {NoStop}%
\bibitem [{\citenamefont {{Wang}}\ and\ \citenamefont {{Li}}(2024)}]{WangLi2024}%
  \BibitemOpen
  \bibfield  {author} {\bibinfo {author} {\bibfnamefont {Z.}~\bibnamefont {{Wang}}}\ and\ \bibinfo {author} {\bibfnamefont {L.}~\bibnamefont {{Li}}},\ }\bibfield  {title} {\bibinfo {title} {{Anomaly in open quantum systems and its implications on mixed-state quantum phases}},\ }\href {https://doi.org/10.48550/arXiv.2403.14533} {\bibfield  {journal} {\bibinfo  {journal} {arXiv e-prints}\ ,\ \bibinfo {eid} {arXiv:2403.14533}} (\bibinfo {year} {2024})},\ \Eprint {https://arxiv.org/abs/2403.14533} {arXiv:2403.14533 [quant-ph]} \BibitemShut {NoStop}%
\bibitem [{\citenamefont {{Xu}}\ and\ \citenamefont {{Jian}}(2024)}]{XuJian2024}%
  \BibitemOpen
  \bibfield  {author} {\bibinfo {author} {\bibfnamefont {Y.}~\bibnamefont {{Xu}}}\ and\ \bibinfo {author} {\bibfnamefont {C.-M.}\ \bibnamefont {{Jian}}},\ }\bibfield  {title} {\bibinfo {title} {{Average-exact mixed anomalies and compatible phases}},\ }\href {https://doi.org/10.48550/arXiv.2406.07417} {\bibfield  {journal} {\bibinfo  {journal} {arXiv e-prints}\ ,\ \bibinfo {eid} {arXiv:2406.07417}} (\bibinfo {year} {2024})},\ \Eprint {https://arxiv.org/abs/2406.07417} {arXiv:2406.07417 [cond-mat.str-el]} \BibitemShut {NoStop}%
\bibitem [{\citenamefont {Barkeshli}\ \emph {et~al.}(2019)\citenamefont {Barkeshli}, \citenamefont {Bonderson}, \citenamefont {Cheng},\ and\ \citenamefont {Wang}}]{gauging3}%
  \BibitemOpen
  \bibfield  {author} {\bibinfo {author} {\bibfnamefont {M.}~\bibnamefont {Barkeshli}}, \bibinfo {author} {\bibfnamefont {P.}~\bibnamefont {Bonderson}}, \bibinfo {author} {\bibfnamefont {M.}~\bibnamefont {Cheng}},\ and\ \bibinfo {author} {\bibfnamefont {Z.}~\bibnamefont {Wang}},\ }\bibfield  {title} {\bibinfo {title} {Symmetry fractionalization, defects, and gauging of topological phases},\ }\href {https://doi.org/10.1103/PhysRevB.100.115147} {\bibfield  {journal} {\bibinfo  {journal} {Phys. Rev. B}\ }\textbf {\bibinfo {volume} {100}},\ \bibinfo {pages} {115147} (\bibinfo {year} {2019})},\ \Eprint {https://arxiv.org/abs/1410.4540} {arXiv:1410.4540 [cond-mat.str-el]} \BibitemShut {NoStop}%
\bibitem [{\citenamefont {{Lessa}}\ \emph {et~al.}(2024)\citenamefont {{Lessa}}, \citenamefont {{Ma}}, \citenamefont {{Zhang}}, \citenamefont {{Bi}}, \citenamefont {{Cheng}},\ and\ \citenamefont {{Wang}}}]{SSBtoappear}%
  \BibitemOpen
  \bibfield  {author} {\bibinfo {author} {\bibfnamefont {L.~A.}\ \bibnamefont {{Lessa}}}, \bibinfo {author} {\bibfnamefont {R.}~\bibnamefont {{Ma}}}, \bibinfo {author} {\bibfnamefont {J.-H.}\ \bibnamefont {{Zhang}}}, \bibinfo {author} {\bibfnamefont {Z.}~\bibnamefont {{Bi}}}, \bibinfo {author} {\bibfnamefont {M.}~\bibnamefont {{Cheng}}},\ and\ \bibinfo {author} {\bibfnamefont {C.}~\bibnamefont {{Wang}}},\ }\bibfield  {title} {\bibinfo {title} {{Strong-to-Weak Spontaneous Symmetry Breaking in Mixed Quantum States}},\ }\href {https://doi.org/10.48550/arXiv.2405.03639} {\bibfield  {journal} {\bibinfo  {journal} {arXiv e-prints}\ ,\ \bibinfo {eid} {arXiv:2405.03639}} (\bibinfo {year} {2024})},\ \Eprint {https://arxiv.org/abs/2405.03639} {arXiv:2405.03639 [quant-ph]} \BibitemShut {NoStop}%
\bibitem [{\citenamefont {Sala}\ \emph {et~al.}(2024)\citenamefont {Sala}, \citenamefont {Gopalakrishnan}, \citenamefont {Oshikawa},\ and\ \citenamefont {You}}]{sala_spontaneous_2024}%
  \BibitemOpen
  \bibfield  {author} {\bibinfo {author} {\bibfnamefont {P.}~\bibnamefont {Sala}}, \bibinfo {author} {\bibfnamefont {S.}~\bibnamefont {Gopalakrishnan}}, \bibinfo {author} {\bibfnamefont {M.}~\bibnamefont {Oshikawa}},\ and\ \bibinfo {author} {\bibfnamefont {Y.}~\bibnamefont {You}},\ }\href {https://arxiv.org/abs/2405.02402} {\bibinfo {title} {Spontaneous strong symmetry breaking in open systems: Purification perspective}} (\bibinfo {year} {2024}),\ \Eprint {https://arxiv.org/abs/arXiv:2405.02402} {arXiv:2405.02402} \BibitemShut {NoStop}%
\bibitem [{\citenamefont {Freed}(2014)}]{invertible2}%
  \BibitemOpen
  \bibfield  {author} {\bibinfo {author} {\bibfnamefont {D.~S.}\ \bibnamefont {Freed}},\ }\bibfield  {title} {\bibinfo {title} {Short-range entanglement and invertible field theories},\ }\href@noop {} {\bibfield  {journal} {\bibinfo  {journal} {arXiv e-prints}\ } (\bibinfo {year} {2014})},\ \Eprint {https://arxiv.org/abs/1406.7278} {arXiv:1406.7278 [cond-mat.str-el]} \BibitemShut {NoStop}%
\bibitem [{\citenamefont {Freed}\ and\ \citenamefont {Hopkins}(2016)}]{invertible3}%
  \BibitemOpen
  \bibfield  {author} {\bibinfo {author} {\bibfnamefont {D.~S.}\ \bibnamefont {Freed}}\ and\ \bibinfo {author} {\bibfnamefont {M.~J.}\ \bibnamefont {Hopkins}},\ }\bibfield  {title} {\bibinfo {title} {{Reflection positivity and invertible topological phases}},\ }\href@noop {} {\bibfield  {journal} {\bibinfo  {journal} {arXiv e-prints}\ } (\bibinfo {year} {2016})},\ \Eprint {https://arxiv.org/abs/1604.06527} {arXiv:1604.06527} \BibitemShut {NoStop}%
\bibitem [{\citenamefont {{Hsin}}\ \emph {et~al.}(2023)\citenamefont {{Hsin}}, \citenamefont {{Luo}},\ and\ \citenamefont {{Sun}}}]{Hsin2023}%
  \BibitemOpen
  \bibfield  {author} {\bibinfo {author} {\bibfnamefont {P.-S.}\ \bibnamefont {{Hsin}}}, \bibinfo {author} {\bibfnamefont {Z.-X.}\ \bibnamefont {{Luo}}},\ and\ \bibinfo {author} {\bibfnamefont {H.-Y.}\ \bibnamefont {{Sun}}},\ }\bibfield  {title} {\bibinfo {title} {{Anomalies of Average Symmetries: Entanglement and Open Quantum Systems}},\ }\href {https://doi.org/10.48550/arXiv.2312.09074} {\bibfield  {journal} {\bibinfo  {journal} {arXiv e-prints}\ ,\ \bibinfo {eid} {arXiv:2312.09074}} (\bibinfo {year} {2023})},\ \Eprint {https://arxiv.org/abs/2312.09074} {arXiv:2312.09074 [cond-mat.str-el]} \BibitemShut {NoStop}%
\bibitem [{\citenamefont {{Zang}}\ \emph {et~al.}(2023)\citenamefont {{Zang}}, \citenamefont {{Gu}},\ and\ \citenamefont {{Jiang}}}]{Zang2023}%
  \BibitemOpen
  \bibfield  {author} {\bibinfo {author} {\bibfnamefont {Y.}~\bibnamefont {{Zang}}}, \bibinfo {author} {\bibfnamefont {Y.}~\bibnamefont {{Gu}}},\ and\ \bibinfo {author} {\bibfnamefont {S.}~\bibnamefont {{Jiang}}},\ }\bibfield  {title} {\bibinfo {title} {{Detecting Quantum Anomalies in Open Systems}},\ }\href {https://doi.org/10.48550/arXiv.2312.11188} {\bibfield  {journal} {\bibinfo  {journal} {arXiv e-prints}\ ,\ \bibinfo {eid} {arXiv:2312.11188}} (\bibinfo {year} {2023})},\ \Eprint {https://arxiv.org/abs/2312.11188} {arXiv:2312.11188 [cond-mat.str-el]} \BibitemShut {NoStop}%
\bibitem [{\citenamefont {{Lee}}\ \emph {et~al.}(2023)\citenamefont {{Lee}}, \citenamefont {{Jian}},\ and\ \citenamefont {{Xu}}}]{LeeJianXu2023}%
  \BibitemOpen
  \bibfield  {author} {\bibinfo {author} {\bibfnamefont {J.~Y.}\ \bibnamefont {{Lee}}}, \bibinfo {author} {\bibfnamefont {C.-M.}\ \bibnamefont {{Jian}}},\ and\ \bibinfo {author} {\bibfnamefont {C.}~\bibnamefont {{Xu}}},\ }\bibfield  {title} {\bibinfo {title} {{Quantum Criticality Under Decoherence or Weak Measurement}},\ }\href {https://doi.org/10.1103/PRXQuantum.4.030317} {\bibfield  {journal} {\bibinfo  {journal} {PRX Quantum}\ }\textbf {\bibinfo {volume} {4}},\ \bibinfo {eid} {030317} (\bibinfo {year} {2023})},\ \Eprint {https://arxiv.org/abs/2301.05238} {arXiv:2301.05238 [cond-mat.stat-mech]} \BibitemShut {NoStop}%
\bibitem [{\citenamefont {Hastings}(2004)}]{hastings_liebschultzmattis_2004}%
  \BibitemOpen
  \bibfield  {author} {\bibinfo {author} {\bibfnamefont {M.~B.}\ \bibnamefont {Hastings}},\ }\bibfield  {title} {\bibinfo {title} {Lieb-{{Schultz-Mattis}} in higher dimensions},\ }\href {https://doi.org/10.1103/PhysRevB.69.104431} {\bibfield  {journal} {\bibinfo  {journal} {Physical Review B}\ }\textbf {\bibinfo {volume} {69}},\ \bibinfo {pages} {104431} (\bibinfo {year} {2004})}\BibitemShut {NoStop}%
\bibitem [{\citenamefont {Nachtergaele}\ and\ \citenamefont {Sims}(2006)}]{nachtergaele_liebrobinson_2006}%
  \BibitemOpen
  \bibfield  {author} {\bibinfo {author} {\bibfnamefont {B.}~\bibnamefont {Nachtergaele}}\ and\ \bibinfo {author} {\bibfnamefont {R.}~\bibnamefont {Sims}},\ }\bibfield  {title} {\bibinfo {title} {Lieb-{{Robinson Bounds}} and the {{Exponential Clustering Theorem}}},\ }\href {https://doi.org/10.1007/s00220-006-1556-1} {\bibfield  {journal} {\bibinfo  {journal} {Communications in Mathematical Physics}\ }\textbf {\bibinfo {volume} {265}},\ \bibinfo {pages} {119} (\bibinfo {year} {2006})}\BibitemShut {NoStop}%
\bibitem [{\citenamefont {Fawzi}\ and\ \citenamefont {Renner}(2015)}]{fawzi_quantum_2015}%
  \BibitemOpen
  \bibfield  {author} {\bibinfo {author} {\bibfnamefont {O.}~\bibnamefont {Fawzi}}\ and\ \bibinfo {author} {\bibfnamefont {R.}~\bibnamefont {Renner}},\ }\bibfield  {title} {\bibinfo {title} {Quantum {{Conditional Mutual Information}} and {{Approximate Markov Chains}}},\ }\href {https://doi.org/10.1007/s00220-015-2466-x} {\bibfield  {journal} {\bibinfo  {journal} {Communications in Mathematical Physics}\ }\textbf {\bibinfo {volume} {340}},\ \bibinfo {pages} {575} (\bibinfo {year} {2015})}\BibitemShut {NoStop}%
\bibitem [{\citenamefont {Hiai}\ \emph {et~al.}(1981)\citenamefont {Hiai}, \citenamefont {Ohya},\ and\ \citenamefont {Tsukada}}]{hiai_sufficiency_1981}%
  \BibitemOpen
  \bibfield  {author} {\bibinfo {author} {\bibfnamefont {F.}~\bibnamefont {Hiai}}, \bibinfo {author} {\bibfnamefont {M.}~\bibnamefont {Ohya}},\ and\ \bibinfo {author} {\bibfnamefont {M.}~\bibnamefont {Tsukada}},\ }\bibfield  {title} {\bibinfo {title} {Sufficiency, {{KMS}} condition and relative entropy in von {{Neumann}} algebras},\ }\href@noop {} {\bibfield  {journal} {\bibinfo  {journal} {Pacific Journal of Mathematics}\ }\textbf {\bibinfo {volume} {96}},\ \bibinfo {pages} {99} (\bibinfo {year} {1981})}\BibitemShut {NoStop}%
\bibitem [{\citenamefont {Song}\ \emph {et~al.}(2021)\citenamefont {Song}, \citenamefont {He}, \citenamefont {Vishwanath},\ and\ \citenamefont {Wang}}]{Song2019}%
  \BibitemOpen
  \bibfield  {author} {\bibinfo {author} {\bibfnamefont {X.-Y.}\ \bibnamefont {Song}}, \bibinfo {author} {\bibfnamefont {Y.-C.}\ \bibnamefont {He}}, \bibinfo {author} {\bibfnamefont {A.}~\bibnamefont {Vishwanath}},\ and\ \bibinfo {author} {\bibfnamefont {C.}~\bibnamefont {Wang}},\ }\bibfield  {title} {\bibinfo {title} {Electric polarization as a nonquantized topological response and boundary luttinger theorem},\ }\href {https://doi.org/10.1103/PhysRevResearch.3.023011} {\bibfield  {journal} {\bibinfo  {journal} {Phys. Rev. Res.}\ }\textbf {\bibinfo {volume} {3}},\ \bibinfo {pages} {023011} (\bibinfo {year} {2021})}\BibitemShut {NoStop}%
\bibitem [{\citenamefont {{Else}}\ \emph {et~al.}(2021)\citenamefont {{Else}}, \citenamefont {{Thorngren}},\ and\ \citenamefont {{Senthil}}}]{Else2021}%
  \BibitemOpen
  \bibfield  {author} {\bibinfo {author} {\bibfnamefont {D.~V.}\ \bibnamefont {{Else}}}, \bibinfo {author} {\bibfnamefont {R.}~\bibnamefont {{Thorngren}}},\ and\ \bibinfo {author} {\bibfnamefont {T.}~\bibnamefont {{Senthil}}},\ }\bibfield  {title} {\bibinfo {title} {{Non-Fermi Liquids as Ersatz Fermi Liquids: General Constraints on Compressible Metals}},\ }\href {https://doi.org/10.1103/PhysRevX.11.021005} {\bibfield  {journal} {\bibinfo  {journal} {Physical Review X}\ }\textbf {\bibinfo {volume} {11}},\ \bibinfo {eid} {021005} (\bibinfo {year} {2021})},\ \Eprint {https://arxiv.org/abs/2007.07896} {arXiv:2007.07896 [cond-mat.str-el]} \BibitemShut {NoStop}%
\bibitem [{\citenamefont {Kimchi}\ \emph {et~al.}(2018)\citenamefont {Kimchi}, \citenamefont {Nahum},\ and\ \citenamefont {Senthil}}]{KimchiNahumSenthil}%
  \BibitemOpen
  \bibfield  {author} {\bibinfo {author} {\bibfnamefont {I.}~\bibnamefont {Kimchi}}, \bibinfo {author} {\bibfnamefont {A.}~\bibnamefont {Nahum}},\ and\ \bibinfo {author} {\bibfnamefont {T.}~\bibnamefont {Senthil}},\ }\bibfield  {title} {\bibinfo {title} {Valence bonds in random quantum magnets: Theory and application to ${\mathrm{ybmggao}}_{4}$},\ }\href {https://doi.org/10.1103/PhysRevX.8.031028} {\bibfield  {journal} {\bibinfo  {journal} {Phys. Rev. X}\ }\textbf {\bibinfo {volume} {8}},\ \bibinfo {pages} {031028} (\bibinfo {year} {2018})}\BibitemShut {NoStop}%
\bibitem [{\citenamefont {{Zhou}}\ \emph {et~al.}(2023)\citenamefont {{Zhou}}, \citenamefont {{Li}}, \citenamefont {{Zhai}}, \citenamefont {{Li}},\ and\ \citenamefont {{Gu}}}]{Zhouetal2023}%
  \BibitemOpen
  \bibfield  {author} {\bibinfo {author} {\bibfnamefont {Y.-N.}\ \bibnamefont {{Zhou}}}, \bibinfo {author} {\bibfnamefont {X.}~\bibnamefont {{Li}}}, \bibinfo {author} {\bibfnamefont {H.}~\bibnamefont {{Zhai}}}, \bibinfo {author} {\bibfnamefont {C.}~\bibnamefont {{Li}}},\ and\ \bibinfo {author} {\bibfnamefont {Y.}~\bibnamefont {{Gu}}},\ }\bibfield  {title} {\bibinfo {title} {{Reviving the Lieb-Schultz-Mattis Theorem in Open Quantum Systems}},\ }\href {https://doi.org/10.48550/arXiv.2310.01475} {\bibfield  {journal} {\bibinfo  {journal} {arXiv e-prints}\ ,\ \bibinfo {eid} {arXiv:2310.01475}} (\bibinfo {year} {2023})},\ \Eprint {https://arxiv.org/abs/2310.01475} {arXiv:2310.01475 [cond-mat.str-el]} \BibitemShut {NoStop}%
\bibitem [{\citenamefont {{Chen}}\ and\ \citenamefont {{Grover}}(2023)}]{ChenGrover2023}%
  \BibitemOpen
  \bibfield  {author} {\bibinfo {author} {\bibfnamefont {Y.-H.}\ \bibnamefont {{Chen}}}\ and\ \bibinfo {author} {\bibfnamefont {T.}~\bibnamefont {{Grover}}},\ }\bibfield  {title} {\bibinfo {title} {{Symmetry-enforced many-body separability transitions}},\ }\href {https://doi.org/10.48550/arXiv.2310.07286} {\bibfield  {journal} {\bibinfo  {journal} {arXiv e-prints}\ ,\ \bibinfo {eid} {arXiv:2310.07286}} (\bibinfo {year} {2023})},\ \Eprint {https://arxiv.org/abs/2310.07286} {arXiv:2310.07286 [quant-ph]} \BibitemShut {NoStop}%
\end{thebibliography}%

\newpage
\appendix

\section{A ``good'' measure of multipartite entanglement}\label{appendix:measure_multip_entanglement}

In Sec. \ref{sec:intro-partial_separability}, we described how hard it is to decide whether a given multipartite state is separable or not. Even harder is to calculate a good measure of multipartite entanglement, as there is a tradeoff between practical computability and having desirable properties, such as being zero if and only if the state is ($k$-)separable.

Based on the general framework for classifying multipartite entanglement and constructing their measures given in \cite{szalayMultipartiteEntanglementMeasures2015}, we will present here one good but hard-to-compute mixed-state measure of entanglement with respect to a $k$-partition. First, given a pure state $\ket{\psi} \in \Hilb_1 \otimes \cdots \otimes \Hilb_N$ and a $k$-partition $\mathcal{P} = \{A_i\}_{i=1}^k$, we define the \emph{$k$-partite entanglement} $E_{\mathcal{P}}$ to be (half of) the sum of entanglement entropies for each set in the partition:
\begin{equation}
    E_{\mathcal{P}}(\ketbra{\psi}{\psi}) \defeq \frac{1}{2}[S(\rho_{A_1}) + S(\rho_{A_2}) + \cdots + S(\rho_{A_k})].
\end{equation}
Then, we extend $E_{\mathcal{P}}$ to all mixed states by taking its convex roof:
\begin{equation}
    E_{\mathcal{P}}(\rho) = \min_{\{p_i, \ket{\psi_i}\}} \sum_i p_i E_{\mathcal{P}}(\ketbra{\psi_i}{\psi_i}).
\end{equation}
Defined in this way, $E_{\mathcal{P}}$ enjoy the following properties:
\begin{itemize}
    \item Monotonic under LOCC,
    \item Convex under mixtures,
    \item Zero if, and only if, the state is separable with respect to $\mathcal{P}$, and
    \item Reduces to entanglement of formation for $k=2$.
\end{itemize}

 Hence, we can reformulate our main result by stating that a strongly symmetric anomalous state $\rho$ has $E_{\mathcal{P}}(\rho) > 0$ for certain $(d+2)$-partitions $\mathcal{P}$.

 Moreover, since the set of all strongly symmetric states is compact, then there exists an anomalous state that achieves a lower bound of multipartite entanglement $E^{\text{min}}_{U, d, L} > 0$ that only depends on the group action $U(g)$, the dimension $d$ and the number of sites $L$. As an example for the $d=1$ CZX symmetry, the bipartite separable states (See Sec. \ref{sec:czx-bipartite-separability}), if arranged on the $A|BC$ bipartition, have $\log 2$ tripartite entanglement. Surprisingly, there are symmetric states with lower tripartite entanglement: by randomly sampling symmetric pure states and computing their tripartite entanglement $E_{A|B|C}(\rho) \equiv E_3(\rho)$, we conjecture that for the CZX symmetry, $E_3^{\text{min}} \approx 0.9 \log 2$ independently of the system size. The candidate states with this value of tripartite entanglement constitute a complete basis of the form
 \begin{equation}\label{eq:E3_min_states}
     \ket{\alpha \beta \gamma^{(s)}} + \lambda_A \ket{\comp{\alpha} \beta \gamma^{(r_A)}} + \lambda_B \ket{\alpha \comp{\beta} \gamma^{(r_B)}} + \lambda_C \ket{\alpha \beta \comp{\gamma}^{(r_C)}},
 \end{equation}
where $\ket{\xi^{(r)}} \equiv \frac{1}{\sqrt{2}} (\ket{\xi} + r \ket{\comp{\xi}})$, with $\xi$ a bitstring and $r \in U(1)$,  $\alpha, \beta, \gamma \in \{0,1\}^{|A|, |B|, |C|}$ bitstrings in $A, B$ and $C$ respectively, and $\lambda_i, r_j \in U(1)$ phase factors. Requiring symmetry with respect to $U_{CZX}$ with eigenvalue $\mu \in \{ \pm 1 \}$ constrains
\begin{equation}
    \mu = s (-1)^{\sum_{i} (\alpha \beta \gamma)_i (\alpha \beta \gamma)_{i+1}},
\end{equation}
and
\begin{align}
    s & = (-1)^{\alpha_1 + \alpha_{|A|} + \beta_1 + \gamma_{|C|} + |A| + 1} r_A \nonumber \\
    & = (-1)^{\beta_1 + \beta_{|B|} + \gamma_1 + \alpha_{|A|} + |B| + 1} r_B \nonumber \\
    & = (-1)^{\gamma_1 + \gamma_{|C|} + \alpha_1 + \beta_{|B|} + |C| + 1} r_C,
\end{align}
and minimizing their tripartite entanglement $E_3$ implies
\begin{gather}
    s = r_A \lambda_A^2 = r_B \lambda_B^2 = r_C \lambda_C^2, \\
    \lambda_A^2 \lambda_B^2 \lambda_C^2 = -1, \\
    \text{with } \lambda_i^4 = 1, r_j^2 = 1.
\end{gather}
By varying all the parameters respecting the constraints above, one can verify that the states of Eq. \eqref{eq:E3_min_states} form a orthonormal basis, each having tripartite entanglement $E_3^{\text{min}} = \frac{3}{2} (- p \log p - q \log q) \approx 0.9 \log 2$, for $p = 1-q = \frac{2 + \sqrt{2}}{4}$. Furthermore, because they form a basis of symmetric states, then the tripartite entanglement $E_3(\rho_{\infty, \pm})$ of the CZX infinite temperature states $\rho_{\infty, \pm}$ of Eq. \eqref{eq:infinite-temperature_CZX} is upper bounded by $E_3^{\text{min}}$ above, being equal if our conjecture that they minimize $E_3$ is indeed correct.

\section{Tripartite entanglement of infinite temperature states with CZX symmetry via permutation criterion}\label{appendix:permutation_creterion}

Consider the 4-qubit infinite temperature state $\rho_{\infty} = \frac{1}{2^4} (\one + U_{\text{CZX}})$. For each $Z$-basis state $\ket{i j}$ of $\Hilb_A$, $\rho_\infty$ is block diagonal in the subspace $V_{ij}$ spanned by acting with $X_A$, $X_B$, $X_C$ or products thereof on $\ket{ij}_A\ket{0}_B\ket{0}_C$. For example, the block matrix for the $V_{00} = V_{11}$ subspace is
\begin{equation}
\frac{1}{2^4}
\begin{pmatrix}
    1 & 0 & 0 & 0 & 0 & 0 & 0 & 1 \\
    0 & 1 & 0 & 0 & 0 & 0 & 1 & 0 \\
    0 & 0 & 1 & 0 & 0 & 1 & 0 & 0 \\
    0 & 0 & 0 & 1 & -1 & 0 & 0 & 0 \\
    0 & 0 & 0 & -1 & 1 & 0 & 0 & 0 \\
    0 & 0 & 1 & 0 & 0 & 1 & 0 & 0 \\
    0 & 1 & 0 & 0 & 0 & 0 & 1 & 0 \\
    1 & 0 & 0 & 0 & 0 & 0 & 0 & 1 \\
\end{pmatrix}.
\end{equation}
Importantly, the permutation $\rho_{i_A, i_B, i_C}^{j_A, j_B, j_C} \rightarrow \rho_{i_A, j_C, i_C}^{j_A, j_B, i_B}$ keeps this block diagonal structure, so we can apply it to each subspace $V_{ij}$ separately. For $V_{00}$, it becomes
\begin{equation}
\frac{1}{2^4}
\begin{pmatrix}
    1 & 0 & \rn{13}{0} & \rn{14}{1} & 0 & 0 & \rn{17}{0} & \rn{18}{0} \\
    \rn{21}{0} & \rn{22}{0} & 0 & 0 & \rn{25}{0} & \rn{26}{1} & 1 & 0 \\
    0 & 0 & \rn{33}{0} & \rn{34}{0} & 0 & 1 & \rn{37}{-1} & \rn{38}{0} \\
    \rn{41}1 & \rn{42}{0} & 0 & 1 & \rn{45}{0} & \rn{46}{0} & 0 & 0 \\
    0 & 0 & \rn{53}{0} & \rn{54}{0} & 1 & 0 & \rn{57}{0} & \rn{58}{1} \\
    \rn{61}{0} & \rn{62}{-1} & 1 & 0 & \rn{65}{0} &\rn{66}{0} & 0 & 0 \\
    0 & 1 & \rn{73}{1} & \rn{74}{0} & 0 & 0 & \rn{77}{0} & \rn{78}{0} \\
    \rn{81}{0} & \rn{82}{0} & 0 & 0 & \rn{85}{1} & \rn{86}{0} & 0 & 1 \\
\end{pmatrix},
\end{equation}

\begin{tikzpicture}
    [overlay,remember picture,
    arrow/.style={<->, draw=black!20}]
    \draw [arrow] (21) -- (13);
    \draw [arrow] (22) -- (14);
    \draw [arrow] (41) -- (33);
    \draw [arrow] (42) -- (34);
    \draw [arrow] (25) -- (17);
    \draw [arrow] (26) -- (18);
    \draw [arrow] (45) -- (37);
    \draw [arrow] (46) -- (38);
    \draw [arrow] (61) -- (53);
    \draw [arrow] (62) -- (54);
    \draw [arrow] (81) -- (73);
    \draw [arrow] (82) -- (74);
    \draw [arrow] (65) -- (57);
    \draw [arrow] (66) -- (58);
    \draw [arrow] (85) -- (77);
    \draw [arrow] (86) -- (78);
\end{tikzpicture}
whose trace norm is $\frac{1}{2}(\frac{1}{2}+\frac{1}{\sqrt{2}})$. The same value is found for $V_{01} = V_{10}$, and since the trace norm of the direct sum of matrices is the sum of the trace norms, we find the trace norm of the state after permutation is $\frac{1}{2} + \frac{1}{\sqrt{2}} > 1$. It is also straightforward to generalize this result to the case where $A$ is larger, but $B$ and $C$ are still one-qubit long and adjacent to each other. We expect a similar argument to hold for general $A$, $B$ and $C$ that are connected.

\section{Local adaptive preparation of infinite temperature states with CZX symmetry}\label{appendix:adaptive_CZX}

Here, we will show a finite-depth local adaptive preparation of the infinite temperature states $\rho_{\infty, \pm} \propto \one \pm U_{\text{CZX}}$, as discussed in Sec. \ref{sec:infinite_T_CZX}. By ``local adaptive preparation'' we mean starting from a product state and applying a quantum circuit involving (geometrically) local unitary gates and local measurements, whose outcomes may influence future parts of the circuit. A familiar example of an adaptive circuit is the syndrome measurement of a stabilizer code followed by error correction conditioned on the syndrome.

We divide the adaptive preparation in three parts:
\begin{enumerate}
    \item \textbf{Initial state preparation}: Start from the $\ket{++\cdots +}$ state in a system of $L$ qubits with periodic boundary condition.
    \item \textbf{$ZZ$ Measurement}: Measure $Z_i Z_{i+1}$ for all $i \in \{1, 2, \ldots, L\}$, whose outcome $c_i = \pm 1$ we record for future classical processing. Having knowledge of all results $(c_i)_{i=1}^L$, we can be sure that the post-measurement state is a GHZ-like state in the Z-basis of the form $\frac{1}{\sqrt{2}}(\ket{b} + \ket{\comp{b}})$, where $b \in \{0, 1\}^{L}$ is a bistring and $\comp{b}_i \defeq 1 - b_i$ is its complement. The relation between $b$ and $(c_i)_{i=1}^L$ is given by $(-1)^{b_i} (-1)^{b_{i+1}} = (-1)^{\comp{b}_i} (-1)^{\comp{b}_{i+1}} = c_i$. In other words, from $(c_i)_{i=1}^L$ we can find the two bitstrings $\{b, \comp{b}\}$ satisfying the equation above.
    \item \textbf{Unitary correction with feedback}: The post-measurement GHZ-like state is also an eigenstate of the CXZ symmetry, with eigenvalue $\lambda = (-1)^{\sum_i b_i b_{i+1}} = (-1)^{\sum_i \comp{b}_i \comp{b}_{i+1}}$. Since we can compute $\lambda$ based \emph{only} on the $ZZ$ measurement results $(c_i)_{i=1}^L$, we can conditionally apply $Z_1$ if $\lambda = -1$ to make it an $+1$ eigenstate of the CZX symmetry, or vice-versa.
\end{enumerate}

The resulting density matrix coming from the mixture of the final states over all measurement results is exactly equal to $\rho_{\infty, \pm} \propto \one \pm U_\text{CZX}$. We note that even though the nonlocal nature of the feedback is crucial to establish the necessary global correlations, the quantum part of the circuit is entirely local and finite-depth.

\section{Long-range entanglement proof}\label{appendix:LRE_proof}

\begin{figure}[t]
    \centering
    \def\svgwidth{\linewidth}
    \begingroup%
  \makeatletter%
  \providecommand\color[2][]{%
    \errmessage{(Inkscape) Color is used for the text in Inkscape, but the package 'color.sty' is not loaded}%
    \renewcommand\color[2][]{}%
  }%
  \providecommand\transparent[1]{%
    \errmessage{(Inkscape) Transparency is used (non-zero) for the text in Inkscape, but the package 'transparent.sty' is not loaded}%
    \renewcommand\transparent[1]{}%
  }%
  \providecommand\rotatebox[2]{#2}%
  \newcommand*\fsize{\dimexpr\f@size pt\relax}%
  \newcommand*\lineheight[1]{\fontsize{\fsize}{#1\fsize}\selectfont}%
  \ifx\svgwidth\undefined%
    \setlength{\unitlength}{218.94512747bp}%
    \ifx\svgscale\undefined%
      \relax%
    \else%
      \setlength{\unitlength}{\unitlength * \real{\svgscale}}%
    \fi%
  \else%
    \setlength{\unitlength}{\svgwidth}%
  \fi%
  \global\let\svgwidth\undefined%
  \global\let\svgscale\undefined%
  \makeatother%
  \begin{picture}(1,0.48175262)%
    \lineheight{1}%
    \setlength\tabcolsep{0pt}%
    \put(0,0){\includegraphics[width=\unitlength,page=1]{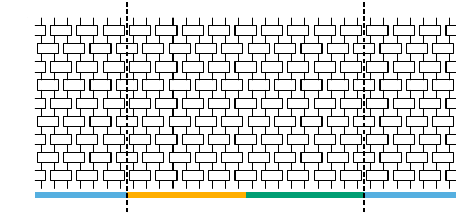}}%
    \put(0.17743778,0.00313105){\color[rgb]{0.33333333,0.68627451,0.87843137}\makebox(0,0)[t]{\lineheight{1.25}\smash{\begin{tabular}[t]{c}$A$\end{tabular}}}}%
    \put(0,0){\includegraphics[width=\unitlength,page=2]{local_unitary_surgery_appendix.pdf}}%
    \put(0.40840002,0.00313105){\color[rgb]{1,0.68235294,0}\makebox(0,0)[t]{\lineheight{1.25}\smash{\begin{tabular}[t]{c}$B$\end{tabular}}}}%
    \put(0.66828391,0.00313105){\color[rgb]{0,0.61960784,0.45098039}\makebox(0,0)[t]{\lineheight{1.25}\smash{\begin{tabular}[t]{c}$C$\end{tabular}}}}%
    \put(0.8990652,0.00313105){\color[rgb]{0.33333333,0.68627451,0.87843137}\makebox(0,0)[t]{\lineheight{1.25}\smash{\begin{tabular}[t]{c}$A$\end{tabular}}}}%
    \put(0.53829683,0.45415791){\color[rgb]{0,0,0}\makebox(0,0)[t]{\lineheight{1.25}\smash{\begin{tabular}[t]{c}$V_{BC}$\end{tabular}}}}%
    \put(0.06130651,0.36295284){\color[rgb]{0.99607843,0,0}\makebox(0,0)[rt]{\lineheight{1.25}\smash{\begin{tabular}[t]{r}$V$\end{tabular}}}}%
    \put(0.06130651,0.12506193){\color[rgb]{0.99607843,0,0}\makebox(0,0)[rt]{\lineheight{1.25}\smash{\begin{tabular}[t]{r}$V^\dagger$\end{tabular}}}}%
    \put(0.06130651,0.24400738){\color[rgb]{0,0.68627451,0.71764706}\makebox(0,0)[rt]{\lineheight{1.25}\smash{\begin{tabular}[t]{r}$U$\end{tabular}}}}%
  \end{picture}%
\endgroup%

    \caption{The original symmetry $U(g)$ conjugated by the (extended) FDLU $V$. The circuit $V_{BC}$ acts as $V$ in the bulk of $BC$ and is entirely supported on it. By removing $V_{BC}$ from $U'(g) \defeq V U(g) V^\dagger$, we ``expose'' $U$ in the bulk and thus arrive at another unitary restriction to $BC$ with the same nontrivial cohomology class $[\tilde\omega] = [\omega']$.}
    \label{fig:local_unitary_surgery_appendix}
\end{figure}

Here we will prove that the anomaly-nonseparability connection remains if a local unitary symmetry representation $U(g)$ is conjugated by another local unitary $V$. In particular, we will prove that the group cocycle $\omega'$ coming from the $U'(g) = V U(g) V^\dagger$ representation is in the same group cohomology class as $\omega$, from $U(g)$.

First, we follow the same arguments of Sec.\ref{sec:tripartite_entanglement} to reach Eq.\eqref{eq:associativity_reduced_boundary_ops} with $U_{BC}$, $\Omega$ and $\omega$ replaced by $U'_{BC}$, $\Omega'$ and $\omega'$, respectively:
\begin{equation}\label{eq:associativity_reduced_boundary_ops_LRE}
\begin{split}
    &\omega'(g_{1},g_{2},g_{3}) \Omega'_{c}(g_{1},g_{2})\Omega'_{c}(g_{1}g_{2},g_{3}) \\ & = U'_{BC}(g_{1})\Omega'_{c}(g_{2},g_{3}) U'_{BC}(g_{1})^{-1} \Omega'_{c}(g_{1},g_{2}g_{3}).
\end{split}
\end{equation}

Then, we define $V_{BC}$ to be a finite-depth local unitary supported in $BC$ and acting the same as $V$ in the bulk of $BC$. One can construct such unitary by removing the gates in the ``causal past'' of the boundary of $BC$ as in Fig.~\ref{fig:local_unitary_surgery_appendix}. By conjugating Eq.\eqref{eq:associativity_reduced_boundary_ops_LRE} with $V_{BC}$, we find that $\tilde{U}_{BC}(g) = V^\dagger_{BC} U_{BC}'(g) V_{BC}$ and $\tilde{\Omega}(g_1, g_2) \defeq V^\dagger_{BC} \Omega'(g_1, g_2) V_{BC}$ satisfy the same relation \emph{with the same cocycle} $\tilde{\omega} = \omega'$. Furthermore, $\tilde{U}_{BC}$ is still supported on $BC$ and acts as $U(g)$ in the bulk of $BC$. In other words, $\tilde{U}_{BC}$ is but another unitary restriction of the original symmetry $U$ to $BC$. Since it can be shown that two different symmetry restrictions $U_{BC}$ and $\tilde{U}_{BC}$ give rise to 3-cocycles $\omega$ and $\tilde\omega$ that differ by an exact 3-cocycle~\cite{else_classifying_2014}, then $\omega' = \tilde{\omega}$ is in the same group cohomology class as $\omega$. QED.

\section{The proof that \texorpdfstring{$u_{AB}$}{uAB} does not entangle A and B}
\label{appendix:uABproof}
Following the discussion in Sec.~\ref{sec:tripartite_entanglement}, we assume $u_{AB}$ is supported in a neighborhood near the AB boundary, etc.

\begin{lemma}
    If $u_{AB}u_{BC}u_{AC}\ket{A}\ket{B}\ket{C}=\ket{\tilde{A}}\ket{\tilde{B}}\ket{\tilde{C}}$ as in the diagram below, then $u_{AB}$ does not entangle $\ket{A}$ and $\ket{B}$, and similarly for $u_{BC}$ and $u_{AC}$:
    \begin{equation}
        \tikzfig{lemma_tripartite_diagram}
    \end{equation}
\end{lemma}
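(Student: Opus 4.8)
The plan is to reduce the claim to a single separability statement and then invoke Lemma~\ref{lemma:separable_unitary}. Write $A = a \sqcup a'$ and $B = b \sqcup b'$, where $a$ (resp.\ $b$) is the set of sites of $A$ (resp.\ $B$) lying in the support of $u_{AB}$, and $a', b'$ the remaining sites. Since the three boundary unitaries act on pairwise disjoint neighborhoods (the regions being large compared to the gate range), they commute, and $u_{AB}$ is the only one of the three touching $a \cup b$. Set $\ket{F} \defeq u_{AB}\ket{A}\ket{B}$. Applying Lemma~\ref{lemma:separable_unitary} with $\Hilb_{1,2,3,4} = a', a, b, b'$ and $U_{23} = u_{AB}$, the gate $u_{AB}$ acts on $\ket{A}\ket{B}$ as a product $v_a \otimes v_b$ exactly when $\ket{F}$ is unentangled across the $A|B$ cut. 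Thus the whole statement is equivalent to $S(\rho_B(\ket{F})) = 0$. The tempting shortcut of applying Lemma~\ref{lemma:separable_unitary} to $u_{AB}$ directly fails, because the state in which $u_{AB}$ is embedded is itself entangled across $A|C$ (through $u_{AC}$) and $B|C$ (through $u_{BC}$); this is precisely why the third region cannot be discarded.

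The key device is that the interface subsystems $a$ and $b$ are touched by $u_{AB}$ \emph{alone}. Consider the post-$u_{AB}$ state $\ket{\Phi} \defeq u_{AB}\ket{A}\ket{B}\ket{C} = \ket{F} \otimes \ket{C}$; the remaining evolution $u_{BC}u_{AC}$ carries $\ket{\Phi}$ to the fully product $\ket{\tilde A}\ket{\tilde B}\ket{\tilde C}$. I would then use that the mutual information $I(X:Y)$ is invariant under any unitary that does not touch $X$ and factorizes as (an operator supported in $Y$) $\otimes$ (an operator supported in $(X \cup Y)^c$). Taking $X = a$ and $Y = B \cup C_1$, where $C_1$ is the part of $C$ touched by $u_{BC}$, the gate $u_{BC}$ lies inside $Y$ while $u_{AC}$ lies inside $(X \cup Y)^c$; hence $I(a : B \cup C_1)$ is preserved by the evolution and vanishes at the product endpoint, giving $I(a : B) = 0$ on $\ket{F}$. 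Symmetrically, with $X = b$ and $Y = A \cup C_2$ (with $C_2$ the part of $C$ touched by $u_{AC}$) one obtains $I(A : b) = 0$. Finally, because $u_{AB}$ acts trivially on $a' b'$, tracing out $a \cup b$ gives $\rho_{a'b'}(\ket{F}) = \rho_{a'}(\ket{A}) \otimes \rho_{b'}(\ket{B})$, i.e.\ $I(a' : b') = 0$.

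To finish, I would feed these three vanishing mutual informations into the entropy bookkeeping for the pure state $\ket{F}$ on $AB$. Writing $S_X \equiv S(\rho_X(\ket{F}))$ and using purity ($S_{aB} = S_{a'}$, $S_{Ab} = S_{b'}$, $S_{ab} = S_{a'b'}$, $S_A = S_B$): the condition $I(a:B)=0$ gives $S_{a'} = S_a + S_B$; the condition $I(A:b)=0$ gives $S_{b'} = S_b + S_B$; the condition $I(a:b)=0$ (a marginal of $I(A:b)=0$) gives $S_{a'b'} = S_a + S_b$; and $I(a':b')=0$ gives $S_{a'b'} = S_{a'} + S_{b'}$. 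Equating the two expressions for $S_{a'b'}$ and substituting the first two relations yields $S_a + S_b = (S_a + S_B) + (S_b + S_B)$, hence $2S_B = 0$. Therefore $\ket{F}$ is a product state across $A|B$, and Lemma~\ref{lemma:separable_unitary} supplies the desired $u_{AB} = v_a \otimes v_b$ on $\ket{A}\ket{B}$. The statements for $u_{BC}$ and $u_{AC}$ follow by cyclic relabeling of $A, B, C$.

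The main obstacle is conceptual rather than computational: one cannot peel off $u_{AB}$ in isolation, since on the ring every region is entangled with the other two. The content of the lemma is that any entanglement $u_{AB}$ might inject across $A|B$ is forced to vanish once a \emph{fully} product output is demanded. The delicate point is to confine that putative entanglement to the $a$--$b$ interface and then to show that consistency with a product final state annihilates it; the mutual-information invariance argument isolates the interface subsystems, and the entropy identity converts that isolation into the exact vanishing $S_B = 0$.
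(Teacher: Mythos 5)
Your proof is correct, and it reaches the conclusion by a genuinely different route than the paper's. Both arguments ultimately rest on the same two pillars --- the three boundary gates have pairwise disjoint supports, and entropies of regions that each gate either wholly contains or wholly avoids are conserved and can be evaluated additively at the separable endpoints --- but the executions differ. The paper argues by contradiction at the $AC$ interface: assuming $u_{AB}$ entangles $A$ and $B$, it tracks the von Neumann entropies of $A\cup C_2$ and $A_1\cup C$ (with $A_1$, $C_2$ the pieces of $A$, $C$ adjacent to the $AC$ boundary), shows $S_{C_2}$ strictly increases while $S_{A_1}$ cannot decrease, and derives a strict increase of $S_{A_1\cup C_2}$ that contradicts the fact that every gate acts either entirely inside or entirely outside $A_1\cup C_2$. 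You instead work directly at the $AB$ interface: the mutual-information invariance applied to $I(a:B\cup C_1)$ and $I(b:A\cup C_2)$ (I checked the support conditions --- $u_{BC}$ sits inside $Y$, $u_{AC}$ inside $(X\cup Y)^c$, and $X$ is untouched, provided all three regions are large compared with the gate range, which is the standing assumption), together with $\rho^F_{a'b'}=\rho_{a'}\otimes\rho_{b'}$ and the purity identities, yields the exact equation $2S_B=0$. Your version buys a direct, quantitative derivation rather than a strict-inequality contradiction, and it cleanly isolates the one place the third region enters (the MI-invariance step, which is exactly why the naive two-party application of Lemma~\ref{lemma:separable_unitary} fails); the paper's version is shorter and more geometric. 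One cosmetic remark: once you have $S_B(\ket{F})=0$ the lemma as stated ("$u_{AB}$ does not entangle $\ket{A}$ and $\ket{B}$") is already proved; the final invocation of Lemma~\ref{lemma:separable_unitary} to extract $v_a\otimes v_b$ is the step the main text performs separately, so including it does no harm but is not needed for the lemma itself.
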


\begin{proof}
    We will prove the lemma by contradiction. Suppose $u_{AB}$ does entangle $A$ and $B$, i.e. $u_{AB} \ket{A}\ket{B} \neq \ket{A'}\ket{B'}$ for any $\ket{A'}$ and $\ket{B'}$.
    Consider dividing $A$ into $A_1$ and $A_2$, and $C$ into $C_1$ and $C_2$, so that $A_1$ and $C_2$ are adjacent to the AC boundary, and $u_{AC}$ is fully supported on $A_1\cup C_2$.

    Let us consider the von Neumann entropy $S$ of various regions. In the initial state $\ket{A}\ket{B}\ket{C}$, we have $S_{A}=0$, and $S_{A\cup C_2}=S_A+S_{C_2}=S_{C_2}$, the entanglement entropy, or EE, of $C_2$ in $\ket{C}$. Then, after applying $u_{AB}$, the EE of $A$ must increase, subsequently leading to an increase in the EE of $A\cup C_2$, as $A$ and $C$ remain separated at this point. Moreover, further applications of $u_{BC}$ and $u_{AC}$ do not change the EE of ${A\cup C_2}$. On the other hand, we also know that, in the final state, regions $A$, $B$ and $C$ become separated again, so
   \begin{equation}
       S_{C_2}^{\rm final}=S_{A\cup C_2}^{\rm final}>S_{A\cup C_2}=S_{C_2}.
   \end{equation}
    Importantly, the inequality above is strict since $u_{AB}$ is assumed to entangle $\ket{A}\ket{B}$.

    Similarly, by considering $A_1\cup C$ we find that $S_{A_1}^{\rm final}=S_{A_1\cup C}^{\rm final}$ does not decrease compared to $S_{A_1}$. In fact, if we assume $u_{BC}$ entangles $\ket{B}$ and $\ket{C}$, $S_{A_1}$ has to increase too.

    Because $A$ and $C$ are separated both in initial and the final state, we have
    \begin{equation}
        S_{A_1\cup C_2}^{\rm final}=S_{A_1}^{\rm final}+S_{C_2}^{\rm final}> S_{A_1}+S_{C_2}=S_{A_1\cup C_2}.
    \end{equation}
    However, this is impossible, since all the unitaries acting on the initial separated state either act entirely within $A_1\cup C_2$, or strictly outside it. Hence, they cannot change the entanglement entropy of $A_1\cup C_2$. We have thus reached a contradiction and our initial assumption of $u_{AB}$ entangling $\ket{A}$ and $\ket{B}$ must be false.
\end{proof}

\end{document}